\title{Monetizing digital content with network effects: A mechanism-design approach\thanks{For useful comments and suggestions, we thank  audiences at BSoE, UC3M, Lisbon GT Meeting'23, CMID'24, EARIE'24, VfS'24, 
and, in particular, 
Andreas Asseyer, Helmut Bester, Tobit Gamp, \'Angel Hernando-Veciana, Matthias Lang, Volker Nocke, Johannes Schneider, Sebastian Schweighofer-Kodritsch, and Roland Strausz.
Financial support by the European Union through the ERC-grant PRIVDIMA (project number 101096682) and by Deutsche Forschungsgemeinschaft through CRC TRR 190 is gratefully acknowledged.}}
\author{Vincent Meisner\thanks{Humboldt-Universit\"at zu Berlin, Institute for Economic Theory 1, Spandauer Str. 1, D-10178 Berlin (Germany), Email: vincent.meisner@tu-berlin.de.} \and  Pascal Pillath\thanks{Humboldt-Universit\"at zu Berlin, Institute for Economic Theory 1, Spandauer Str. 1, D-10178 Berlin (Germany), Email: pascal.pillath@hu-berlin.de.}}
\date{\today}
\DeclareMathOperator*{\argmax}{arg\,max}
\newtheoremstyle{slplain}
  {.5\baselineskip\@plus.2\baselineskip\@minus.2\baselineskip}
  {.5\baselineskip\@plus.2\baselineskip\@minus.2\baselineskip}
  {\slshape}
  {}
  {\bfseries}
  {.}
  { }
  {}
\DeclareMathOperator{\sign}{sign}
\theoremstyle{slplain}
\newtheorem{prop}{Proposition}
\newtheorem{lem}{Lemma}
\newcommand{\ea}[1]{\begin{align*}#1\end{align*}}
\newcommand{\eq}[1]{\begin{equation}#1\end{equation}}
\newcommand{\ean}[1]{\begin{align}#1\end{align}}
\tikzset{
  state/.style={circle,draw,minimum size=6ex},
  arrow/.style={-latex, shorten >=1ex, shorten <=1ex}}
\DeclareFontFamily{U}{mathx}{\hyphenchar\font45}
\DeclareFontShape{U}{mathx}{m}{n}{
      <5> <6> <7> <8> <9> <10>
      <10.95> <12> <14.4> <17.28> <20.74> <24.88>
      mathx10
      }{}
\DeclareSymbolFont{mathx}{U}{mathx}{m}{n}
\DeclareMathAccent{\widecheck}{0}{mathx}{"71}
\DeclareMathAccent{\wideparen}{0}{mathx}{"75}
\pgfplotsset{soldot/.style={color=black,only marks,mark=*}} \pgfplotsset{holdot/.style={color=black,fill=white,only marks,mark=*}}
\begin{document}

\maketitle
\begin{abstract} \noindent
We design profit-maximizing mechanisms to sell an excludable and non-rival good with positive and/or negative network effects. Buyers have heterogeneous private values that depend on how many others also consume the good. In optimum, an endogenous number of the highest types consume the good, and we can implement this allocation in dominant strategies. We apply our insights to digital content creation, and we are able to rationalize features seen in monetization schemes in this industry such as voluntary contributions, community subsidies, and exclusivity bids.
\end{abstract}
\noindent 
JEL-Classification: D82.
\\
Keywords: Mechanism design, non-rival goods, club goods, network effects, digital content, creator economy.

\newpage


\section{Introduction} \label{sec:intro}

The ``creator economy" is a system in which creators provide digital content to users. 
With creators on, for instance, Instagram, OnlyFans, Snapchat, Tiktok, Twitch, or YouTube, the size of this global market is estimated to be over 250 billion US dollars in 2023.\footnote{See, e.g., https://www.goldmansachs.com/insights/articles/the-creator-economy-could-approach-half-a-trillion-dollars-by-2027.}
While this industry has many examples of top earners, the vast majority of creators cater to smaller audiences and struggle to make a living from their content production.\footnote{Some of the ``Top Creators 2023" listed by \cite{forbes2023} such as MrBeast, FuckJerry, Jake Paul, or KSI made more than \$30 million dollars that year, whereas The \cite{economist2021} suggests that more than 99\% of content creators barely earn below minimum wage.} 
Nevertheless, this career choice is becoming increasingly popular \citep[see, e.g.,][]{aridor2024}, and 
various payment features to support smaller creators have arisen. 
In this paper, we contribute to the design of optimal monetization schemes when audiences are small. Our model emphasizes the implications from three defining features of this market: oligopsony, network effects, and the non-rivalry of digital content. We show that the optimal payment structure is fundamentally different from the simple pricing that maximizes profit when audiences are large, and externalities vanish.

We model the content creator as a mechanism designer selling a non-rival and excludable good (her digital content) that she produces at a fixed cost.\footnote{Rather than literally taking this cost as a cost for production, we can also more broadly interpret it as an opportunity cost arising from not working in a different industry.} Buyers (users) draw a private value type, and their full consumption value is determined by a given function that also depends on how many other buyers access the good. 
Additionally, there is a direct network effect on the seller's profit.
Because both network effects only depend on the number of consumers and not their identity, always the highest value types are selected for consumption, but the optimal number of selected consumers is endogenous. 
This allocation is implementable in dominant strategies, and the solution to an optimization under a weaker Bayesian incentive constraint is identical. 
We can express the direction and size of network effects through a single parameter, which depending on its level can be interpreted as a degree of rivalry or a degree of excludability of the good. With this parsimonious approach, we nest important benchmark cases: selling an indivisible private good \citep{myerson1981}, which is essentially a good that loses its value when it is shared, and the private supply of a public good \citep{guth1986private}, which can be seen as a good that loses its value when it is not shared with everyone.

We first show that the profit-maximizing allocation can be implemented with a voluntary all-pay contribution mechanism in which users can opt to pay more than others. 
The optimality of such schemes is an implication of digital content being a club good with a fixed production cost. Here, the cost entails a positive externality among the buyers independent of the direction of the network effects because another buyer might be necessary to finance the production. To increase the probability that the content is provided at all, a high-value consumer is willing to pay more than others for an identical good. A wide range of platforms such as OnlyFans, Substack, or Twitch couple subscription fees with ``tipping" or ``donation" features. This aspect loses importance when markets become large such that a simple posted price mechanism becomes optimal. 

While the argument above does not hinge on network effects, we also illustrate how positive network effects offer a self-interested rationale for subsidies among users.
For example, Twitch employs a feature allowing users to gift subscriptions to others. While this may seem like an altruistic element, consumers in our model pay to support the growth of a creator's community because its size enters their consumption value. For instance, users may value engagement with others through comments, ``likes," and chats.
We also allow for value network effect to be negative (congestion): a large audience may spark spam or come with a loss of a community feeling through reduced chances of directly engaging with the creator. In such settings, an optimal monetization scheme permits users to pay extra to exclude others from consumption. 
While the direct profit network effects do not by themselves trigger novel payment schemes, they are a crucial ingredient to understand this market, where they reflect additional business opportunities that only emerge for popular creators with sufficiently large audiences.
All ``Top Creators 2023" listed by \cite{forbes2023} make a significant fraction of their income through merchandise, advertisement deals or other partnerships that only arise through their fame. This feature of our model explains why, for instance, a blogger or podcaster may provide content for free to attract a large audience of which some then buy their book or their designed clothes.

While our suggested indirect implementations are tailored to and inspired by the creator economy, our general model fits a plethora of other settings, in which a seller offers a club good with network effects. For example, consider an organizer of a farmers' market offering licenses to operate a market stall. A farmer's valuation for a license may exhibit non-monotone network effects. First, value network effects might be positive because there must be some variety to attract shoppers. However, these value network effects quickly become negative because a larger number of competitors reduces each individual farmer's profits.
The organization of the market has a fixed cost, and, while the marginal cost of adding another stall may be negligible, it can be incorporated as a profit network effect, which can also reflect any other cost or benefit of having a larger farmers' market.


\textbf{Literature:} \citet{cornelli1996} considers our baseline model without any network effects. Due to this relation, our indirect implementation through voluntary payment mechanisms also extends the payment scheme proposed in her paper. She considers a monopolistic mechanism designer who can produce a good at a fixed cost and zero marginal cost. By rewriting the profit maximization problem, we essentially model the network effects as (possibly negative) marginal costs. However, in contrast to \citet{schmitz1997}, who extends the cost function of \citet{cornelli1996} to agent specific but constant costs, the ``costs" in our setting depend on the number of consuming buyers. The network effect on the buyer's value engenders in a type-dependent cost-benefit analysis. The similarities with these papers also connect our paper to the literature on crowdfunding \citep{belleflamme2015,strausz2017,ellman2019,deb2019} and serial cost sharing \citep{moulin1992,moulin1994}. To the best of our knowledge, this literature has not studied how to sell network goods, yet.

By modelling digital content as a club good \citep{buchanan1965}, our work also relates to the mechanism-design literature on excludable public goods \citep{deb1999,hellwig2003,hellwig2005,norman2004,hellwig2007,bierbrauer2011}, where the goal is efficient provision rather than profit maximization. \cite{birulin2006} considers public goods with congestion, but, in contrast to us, models the congestion as a capacity constraint rather than incorporating it directly in the agents' payoff function. 

In industrial organization, it is standard  \citep[since][]{rohlfs1974} to model network goods by making agents' consumption value dependent on the number of other consumers. Mostly, this literature considers positive network effects, where the typical examples are fax machines or telephones. \citet{imas2024} provide evidence that consumers' valuations for the consumption of a good can also be increasing in others' unmet desires. That is, all else equal the willingness-to-pay increases when other consumers are excluded from the market.

Mechanism design with allocation externalities was also studied by \cite{jehiel1996}, but they are concerned with the externality on agents who did not acquire the good rather than joint consumption.  \cite{akbarpour2024} study vaccine allocation, where the externality also effects people that do not consume the good. 
The externality of the good in \cite{csorba2008,kang2020,ostrizek2023,pai2022} depends on the total production of the good in the economy. 
The consumption value in \cite{segal1999} depends on other agents' trades. While the externality in our model only depends on the number of consumers, the externality in \cite{dworczak2024} depends on the composition of the consumer set.


\section{Model} \label{sec:model}

\textbf{Players and outcomes:} A monopolistic seller (mechanism designer) can produce a non-rival and excludable good at cost $c$.
She faces $N$ buyers $i \in \mathcal N$, and she designs the mechanism that determines the outcome. Formally, an outcome $o=(q_{i},  m_{i})_{i\in \mathcal N}$ specifies for each buyer $i$ whether he can consume the good, $q_{i} \in \{0,1\}$, and his payment $m_{i} \in \mathbb R$.

\textbf{Profit:}
For a given outcome, we call the subset of buyers that consume the good the consumer set, $J = \{i: q_i=1\} \subseteq \mathcal N$. 
The seller maximizes her expected profit, and she only incurs cost $c$ when the good is provided, i.e., when there is at least one consumer.
For a given outcome, the seller's profit is
\eq{ \label{eq:designer-payoff}
 \sum_{i=1}^{N}  m_i  + \varphi (k) - c \mathbbm{1}_{k>0},
}
where function $\varphi: \{0,1,\dots,N \} \to \mathbb R$ maps a number of consumers $k=|J|$ into direct profit network effects that can be positive or negative depending on the application.

\textbf{Types:} Buyer $i$ privately learns his value type $\theta_{i}$, an iid draw from a commonly known distribution with cdf $F$, continuous and positive density $f$, and  support $\Theta :=[0,\overline \theta]$. 
A type profile is  denoted by $\boldsymbol{\theta} := (\theta_{i})_{i\in \mathcal N} \in  \Theta^N$, and we sometimes use the notation $\boldsymbol{\theta}=(\theta_{i},\boldsymbol{\theta}_{-i})$. The joint distribution of $\boldsymbol{\theta}$ is  given by $G(\boldsymbol{\theta}) = \Pi_{j \in \mathcal N}  F (\theta_j)$, and we define $G_{-i}(\boldsymbol \theta_{-i})$ analogously.

\textbf{Valuations:} Given a consumer set of size $|J|=k$, buyer $i$'s valuation for the good is $v(\theta_i,k)$, i.e., the utility of buyer $i$ is given by
\ea{
q_i v(\theta_i,k) -  m_i.
}
We allow for positive and negative value network effects, i.e., $v$ can be increasing, decreasing or non-monotone in $k$, but we assume these network effects go in the same direction for all types, i.e., $\sign(v(\theta_i,k)-v(\theta_i,k')) = \sign(v(\theta'_i,k)-v(\theta'_i,k'))$ for all $k',k, \theta_i,\theta'_i$.
Moreover, $v$ increasing in $\theta_i$ with a strictly positive derivative with respect to the type, $v_1(\theta_i,k):=\frac{\partial v}{\partial \theta_i}(\theta_i,k)>0$ for all $\theta_i$ and $k$, and we impose the following single-crossing condition
\eq{ \label{eq:sc-cond}
\sign \{ v_1(\theta_i,k) - v_1(\theta_i,k') \} =
\sign \{ v(\theta_i,k) - v(\theta_i,k') \} \quad \mbox{ for all } \theta_i, k\neq k',
}
which implies that the effect of a change in the number of consumers on the marginal valuation in terms of types goes in the same direction as the value network effect. That is, for positive network effects, larger types benefit more when the consumer set expands; for negative network effects, larger types lose more when the consumer set expands. 
Additionally, we assume
\ean{ 
&\sign  \left\{  \frac{\partial v_1}{\partial \theta_i}(\theta_i,k') -\frac{\partial v_1}{\partial \theta_i}(\theta_i,k) \right\} =\sign \{     v(\theta_i,k) - v(\theta_i,k') \}  \quad \mbox{ for all } \theta_i, k \neq k', \notag \\
&\mbox{ and } \quad \frac{1-F(x)}{f (x)} \leq \frac{1-F(y)}{f (y)} \quad \mbox{ for all } x>y, \label{eq:sc-cond2}
}
which, as we show in Lemma \ref{lem:sc-cond} in the appendix, imply that single-crossing also holds for virtual values. The virtual value of type $\theta_i$ in a consumer set of size $k$ is given by 
\eq{ \label{eq:virt-value}
\psi(\theta_{i},k) = v(\theta_{i},k) - \frac{1-F(\theta_i)}{f (\theta_i )} v_1(\theta_i,k),
}
where the latter part reflects the information rents needed to incentivize truthful type revelation. We assume it is strictly increasing in $\theta_i$, and, in line with \citet{myerson1981}, we call such environments regular. Because $v_1(\theta_i,k)\geq 0$ for all $k$ and $\theta_i$, the monotone hazard rate condition of \eqref{eq:sc-cond2} is sufficient for regularity when $v$ is concave in the type.

\textbf{Game:} The seller sets up an arbitrary (finite) game in which each buyer selects an action (plan) $\alpha_{i} \in \mathcal A_{i}$ with a strategy $\sigma_i: \Theta \to \mathcal A_i$. Let $\mathcal A := (\mathcal A_{i})_{i\in \mathcal N}$.
A deterministic outcome function $g: \mathcal A  \to \mathcal O$ maps an action profile into an outcome $o \in \mathcal O$, where $g_i(\boldsymbol{\alpha})$ is the final allocation decision and payment of buyer $i$ given all players' actions $\boldsymbol{\alpha}$. Moreover, each buyer must receive at least his outside option, which we normalize to a payoff of zero. By the revelation principle,\footnote{The classical revelation principle may not hold when restricting attention to deterministic mechanisms. However, this is not an issue in our setting with ex-post constraints. See \citet{jarman2017}. Moreover, we show that our regularity assumption on \eqref{eq:virt-value} implies that the optimal mechanism is indeed deterministic so that our restriction is without loss in our setting.} we can restrict attention to incentive-compatible direct revelation mechanisms (DRM) in our quest to find the optimal allocation because they can replicate any $\langle \mathcal A,g,\sigma\rangle$. Because DRM are often considered impractical, we also propose indirect mechanisms that implement this allocation.

\textbf{DRM:} In a deterministic DRM, each buyer $i$ reports his type, and functions 
$\langle q,m \rangle =  (q_i, m_i)_{i\in \mathcal N }$
determine the outcome for each combination of types, $q_i: \Theta^N \to \{0,1\}$ and $m_i: \Theta^{N} \rightarrow \mathbb R$.
For any deterministic DRM, we can define the number of consumers (size of the consumer set) as $k(\boldsymbol \theta):=\sum_{ i\in \mathcal N} q_i(\boldsymbol{\theta})$.

Given the other buyers' reported types $\boldsymbol {\widehat \theta}_{-i}$, the payoff of a buyer of type $\theta_i$ who reported $\widehat \theta_i$ to the DRM is 
\ea{
 u_i ( \widehat \theta_i,\boldsymbol {\widehat \theta}_{-i} | \theta_i)
= q_i( \widehat \theta_i,\boldsymbol {\widehat \theta}_{-i})  v(\theta_i,k( \widehat \theta_i,\boldsymbol {\widehat \theta}_{-i})) - m_i ( \widehat \theta_i,\boldsymbol {\widehat \theta}_{-i}) .
}
Note that buyer $i$'s utility depends on the final allocation through the size of the consumer set, which depends on the other buyers' reported types, but not their true types. That is, we are still in a private-value setting. Moreover, for fixed  $\boldsymbol {\widehat \theta}_{-i}$ and a given function $q$, buyer $i$'s report fixes the allocation (and hence $k$) deterministically. We impose ex-post (dominant-strategy) incentive and participation constraints when maximizing expected profit can be expressed as
\begin{align} \label{eq:IC}
U_i (\theta_i,\boldsymbol {\widehat \theta}_{-i}) = u_i( \theta_i,\boldsymbol {\widehat \theta}_{-i} | \theta_i) &\geq u_i( \widehat \theta_i ,\boldsymbol {\widehat \theta}_{-i} | \theta_i)  \quad &\forall i, \theta_i,\widehat \theta_i, \boldsymbol {\widehat \theta}_{-i}
\tag{DS-IC}\\
u_i( \theta_i, \boldsymbol {\widehat \theta}_{-i} | \theta_i) &\geq 0 \quad &\forall i, \theta_i, \boldsymbol {\widehat \theta}_{-i}.
\tag{IR} \label{eq:IR}
\end{align}
In contrast to a rival-goods problem, 
we do not have a restriction $\sum_i q_i(\boldsymbol \theta)\leq 1$
because the good can be consumed by all buyers at the same time such that the only feasibility constraint of our deterministic mechanism is $q_i (\boldsymbol {\theta}) \in \{0,1\}$ for all $\boldsymbol {\theta}$ and $i$.

The necessary and sufficient conditions for implementability under \eqref{eq:IC} are easier to verify than the corresponding cyclical monotonicity \citep[see, e.g.,][]{borgers2015} under a weaker Bayesian incentive constraint. Therefore, our approach is to first solve the easier and more constrained problem, and then we verify that our solution remains optimal when the constraint is relaxed.




\section{Analysis} \label{sec:analysis}

\textbf{Road map:} 
We split our analysis into two parts: first we focus on the direct implementation in dominant strategies, and we then turn to how this solution can be implemented indirectly.
After some preliminary observations implementability, we approach our problem in the usual fashion by first considering a relaxed problem, i.e., we maximize profits without the incentive-driven monotonicity constraints.
Next, we verify that our solution indeed satisfies these constraints and, hence, it also solves our original (more constrained) problem. Moreover, we show that relaxing \eqref{eq:IC} to a Bayesian incentive constraint leads to an identical solution. Finally, we discuss indirect implementations of our optimal allocation to rationalize commonly seen elements in real-life monetization schemes in the creator economy.

\subsection{Direct implementation in dominant strategies} \label{sec:direct}

\textbf{Implementability:}
 The constraint \eqref{eq:IC} implies that $q_i$ must be weakly increasing in type $\theta_i$ for all $\boldsymbol {\theta}_{-i}$ and that higher types must get ``better" consumer sets, whereas the transfers are pinned down by the familiar integral form \eqref{eq:integral-m} below. Consequently, \eqref{eq:IC} implies (IR) if the lowest type gets at least utility zero for all $\boldsymbol {\theta}_{-i}$. The following results are helpful in rewriting our problem.

\begin{lem} \label{lem:implementability}
A DRM $\langle q,m \rangle$ is incentive compatible if and only if for every $i$ and every $\boldsymbol{\theta}_{-i}$, 

(i) 
there is a type
$\underline x (\boldsymbol{\theta}_{-i})$ such that for all
$\theta_i > \underline x (\boldsymbol{\theta}_{-i}) > \theta'_i$:
\eq{ \label{eq:cutoff}
q_i (\theta_i, \boldsymbol{\theta}_{-i}) = 1 \mbox{ and } q_i (\theta'_i, \boldsymbol{\theta}_{-i}) = 0  ;
}
(ii) 
for all $\theta_i > \theta''_i > \underline x (\boldsymbol{\theta}_{-i})$,
\eq{ \label{eq:q=1}
v(\theta_i,k(\theta_i,\boldsymbol{\theta}_{-i})) -v(\theta''_i,k(\theta_i,\boldsymbol{\theta}_{-i})) \geq v(\theta_i,k(\theta''_i,\boldsymbol{\theta}_{-i})) -v(\theta''_i,k(\theta''_i,\boldsymbol{\theta}_{-i}));
}
(iii) 
for all $\theta_i >  \underline x (\boldsymbol{\theta}_{-i}) > \theta'_i$:
\eq{ \label{eq:integral-m}
\begin{aligned}
& m_i(\theta_i',\boldsymbol{\theta}_{-i}) &= m(0,\boldsymbol{\theta}_{-i})& , \\
  &m_i(\theta_i,\boldsymbol{\theta}_{-i}) &= m(0,\boldsymbol{\theta}_{-i}) &+ v(\theta_i,k(\theta_i, \boldsymbol{\theta}_{-i})) - \int_{\underline x (\boldsymbol{\theta}_{-i})}^{\theta_i} v_1(t,k(t,\boldsymbol{\theta}_{-i}))d t
\end{aligned}
}
\end{lem}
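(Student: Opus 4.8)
The plan is to prove the two implications separately by the standard monotonicity-plus-envelope route, adapted to the feature that when buyer $i$ changes his report he slides along the schedule $\theta_i\mapsto v(\theta_i,k(\widehat\theta_i,\boldsymbol\theta_{-i}))$, where the realized consumer-set size $k(\cdot,\boldsymbol\theta_{-i})$ is itself a step function of his report. I fix $i$ and $\boldsymbol\theta_{-i}$ throughout and abbreviate $k(\cdot):=k(\cdot,\boldsymbol\theta_{-i})$.

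\textbf{Necessity.} First I would add the two \eqref{eq:IC} inequalities for a pair $\theta_i>\theta_i'$ (type $\theta_i$ not mimicking $\theta_i'$ and vice versa). The transfers cancel, leaving $q_i(\theta_i)\,[v(\theta_i,k(\theta_i))-v(\theta_i',k(\theta_i))]\ge q_i(\theta_i')\,[v(\theta_i,k(\theta_i'))-v(\theta_i',k(\theta_i'))]$. Because $v$ is strictly increasing in its first argument, the right-hand side is strictly positive whenever $q_i(\theta_i')=1$, which forces $q_i(\theta_i)=1$; hence $q_i$ is nondecreasing and setting $\underline x(\boldsymbol\theta_{-i})$ equal to the infimum of consuming types gives the cutoff property \eqref{eq:cutoff} (part (i)). Evaluating the same master inequality at two types $\theta_i>\theta_i''>\underline x(\boldsymbol\theta_{-i})$, where $q_i\equiv 1$, is verbatim \eqref{eq:q=1} (part (ii)). For part (iii), incentive compatibility among the non-consuming types forces $m_i$ to be constant below the cutoff, which I label $m(0,\boldsymbol\theta_{-i})$; on the consuming region I apply the Milgrom--Segal envelope theorem to $U_i(\theta_i)=\max_{\widehat\theta_i}\{q_i(\widehat\theta_i)v(\theta_i,k(\widehat\theta_i))-m_i(\widehat\theta_i)\}$. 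For a fixed report the objective is differentiable in the true type with derivative $q_i(\widehat\theta_i)\,\partial_\theta v(\theta_i,k(\widehat\theta_i))$, so $U_i'(\theta_i)=\partial_\theta v(\theta_i,k(\theta_i))$ almost everywhere on the consuming region; integrating from the cutoff, using that $U_i$ is absolutely continuous and equals $-m(0,\boldsymbol\theta_{-i})$ at $\underline x(\boldsymbol\theta_{-i})$, and substituting $m_i=v-U_i$ yields \eqref{eq:integral-m}.

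\textbf{Sufficiency.} Assuming (i)--(iii), I would substitute \eqref{eq:integral-m} into the deviation payoff and reduce the gain from truth-telling over a report $\widehat\theta_i<\theta_i$ (both above the cutoff) to $\int_{\widehat\theta_i}^{\theta_i}\big[\partial_\theta v(t,k(t))-\partial_\theta v(t,k(\widehat\theta_i))\big]\,dt$. The task is to sign this integral, and here is where \eqref{eq:q=1} together with \eqref{eq:sc} does the work: reading \eqref{eq:q=1} through \eqref{eq:sc} shows that the realized size $k(\cdot)$ moves monotonically along the report in the direction dictated by the network effect (nondecreasing for positive effects, nonincreasing for negative ones), so that on the interval of integration $k(t)$ lies on the side of $k(\widehat\theta_i)$ for which, again by \eqref{eq:sc}, higher types have the larger marginal valuation; the integrand is then nonnegative pointwise. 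The symmetric computation handles downward deviations, and deviations that cross the cutoff (to or from non-consumption) are dispatched using the normalization $m_i=m(0,\boldsymbol\theta_{-i})$ and the continuity of $U_i$ at $\underline x(\boldsymbol\theta_{-i})$; participation then follows from the remark preceding the lemma once the lowest type's payoff is nonnegative.

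\textbf{Main obstacle.} The genuinely non-standard difficulty is that a buyer's report can move the \emph{entire} consumer set, not just his own consumption indicator, so a priori $k(\cdot)$ need be neither monotone nor continuous and $v(\cdot,k(\cdot))$ may jump. This is why the necessity step must invoke the envelope theorem in its Milgrom--Segal form rather than a first-order condition, and why the sufficiency step cannot rest on a local argument. The crux, which I expect to be the heart of the write-up, is the observation that condition \eqref{eq:q=1}---though obtained merely by \emph{summing} incentive constraints and hence only necessary a priori---is in fact equivalent, given \eqref{eq:sc}, to monotonicity of the realized consumer-set size in the report, and that this is precisely the sorting (single-crossing) property needed to sign the envelope integral and thereby upgrade the local payment formula \eqref{eq:integral-m} to global incentive compatibility.
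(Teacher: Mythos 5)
Your proposal is correct and follows essentially the same route as the paper: necessity by combining the two incentive constraints for a pair of types (yielding the monotonicity/cutoff property, condition \eqref{eq:q=1} on the consuming region, and the envelope representation of transfers — the paper does the envelope step by hand with difference quotients and Lipschitz continuity rather than citing Milgrom--Segal, but the content is identical), and sufficiency by reducing the deviation gain to the comparison of integrals of $\partial v(t,k(\cdot))/\partial t$ and signing it via \eqref{eq:q=1} together with \eqref{eq:sc}. Your observation that \eqref{eq:q=1} plus \eqref{eq:sc} amounts to monotonicity of the realized consumer-set size in the report is exactly the content of the paper's Lemma \ref{lem:sc}, and your pointwise signing of the integrand fills in the step the paper leaves terse ("is implied by these conditions").
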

The above lemma gives familiar necessary and sufficient conditions \eqref{eq:cutoff} and \eqref{eq:integral-m} for incentive compatibility. Because we assume that value network effects go into the same direction for all types, they agree on the preference order over consumer set sizes.
The next lemma tells us that \eqref{eq:q=1} and the single-crossing condition \eqref{eq:sc-cond}
together imply that higher types must get a weakly ``better" consumer set size. We have to order $k$ because we allow non-monotone value network effects.\footnote{For instance, the linear $v(\theta_i,k)=g(k)\theta_i$ with a non-monotone $g$ satisfies single crossing.}

\begin{lem} \label{lem:sc}
Order consumer set sizes $k_1,\dots,k_n$ such that $v(\theta_i,k_{j+1})  \geq v(\theta_i,k_{j})$ for all $\theta_i$.
    A DRM $\langle q,m \rangle$ satisfies \eqref{eq:q=1} given any $\boldsymbol{\theta}_{-i}$ if and only if  the interval $(\underline x(\boldsymbol{\theta}_{-i}),\overline \theta]$ is partitioned by cutoffs $x_1(\boldsymbol{\theta}_{-i}) \leq \dots \leq x_n(\boldsymbol{\theta}_{-i})$ such that $k(\theta_i,\boldsymbol{\theta}_{-i}) = k_j$ if $\theta_i \in (x_j(\boldsymbol{\theta}_{-i}),x_{j+1}(\boldsymbol{\theta}_{-i})]$ with $x_{n+1}(\boldsymbol{\theta}_{-i})=\overline \theta$.
\end{lem}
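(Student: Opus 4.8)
The plan is to use Lemma~\ref{lem:implementability} to reduce the statement to a claim about condition \eqref{eq:q=1} alone. By that lemma, incentive compatibility (given $\boldsymbol\theta_{-i}$) is equivalent to the conjunction of \eqref{eq:cutoff}, \eqref{eq:q=1}, and \eqref{eq:integral-m}; the cutoff condition \eqref{eq:cutoff} already supplies the threshold $\underline x(\boldsymbol\theta_{-i})$ above which $q_i=1$, and \eqref{eq:integral-m} merely pins down transfers. Hence it suffices to show that, on the interval $(\underline x(\boldsymbol\theta_{-i}),\overline\theta]$ where the buyer consumes, condition \eqref{eq:q=1} holds if and only if the realized consumer-set size $k(\theta_i,\boldsymbol\theta_{-i})$ improves weakly in the common value order as $\theta_i$ increases, i.e.\ forms the claimed interval partition. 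Throughout I fix $\boldsymbol\theta_{-i}$ and suppress it, and I use that the order $k_1,\dots,k_n$ is well defined and total precisely because the value network effects point the same way for all types.

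For the ``if'' direction, take $\theta_i>\theta_i''>\underline x$ and write $k:=k(\theta_i)$, $k'':=k(\theta_i'')$; by hypothesis $k$ is weakly better than $k''$, i.e.\ $v(\cdot,k)\ge v(\cdot,k'')$ for all types. I would split into the two sign regimes so as to apply \eqref{eq:sc} to the \emph{actual} sizes. Under positive value effects ``better'' means larger, so $k\ge k''$ as sizes and $v(\theta_i,k)-v(\theta_i,k'')\ge 0$; \eqref{eq:sc} then forces the cross-difference $[v(\theta_i,k)-v(\theta_i,k'')]-[v(\theta_i'',k)-v(\theta_i'',k'')]$ to be nonnegative, which is exactly \eqref{eq:q=1} after rearranging. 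Under negative value effects ``better'' means smaller, so $k\le k''$; applying \eqref{eq:sc} to the pair $(k'',k)$ gives a nonpositive cross-difference, and multiplying through by $-1$ again yields \eqref{eq:q=1}.

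For the ``only if'' direction I argue by contraposition. Suppose the assignment is not monotone, so there are $\theta_i>\theta_i''>\underline x$ with $k:=k(\theta_i)$ strictly worse than $k'':=k(\theta_i'')$. Running the same case split in reverse, \eqref{eq:sc} (whose sign equality delivers a \emph{strict} inequality when the network effect between the two sizes is strict) yields $v(\theta_i,k)-v(\theta_i'',k)<v(\theta_i,k'')-v(\theta_i'',k'')$, which directly contradicts \eqref{eq:q=1}. Hence higher consuming types receive weakly better consumer-set sizes. Since the value order is total and there are finitely many sizes, the quality assignment $\theta_i\mapsto k(\theta_i)$ is a weakly monotone map into a finite chain, so each level set is an interval and the level sets are nested by type; labelling their endpoints produces the cutoffs $x_1\ge\dots\ge x_n$ of the statement.

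The main obstacle I anticipate is bookkeeping rather than conceptual: one must keep the two distinct orders straight --- the value order $k_1,\dots,k_n$ in which the partition is phrased versus the natural order on the actual sizes that enters \eqref{eq:sc} --- and invoke \eqref{eq:sc} with the correct orientation in each network-effect regime, so that the single-crossing sign equality translates into the one-sided inequality \eqref{eq:q=1} with the right direction. A secondary point requiring care is the degenerate case in which two distinct sizes yield identical values for all types: such sizes sit at the same quality level, so \eqref{eq:sc} returns a zero cross-difference and the tie is consistent with weak monotonicity, which is why the partition is stated with weak cutoff inequalities.
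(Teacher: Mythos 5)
Your proposal is correct and follows essentially the same route as the paper: reduce incentive compatibility to condition \eqref{eq:q=1} via Lemma~\ref{lem:implementability}, then use \eqref{eq:sc} (oriented according to the sign of the value network effect) to show that \eqref{eq:q=1} holds precisely when higher consuming types receive weakly better consumer-set sizes, yielding the interval partition. Your version merely makes explicit the positive/negative regime case split and the tie-handling that the paper's contradiction argument leaves implicit.
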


Exploiting the integral form \eqref{eq:integral-m}, we can rewrite the seller's objective as
\eq{ \label{eq:virt-value-profit}
\int_{\Theta^{N}} \left( \sum_{i=1}^{N} \psi (\theta_i,k(\boldsymbol \theta))  q_i(\boldsymbol \theta) + \varphi (k(\boldsymbol \theta)) - c  \mathbbm{1}_{k(\boldsymbol \theta)>0}  \right)  d G (\boldsymbol{\theta}). 
}

\subsubsection{The relaxed problem} \label{sec:relaxed}

Our first step towards the optimal allocation of the relaxed problem is a characterization of the type profiles under which the good can be provided profitably. The second step clarifies the structure of the consumer set in these cases for any number of consumers, while the third step establishes the optimal allocation for all type profiles.

\textbf{1. Good provision:} In the relaxed problem, we maximize 
\eq{ \label{eq:type-by-type}
\sum_{i=1}^{N} \psi (\theta_i,k(\boldsymbol \theta))  q_i(\boldsymbol \theta)  + \varphi (k(\boldsymbol \theta))   - c  \mathbbm{1}_{k(\boldsymbol \theta)>0} 
}
separately for all possible type profiles $\boldsymbol \theta$, and we disregard the constraints \eqref{eq:cutoff} and \eqref{eq:q=1} of Lemma \ref{lem:implementability}.
Because the problem is linear in $q_i$, it follows immediately that, in the relaxed problem, our restriction to deterministic mechanisms is without loss, $q_i(\boldsymbol \theta) \in \{0,1\}$ for all $\boldsymbol \theta$.

Given a fixed type profile, the seller prefers to sell to consumer set $J\neq \emptyset$ rather than not providing the good at all if the following inequality holds
\eq{ \label{eq:cost-covered}
\begin{aligned}
\Psi (\boldsymbol \theta|J) & \geq C (|J|)         \\
\mbox{ with } \Psi (\boldsymbol \theta|J) := \sum_{i\in J} \psi (\theta_i,|J|) \quad  & \mbox{ and } \quad C (k) := c - (\varphi(k)- \varphi(0)).
\end{aligned}
}
This inequality simply expresses that providing the good is only profitable if the revenue extractable from consumer set $J$ exceeds the network effect adjusted cost for a consumer set of size $|J|$. The extractable revenue $\Psi (\boldsymbol \theta|J)$ is the sum of virtual values of admitted types, and $C(k)$ adjusts the total costs for profit network effects. For instance, a negative cost adjustment reflects positive profit network effects.

\begin{lem} \label{lem:aggr-provision}
Consider the relaxed problem and take any $\boldsymbol\theta$.
In optimum, the good is provided if and only if a set $ J \subseteq \mathcal N $ exists such that
\eqref{eq:cost-covered} holds.
\end{lem}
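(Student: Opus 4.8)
The plan is to exploit the type-by-type separability established just before the statement: for the relaxed problem we maximize \eqref{eq:type-by-type} independently for each fixed profile $\boldsymbol\theta$, and, since the objective is linear in the $q_i$, it suffices to compare deterministic allocations with $q_i(\boldsymbol\theta)\in\{0,1\}$. A deterministic allocation at a fixed profile is then completely described by its consumer set $J=\{i:q_i(\boldsymbol\theta)=1\}$, so the whole optimization collapses to a finite comparison over the subsets $J\subseteq\mathcal N$.

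First I would compute the value of each candidate consumer set. Substituting $q_i=\mathbbm{1}_{i\in J}$ and $k(\boldsymbol\theta)=|J|$ into \eqref{eq:type-by-type} gives the value $W(J):=\sum_{i\in J}\psi(\theta_i,|J|)+\varphi(|J|)-c\,\mathbbm{1}_{|J|>0}$. In particular, the ``no-provision'' allocation $J=\emptyset$ yields $W(\emptyset)=\varphi(0)$, while any nonempty $J$ yields $W(J)=\Psi(\boldsymbol\theta|J)+\varphi(|J|)-c$. The good is provided in the optimum precisely when some nonempty set attains the maximum, i.e.\ when $\max_{\emptyset\neq J\subseteq\mathcal N}W(J)\geq W(\emptyset)$.

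The two directions then follow by rearranging this comparison into the form \eqref{eq:cost-covered}. For the ``only if'' direction, if the good is provided then the optimal consumer set $J^\ast\neq\emptyset$ satisfies $W(J^\ast)\geq W(\emptyset)$ (otherwise not providing would be strictly better), and subtracting $\varphi(0)$ and collecting terms yields $\Psi(\boldsymbol\theta|J^\ast)\geq c-(\varphi(|J^\ast|)-\varphi(0))=C(|J^\ast|)$, so the required set exists. For the ``if'' direction, any $J$ with $\Psi(\boldsymbol\theta|J)\geq C(|J|)$ reverses the same algebra to give $W(J)\geq W(\emptyset)$, so providing to $J$ is weakly optimal and the good is provided.

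I expect no real obstacle here: the content is essentially packaged into the preceding reduction to a type-by-type problem over deterministic allocations, and the lemma is a direct rearrangement of the profitability comparison. The only point requiring mild care is the treatment of ties: when $\Psi(\boldsymbol\theta|J)=C(|J|)$ the seller is indifferent between provision and no provision, so the statement is best read as ``there exists an optimal allocation in which the good is provided,'' which is exactly what the weak inequality delivers.
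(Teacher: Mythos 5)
Your proof is correct and follows essentially the same route as the paper: both reduce the relaxed problem to a pointwise comparison of the profit from each consumer set $J$ against the no-provision payoff $\varphi(0)$ and rearrange that comparison into \eqref{eq:cost-covered}. The paper phrases it as a short contradiction argument, while you spell out the algebra explicitly, but the content is identical.
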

The above result gives a necessary and sufficient condition for good provision, but it does not delineate to whom the good shall be provided. The proof follows straightforwardly from the seller's type-by-type profit function \eqref{eq:type-by-type}.

\textbf{2. Consumer set structure and simple exclusion policies:}  Given some type profile $\boldsymbol \theta$, the following inequality states when providing the good to buyers $J$ is preferred over consumer set $J'\neq \emptyset$, 
\eq{
\label{eq:J>J'}
\Psi (\boldsymbol \theta|J) - \Psi (\boldsymbol \theta|J')  \geq    \varphi(|J'|)- \varphi(|J|).
}
Without loss of generality, let us re-label buyers in order of their (virtual) types, $\theta_i\geq \theta_{i+1}$, where regularity implies $\psi (\theta_i,k)\geq \psi(\theta_{i+1},k)$ for all $k$.
Since $(\varphi(k)- \varphi(k))=0$ for all $k\in\mathbb N$, we can immediately infer from \eqref{eq:J>J'} that out of all possible consumer sets of the same size $k$, the seller prefers $J_k=\{ j: j\leq k\}$ the most.
That is, in the relaxed problem, every optimal allocation that accepts $k$ buyers must accept the $k$ buyers with the highest (virtual) types.
Hence, we can restrict attention to such consumer sets, and we only need to find the optimal number of consumers $k^*$ for each type  realization $\boldsymbol{\theta}$.
Let us call an allocation rule with such a structure a simple exclusion policy. That is, given the labeling convention, $q$ is a simple exclusion policy for some $k^*$ if and only if
\eq{ \label{eq:agg-structure}
q_i(\boldsymbol{\theta}) = 1 \iff i \leq k^*.
}

\begin{lem} \label{lem:aggr-structure}
Consider the relaxed problem and take any $\boldsymbol\theta$.
If the good is produced, the optimal allocation rule is a simple exclusion policy for some $k^*$.
\end{lem}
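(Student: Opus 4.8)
The plan is to reduce the choice of an optimal consumer set to two nested steps: first fix the cardinality $k$ and identify the best set of that size, then optimize over $k$. The candidate-set structure emerges entirely in the first step, and it is essentially already contained in the rearrangement \eqref{eq:J>J'}.

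First I would fix an arbitrary consumer-set size $k$ and compare two sets $J$ and $J'$ with $|J|=|J'|=k$. Applying \eqref{eq:J>J'}, the right-hand side equals $\varphi(|J'|)-\varphi(|J|)=\varphi(k)-\varphi(k)=0$, so the profit network effects drop out of the comparison (the fixed cost $c$ having already cancelled in the derivation of \eqref{eq:J>J'}). Hence, among sets of equal size, the seller weakly prefers whichever delivers the larger extractable revenue $\Psi(\boldsymbol\theta|J)=\sum_{j\in J}\psi(\theta_j,k)$.

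Next I would invoke regularity. After relabeling buyers so that $\theta_i\geq\theta_{i+1}$, regularity gives $\psi(\theta_i,k)\geq\psi(\theta_{i+1},k)$ for every fixed $k$. Since every summand in $\Psi(\boldsymbol\theta|J)$ is evaluated at the common size $k=|J|$, the sum $\sum_{j\in J}\psi(\theta_j,k)$ is maximized, over all $J$ with $|J|=k$, by admitting the $k$ buyers with the highest (virtual) types, i.e.\ by $J_k$ as defined in \eqref{eq:agg-structure}. Thus, for each $k$, the candidate set $J_k$ weakly dominates every other set of size $k$.

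Finally, I would optimize over the size. The seller's maximal profit is obtained by maximizing, over $k$, the value generated by the best size-$k$ set, and I have just shown that for each $k$ this best set is $J_k$. Therefore, whenever the good is produced at some optimal size $k^*$, an optimal consumer set is the candidate set $J_{k^*}$, which is the claim. The only point requiring care is that the virtual value $\psi(\theta_j,k)$ depends on the set size; but because this dependence enters only through the common cardinality $k$, it is identical for all competing same-size sets and hence irrelevant to the within-size ranking — monotonicity of $\psi(\cdot,k)$ in the type is all that is needed. I do not anticipate a genuine obstacle: the result is a direct consequence of the cancellation in \eqref{eq:J>J'} together with regularity.
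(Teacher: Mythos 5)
Your proposal is correct and follows essentially the same route as the paper's own (much terser) proof: compare same-size sets via \eqref{eq:J>J'}, observe that the profit-network-effect term on the right-hand side cancels when $|J|=|J'|$, and use regularity (the ordering $\psi(\theta_i,k)\geq\psi(\theta_{i+1},k)$) to conclude that the top-$k$ types maximize $\Psi(\boldsymbol\theta|J)$ among sets of size $k$. Your additional remark that the $k$-dependence of $\psi$ is harmless because it enters only through the common cardinality is exactly the point the paper leaves implicit.
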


This result follows from the objective \eqref{eq:type-by-type} combined with the insights above. Having established the structure of the optimal consumer set, we can finalize the characterization of the optimal allocation by determining the optimal consumer set size $k^*$.

\textbf{3. Number of consumers:} Before we state the optimal simple exclusion policy, let us consider the effect on the seller's profit when $j$ buyers are added to a consumer set $J_k=\{ i: i\leq k\}$ that follows a simple exclusion policy for $k$. 
It is only profitable to add the next highest $j$ types if the additional extractable revenue from them exceeds the threshold
\eq{
\begin{aligned}
    \gamma (k,k+j ,\boldsymbol{\theta}_{\leq k}) :=  \overbrace{\varphi(k)- \varphi(k+j)}^{=C(k+j)-C(k)} + \sum_{i= 1}^{k} \big( \psi (\theta_i,k)-\psi (\theta_i,k+j) \big)
\end{aligned}
}
with $\boldsymbol{\theta}_{\leq k}=(\theta_1,\dots,\theta_k)$.
This threshold reflects two changes in the seller's profit that the revenue extractable from the $j$ additional buyers needs to compensate. First, the profit network effect changes by  $(\varphi(k)-\varphi(k+j))$. This first part can be used to represent the variable part of the total adjusted total cost, $C(k) = c + \gamma (0,k,\cdot)$ for all $k$.
Second, each buyer $i\leq k$ already admitted to the consumer set now garners value $v(\theta_i,k+j)\neq v(\theta_i,k)$. That is, due to the value network effects, the seller can extract either more or less value from the buyers $J_k$ already tentatively considered for her consumer set. While the sign of this effect only depends on whether we assume positive or negative network effects, its size also depends on $\boldsymbol{\theta}_{\leq k}$.

Given Lemmas \ref{lem:aggr-provision} and \ref{lem:aggr-structure}, our relaxed problem for a given $\boldsymbol{\theta}$ is reduced to
\eq{ \label{eq:reduced-max}
\begin{aligned}
\max_{k\in\mathcal{N}} \quad \sum_{i=1}^k \psi(\theta_i,k) + \varphi(k) \quad
\mbox{such that} \quad \sum_{i=1}^k \psi(\theta_i,k) \geq C(k).
\end{aligned}
}
Let $\overline k(\theta)$ be the solution to this problem if it exists, and let $\overline k(\theta)=0$ otherwise, i.e., if all $k\in \mathcal N$ violate the constraint. Next, 
we define
\eq{
\begin{aligned}
    &\overline q_i(\boldsymbol \theta) = \mathbbm{1}_{i \leq \overline k(\boldsymbol \theta)} \quad
\mbox{ and } \\
\label{eq:integral-m2}
    &\overline m_i(\boldsymbol \theta ) = v(\theta_i,\overline k(\boldsymbol \theta)) \: \overline q_i (\boldsymbol \theta ) - \int_0^{\theta_i} \overline q_i(x, \boldsymbol {\theta}_{-i}) v_1 (x, \overline k(x, \boldsymbol {\theta}_{-i})) d x.
\end{aligned}
}

\begin{lem} \label{lem:finder}
$\langle \overline q,\overline m \rangle$ as defined in \eqref{eq:integral-m2} is the solution to the relaxed problem.
\end{lem}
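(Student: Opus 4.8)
The plan is to show that the algorithm in Definition~\ref{def:algo} outputs, for every type profile $\boldsymbol\theta$, a consumer set that maximizes the type-by-type objective \eqref{eq:type-by-type}, and that the payments \eqref{eq:integral-m2} are the correct revenue-equivalence transfers. By Lemma~\ref{lem:aggr-structure}, any optimal consumer set of a given size must be a candidate set $J_k$, so the maximization over \emph{all} subsets $J\subseteq\mathcal N$ reduces to a maximization over the finite chain of candidate sets $\{J_0,J_1,\dots,J_N\}$ (together with the empty set for non-provision). The core claim is therefore that $K_{N+1}(\boldsymbol\theta)=\argmax_{k\in\{0,\dots,N\}} \big(\Psi(\boldsymbol\theta|J_k)+\varphi(k)\big)$, i.e., that the algorithm correctly identifies the revenue-maximizing candidate set.

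First I would establish that the operator $\overline{\max}_{\boldsymbol\theta}$ selects the set preferred under $\succeq_{\boldsymbol\theta}$, which by \eqref{eq:J>J'} is exactly the set yielding the higher value of $\Psi(\boldsymbol\theta|\cdot)+\varphi(|\cdot|)$. Since $\succeq_{\boldsymbol\theta}$ ranks sets by this single scalar, it is a total preorder, so a simple induction on $k$ shows that $K_{k+1}(\boldsymbol\theta)$ is the $\succeq_{\boldsymbol\theta}$-maximal element of $\{J_0,\dots,J_k\}$: the base case $K_1=J_0=\varnothing$ holds by construction, and the step $K_{k+1}=\overline{\max}_{\boldsymbol\theta}\{J_k,K_k\}$ compares the new candidate $J_k$ against the running maximum $K_k$. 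Running to $k=N$ yields that $K_{N+1}(\boldsymbol\theta)$ is $\succeq_{\boldsymbol\theta}$-maximal over all candidate sets. This handles the consumer-set structure.

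Next I would handle good provision. By Lemma~\ref{lem:aggr-provision}, the good is provided in the optimum if and only if some $J$ satisfies $\Psi(\boldsymbol\theta|J)\geq C(|J|)$; by Lemma~\ref{lem:aggr-structure} it suffices to check candidate sets, and since $K_{N+1}$ maximizes $\Psi(\boldsymbol\theta|J_k)+\varphi(k)$ over candidate sets, it also maximizes $\Psi(\boldsymbol\theta|J_k)-C(k)$ (the two differ only by the constant $c-\varphi(0)$). Hence a profitable set exists if and only if $\Psi(\boldsymbol\theta|K_{N+1})\geq C(\overline k)$, which is precisely the indicator in the definition of $\overline q_i$. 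This shows $\overline q$ provides the good to $K_{N+1}$ exactly when provision is optimal and to nobody otherwise, matching the relaxed optimum. Finally, the transfer formula \eqref{eq:integral-m2} is just the integral form \eqref{eq:integral-m} from Lemma~\ref{lem:implementability} rewritten with the realized cutoff implicit in $\overline q_i(\cdot,\boldsymbol\theta_{-i})$, so it delivers the revenue-equivalent payments; I would note that in the relaxed problem the transfers are free to be pinned down this way since the relaxed objective \eqref{eq:type-by-type} already substituted virtual values for payments.

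I expect the main obstacle to be a subtle bookkeeping point rather than a conceptual one: verifying that the induction genuinely compares \emph{every} candidate set and that ties are resolved consistently with the $\overline{\max}$ convention (which returns $J$ when $J\succeq_{\boldsymbol\theta}K$). One must be careful that $\succeq_{\boldsymbol\theta}$ is indeed transitive here — it is, because it reduces to comparing the scalars $\Psi(\boldsymbol\theta|J_k)+\varphi(k)$, so the ``non-local'' phenomenon flagged in the text (that single-buyer additions may be unprofitable while multi-buyer additions are profitable) causes no problem: the algorithm sidesteps local comparisons entirely by directly enumerating all candidate sizes. The remaining care is simply to confirm that maximizing $\Psi+\varphi$ over candidate sets coincides with maximizing the full type-by-type objective \eqref{eq:type-by-type}, which holds because the term $-c\,\mathbbm{1}_{k>0}$ only affects the provision decision already handled by the indicator, not the ranking among non-empty candidate sets.
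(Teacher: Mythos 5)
Your proposal is correct and follows essentially the same route as the paper's proof: reduce to candidate sets via Lemma~\ref{lem:aggr-structure}, argue by induction that the iterated pairwise $\overline{\max}_{\boldsymbol\theta}$ comparison (which is transitive because $\succeq_{\boldsymbol\theta}$ ranks sets by the scalar $\Psi(\boldsymbol\theta|J_k)+\varphi(k)$) identifies the best candidate set, and then verify that the final cost check on $K_{N+1}$ alone suffices. The only difference is presentational: where you observe that $\Psi+\varphi$ and $\Psi-C$ differ by a constant so the maximizer of one maximizes the other, the paper establishes the same point by an explicit chain of inequalities showing that if any candidate set covers its adjusted cost then so does $K_{N+1}$.
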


\textbf{Illustration of Lemmas \ref{lem:aggr-provision} to \ref{lem:finder}:} Figure \ref{fig:ex1} summarizes and illustrates our results in a two-buyer example, which we will pick up again in Figure \ref{fig:allocation}. Panel \ref{subfig:provision-lemma} depicts Lemma \ref{lem:aggr-provision}: only for $(\theta_1,\theta_2)$-combinations in the top-right area a consumer set $J$ whose extractable revenue covers the adjusted cost $C (|J|)$ exists, and for all other type combinations production is not profitable. Here, we also see an inefficiency due to asymmetric information because the first-best solution is to provide the good to both buyers in the region north-east of the dotted line.
In Lemma \ref{lem:aggr-structure}, depicted in Panel \ref{subfig:structure-lemma}, we show that we can restrict attention to simple exclusion policies: in the provision area above the $45^{\circ}$-line, the seller never wants to exclude buyer 2. Lemma \ref{lem:finder} solves the relaxed problem by identifying the revenue maximizing consumer set when costs are covered.  Consider point $x=(\theta_1^x,\theta_2^x)$ in Panel \ref{subfig:finder}, where $\psi(\theta_1^x,1)+\varphi(1) \geq \varphi(0).$ Hence, accepting buyer 1 is profitable, but additionally including buyer 2 is not, because $\psi(\theta_2^x,2)< \gamma(1,2,\theta_1^x)$, implying
$\psi(\theta_1^x,2)+\psi(\theta_2^x,2)+\varphi(2) < \psi(\theta_1^x,1)+ \varphi(1)$. Since the cost $C(1)$ is covered, we found our solution $\overline k (x)=1$. Buyer 1 has the same type in type profile $y$ 
in Panel \ref{subfig:finder}, but adding the second buyer is profitable,
$\psi(\theta_1^y,2)+\psi(\theta_2^y,2)+\varphi(2) > \psi(\theta_1^y,1)+ \varphi(1)$.

\begin{figure}[h!]
\begin{center}
\subfloat[Lemma \ref{lem:aggr-provision}]{ \label{subfig:provision-lemma}
\scalebox{0.65}{
\begin{tikzpicture}
\draw[->] (-0.3,0)  -- (5.5,0) node[below]{$\theta_1$};
\draw (5,0) node[below]{$1$};
\draw (0,5) node[left]{$1$};
\draw[->] (0,-0.3) -- (0,5.5) node[left] {$\theta_2$};
\draw[very thick] (0.7*5,0) node[below]{$\frac{7}{10}$} -- (0.7*5,33/80*5) -- (33/80*5,0.7*5) -- (0,0.7*5)  node[left]{$\frac{7}{10}$};
\draw  (3.2, 4.5) node{Good provided};
\draw  (3.2, 3.9) node{$\{1 \}, \{2 \}$ or $\{1,2 \}$};
\draw  (1.75, 1.5) node{Good not provided};
\draw  (1.8, 0.9) node{$\emptyset$};
\draw[dotted] (0.4*5,0) -- (0,0.4*5);
\draw (0,5) -- (5,5) -- (5,0);
\end{tikzpicture}
}} 
\subfloat[Lemma \ref{lem:aggr-structure}]{ \label{subfig:structure-lemma}
\scalebox{0.65}{
\begin{tikzpicture}
\draw[->] (-0.3,0)  -- (5.5,0) node[below]{$\theta_1$};
\draw (5,0) node[below]{$1$};
\draw (0,5) node[left]{$1$};
\draw[->] (0,-0.3) -- (0,5.5) node[left] {$\theta_2$};
\draw[very thick] (0.7*5,0) node[below]{$\frac{7}{10}$} -- (0.7*5,33/80*5) -- (33/80*5,0.7*5) -- (0,0.7*5)  node[left]{$\frac{7}{10}$};
\draw[dashed] (2.78125, 2.78125) -- (5,5);
\draw  (1.5, 4.3) node{$\{2\}$ or $\{1,2\}$};
\draw  (4.3, 3.2) node{$\{1,2\}$};
\draw  (4.3, 2.2) node{or};
\draw  (4.3, 1.2) node{$\{1\}$};
\draw  (1.5, 1.5) node{$\emptyset$};
\draw (0,5) -- (5,5) -- (5,0);
\end{tikzpicture}
}}
\subfloat[Lemma \ref{lem:finder}]{ \label{subfig:finder}
\scalebox{0.65}{
\begin{tikzpicture}
\draw[->] (-0.3,0)  -- (5.5,0) node[below]{$\theta_1$};
\draw[->] (0,-0.3) -- (0,5.5) node[left] {$\theta_2$};
\draw[very thick] (0.7*5,0) node[below]{$\frac{7}{10}$} -- (0.7*5,33/80*5) -- (5,27/80*5);
\draw[very thick] (0,0.7*5)  node[left]{$\frac{7}{10}$} -- (33/80*5,0.7*5) -- (27/80*5,5);
\draw[very thick] (33/80*5,0.7*5) -- (0.7*5,33/80*5);
\draw  (4.25, 1) node{$\{1\}$};
\draw  (1, 4.25) node{$\{2\}$};
\draw  (4, 4) node{$\{1,2\}$};
\draw  (1.5, 1.5) node{$\emptyset$};
\draw (0,5) -- (5,5) -- (5,0);
 \draw (0.8*5,0.5) circle[radius=2pt];
  \fill (0.8*5,0.5)  circle[radius=2pt];
   \draw (0.8*5,3) circle[radius=2pt];
  \fill (0.8*5,3)  circle[radius=2pt];
  \draw[->] (0.5*5,1) node[left]{x}  -- (0.8*5-0.1,0.5);
  \draw[->] (0.5*5,2) node[left]{y}  -- (0.8*5-0.1,3-0.1);
\end{tikzpicture}
}}
 \caption{A depiction of how we solve the relaxed problem.} \label{fig:ex1}
\end{center}
\end{figure}
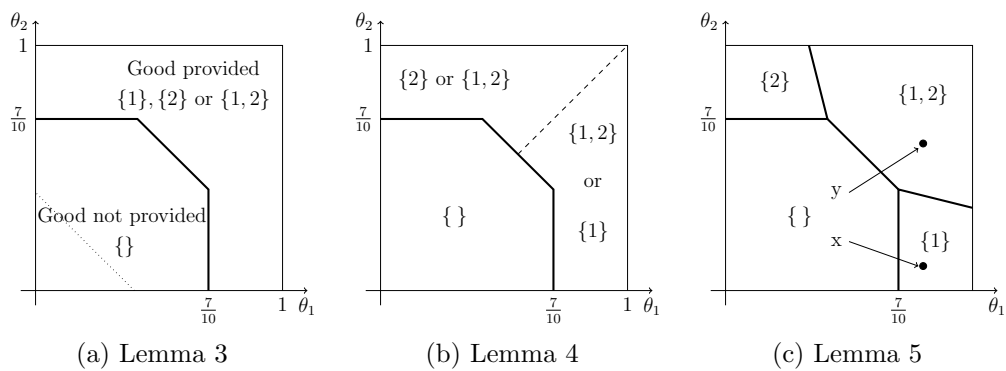

In the next section, we verify that our solution indeed satisfies the constraints that we ignored, and, as a result, the solutions to the relaxed problem and the constrained problem coincide.

\subsubsection{The full constrained problem}

Given our regularity assumption on virtual valuations \eqref{eq:virt-value} and our single-crossing assumption, the optimal allocation in the relaxed problem is indeed dominant-strategy incentive compatible and individually rational. Hence, it also solves the more constrained problem. First, $\overline q$ admits buyers to the consumer set in order of their virtual types, which under regularity coincides with the order of types. Therefore, any admitted buyer remains in the consumer set if his type is increased. Second, single crossing ensures that whenever two types obtain the good with different consumer set sizes, the larger type gets the consumer set size he prefers.

\begin{prop} \label{prop:optimal}
    In regular environments, $\langle \overline q,\overline m \rangle$ as defined in \eqref{eq:integral-m2} is the solution to the full constrained problem under \eqref{eq:IC}.
\end{prop}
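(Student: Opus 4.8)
The plan is to exploit that the relaxed problem is a genuine relaxation: it maximizes the same objective \eqref{eq:virt-value-profit} over a strictly larger feasible set, because it drops the monotonicity constraints \eqref{eq:cutoff} and \eqref{eq:q=1}. Since \eqref{eq:virt-value-profit} was obtained from the seller's expected profit by substituting out transfers through the integral form \eqref{eq:integral-m}, any mechanism that is incentive compatible in the sense of \eqref{eq:IC} earns expected profit equal to \eqref{eq:virt-value-profit} evaluated at its allocation. Hence the value of the relaxed problem, attained by $\overline q$ (Lemma \ref{lem:finder}), is an upper bound for the constrained problem, so it suffices to show that $\langle \overline q,\overline m\rangle$ is itself feasible for the constrained problem, i.e. satisfies \eqref{eq:IC} and \eqref{eq:IR}; feasibility at the bound then forces optimality. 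By Lemmas \ref{lem:implementability} and \ref{lem:sc}, checking \eqref{eq:IC} reduces to verifying, for every $i$ and $\boldsymbol\theta_{-i}$, the cutoff property \eqref{eq:cutoff}, the consumer-set-size condition \eqref{eq:q=1}, and that the payments take the integral form \eqref{eq:integral-m}, the last holding by construction of $\overline m$ in \eqref{eq:integral-m2}.

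First I would establish \eqref{eq:cutoff}, i.e. that $\overline q_i(\cdot,\boldsymbol\theta_{-i})$ is nondecreasing, so that a threshold $\underline x(\boldsymbol\theta_{-i})$ exists. Fixing $\boldsymbol\theta_{-i}$, I would partition $\Theta$ into the finitely many subintervals on which buyer $i$'s virtual-value rank among the $N$ buyers is constant; within such a subinterval relabel so that $i$ has rank $r$. For every candidate size $k<r$ the set $J_k$ excludes $i$, so $\Psi(\cdot|J_k)$ and the no-provision payoff are independent of $\theta_i$; for every $k\ge r$ the set $J_k$ contains $i$, so by regularity $\psi(\theta_i,k)$, and hence $\Psi(\cdot|J_k)-C(k)$, are strictly increasing in $\theta_i$. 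Thus, as $\theta_i$ rises, the profit of every ``$i$-including'' candidate set weakly increases while the profit of every ``$i$-excluding'' option, including not producing, stays fixed; a single-crossing/monotone-selection argument then shows that once the profit-maximizing candidate set contains $i$ and is produced, it continues to contain $i$ and to be produced for all higher $\theta_i$ in the subinterval. Gluing across subintervals requires handling the rank-change points where two types tie: there $\Psi(\cdot|J_k)$ is continuous, as it only relabels equal virtual values, and the algorithm's tie-breaking favors the larger candidate set, which preserves monotonicity. This step is the main obstacle, precisely because the algorithm selects the optimal consumer set non-locally (Lemma \ref{lem:finder}), so one cannot argue from one-by-one inclusion alone and must instead compare the maximized profit of the two ``buckets.''

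Next I would verify \eqref{eq:q=1}. Take $\theta_i>\theta_i''>\underline x(\boldsymbol\theta_{-i})$, both admitted, and write $k=k(\theta_i,\boldsymbol\theta_{-i})$ and $k''=k(\theta_i'',\boldsymbol\theta_{-i})$. By Lemma \ref{lem:sc} applied to the cutoff structure just established, the larger type $\theta_i$ receives the weakly more-preferred set size, so $v(x,k)\ge v(x,k'')$ for all $x$. Feeding the ordered pair $(k,k'')$ into the single-crossing assumption \eqref{eq:sc} with the types $\theta_i>\theta_i''$ then yields, in both the positive- and the negative-network-effect case, exactly $v(\theta_i,k)-v(\theta_i'',k)\ge v(\theta_i,k'')-v(\theta_i'',k'')$, which is \eqref{eq:q=1}. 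This is a direct sign computation, and I expect no difficulty here.

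Finally, with \eqref{eq:cutoff} and \eqref{eq:q=1} in hand and $\overline m$ given by the integral form, Lemma \ref{lem:implementability} delivers \eqref{eq:IC}. Individual rationality follows because an excluded type pays $m(0,\boldsymbol\theta_{-i})=0$ and earns zero, so by \eqref{eq:IC} every admitted type earns at least the payoff from mimicking an excluded report, namely zero; thus \eqref{eq:IR} holds. Hence $\langle\overline q,\overline m\rangle$ is feasible for the constrained problem, and since its expected profit equals the relaxed optimum that bounds the constrained problem from above, it solves the full constrained problem under \eqref{eq:IC}.
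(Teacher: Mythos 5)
Your overall strategy matches the paper's: treat the relaxed optimum as an upper bound, then verify that $\langle\overline q,\overline m\rangle$ satisfies \eqref{eq:cutoff}, \eqref{eq:q=1}, and \eqref{eq:integral-m} so that Lemma \ref{lem:implementability} delivers \eqref{eq:IC}, with \eqref{eq:IR} following from the zero payoff of excluded types. Your treatment of \eqref{eq:cutoff} via regularity (whatever inclusion condition admits $y$ also admits $x>y$) is in substance the argument the paper gives, and \eqref{eq:integral-m} and \eqref{eq:IR} are handled as in the paper.

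The gap is in your verification of \eqref{eq:q=1}, which you dismiss as ``a direct sign computation.'' You invoke Lemma \ref{lem:sc} to conclude that the larger admitted type receives the weakly more-preferred consumer set size. But Lemma \ref{lem:sc} is an \emph{equivalence between incentive compatibility and} that monotone-set-size structure; you cannot appeal to its conclusion while you are still in the middle of proving that $\overline q$ is incentive compatible, and the cutoff property \eqref{eq:cutoff} you established in the previous step says nothing about \emph{which} set size an admitted type receives, only whether $q_i$ is $0$ or $1$. The fact that higher types get weakly better set sizes is exactly what has to be proved here, and it must come from the \emph{seller's} optimization, not from buyer-side IC. The paper does this by revealed preference: writing the two optimality inequalities stating that the consumer set chosen at type $x$ is weakly more profitable than the one chosen at type $y$ would be with $x$ in it, and vice versa; subtracting them yields $\psi(x,k_x)-\psi(y,k_x)\geq\psi(x,k_y)-\psi(y,k_y)$, which after expanding the virtual values and using the nonnegativity of the inverse hazard rate together with \eqref{eq:sc} signs the left-hand side of \eqref{eq:q=1} correctly, i.e.\ shows that the left-hand side is positive exactly when type $x$ prefers $k_x$ to $k_y$. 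Replacing your circular appeal to Lemma \ref{lem:sc} with this revealed-preference computation closes the proof.
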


\textbf{Discussion of regularity:} The optimality of $\overline q$ in the full constrained problem hinges on regularity in a fashion similar to the classical result by \citet{myerson1981}. With non-monotone virtual values, $\overline q$ would violate the monotonicity constraint. A solution to this problem would involve ironing virtual values separately for every consumer set size $k$ and then maximizing \eqref{eq:reduced-max} with the ironed virtual values. Ironing (or bunching) implies that the types in the ironed region get the same contract $(\widetilde q,\widetilde m)$. While this insight implies that the optimal allocation rule is stochastic in Myerson's model with a single private good, this is not necessarily true in our model. For instance, when all network effects are weakly positive, the seller always wants to add all types in the bunching region whenever she wants to add one of them. That is, restricting attention to deterministic allocation rules is still without loss here. However, this is clearly not true when negative network effects are allowed.\footnote{The most obvious example is the private-good setting, one of the extreme benchmark cases in Figure \ref{fig:benchmarks}.}

\textbf{Discussion of single crossing:} Conditions \eqref{eq:sc-cond} and \eqref{eq:sc-cond2} guarantee that a single-crossing condition also holds for virtual values, \eqref{eq:sc-virt} of Lemma \ref{lem:sc-cond} in the appendix. Similar to regularity, these conditions ensure that information rents are well-behaved as already suggested by \eqref{eq:sc-cond2} being a condition on the hazard rate and the second derivative of $v(\cdot,k)$ with respect to the type. If an increase in $\theta_i$ leads the seller to prefer consumer set size $k$ over $k'$, it must be that buyer $i$ weakly prefers it, too. That is, this change to consumer set size $k$ must not only be because it saves information rent, but also because it generates more value. If this was not the case, the mechanism would not be incentive compatible.
The examples we discuss below have uniformly distributed types (which implies a well-behaved hazard rate) and a value function $v$ that in linear in types , $v(\theta_i,k)=g(k)\theta_i$ (so that second derivatives are zero).

\begin{figure}[h!]
\begin{center}
\subfloat[No network effects.]{ \label{subfig:noNE}
\scalebox{0.80}{
\begin{tikzpicture}
\draw[->] (-0.5,0)  -- (6,0) node[below]{$\theta_1$};
\draw (5,0) node[below]{$1$};
\draw (0,5) node[left]{$1$};
\draw[->] (0,-0.5) -- (0,6) node[left] {$\theta_2$};
\draw[dotted] (0,0) -- (5,5);
\draw[dashed] (2.5,0) node[below]{$c=\frac{1}{2}$}  -- (0,2.5) node[left]{$c$};
\draw[dash dot] (1.25,5) -- (5,1.25);
\draw[very thick] (3/4*5,0) -- (3.75,2.5) -- (5,2.5) -- (3.75,2.5) -- (2.5,3.75)  -- (2.5,5) -- (2.5,3.75) -- (0,3.75) node[left]{$\frac{3}{4}$};
\draw  (4.25, 1.25) node{$\{1\}$};
\draw  (1.25, 4.25) node{$\{2\}$};
\draw  (4, 4) node{$\{1,2\}$};
\draw  (1.5, 1.5) node{$\emptyset$};
\draw (0,5) -- (5,5) -- (5,0);
\end{tikzpicture}
}} \quad
\subfloat[No value network effects, constant profit network effects, $\varphi^b(1)=\varphi^b(2)$.]{ \label{subfig:NEconst}
\scalebox{0.80}{
\begin{tikzpicture}
\draw[->] (-0.5,0)  -- (6,0) node[below]{$\theta_1$};
\draw[->] (0,-0.5) -- (0,6) node[left] {$\theta_2$};
\draw[dotted] (3/4*5,0) -- (3.75,2.5) -- (5,2.5) -- (3.75,2.5) -- (2.5,3.75)  -- (2.5,5) -- (2.5,3.75) -- (0,3.75);
\draw[very thick] (0.7*5,0)  -- (0.7*5,0.5*5) -- (5,0.5*5);
\draw[very thick] (0,0.7*5) node[left]{$\frac{7}{10}$} -- (0.5*5,0.7*5) -- (0.5*5,5) ;
\draw[very thick] (0.5*5,0.7*5) -- (0.7*5,0.5*5) ;
\draw  (4.25, 1) node{$\{1\}$};
\draw  (1, 4.25) node{$\{2\}$};
\draw  (4, 4) node{$\{1,2\}$};
\draw  (1.5, 1.5) node{$\emptyset$};
\draw (0,5) -- (5,5) -- (5,0);
\draw[dashed] (2.5-0.2*5,0) node[below]{$c-\varphi^b(2)$}-- (0,2.5-0.2*5); 
\end{tikzpicture}
}}

\subfloat[No value network effects, variable profit network effects, $\varphi^c(1)< \varphi^c(2)$.]{ \label{subfig:NEvar}
\scalebox{0.80}{
\begin{tikzpicture}
\draw[->] (-0.5,0)  -- (6,0) node[below]{$\theta_1$};
\draw[->] (0,-0.5) -- (0,6) node[left] {$\theta_2$};
\draw[dotted] (0.7*5,0)  -- (0.7*5,0.5*5) -- (5,0.5*5);
\draw[dotted] (0,0.7*5) node[left]{$\frac{7}{10}$} -- (0.5*5,0.7*5) -- (0.5*5,5) ;
\draw[dotted] (0.5*5,0.7*5) -- (0.7*5,0.5*5) ;
\draw[very thick] (0.7*5,0) -- (0.7*5,0.45*5) -- (5,0.45*5);
\draw[very thick] (0,0.7*5) node[left]{$\frac{7}{10}$} -- (0.45 *5,0.7*5) -- (0.45*5,5);
\draw[very thick] (0.45*5,0.7*5) -- (0.7*5,0.45*5);
\draw  (4.25, 1) node{$\{1\}$};
\draw  (1, 4.25) node{$\{2\}$};
\draw  (4, 4) node{$\{1,2\}$};
\draw  (1.5, 1.5) node{$\emptyset$};
\draw (0,5) -- (5,5) -- (5,0);
\draw[dashed] (2.5-0.2*5,0) node[below]{$c-\varphi^c(2)$}-- (0,2.5-0.2*5); 
\end{tikzpicture}
}} \quad
\subfloat[Positive value and profit network effects.]{ \label{subfig:NE}
\scalebox{0.80}{
\begin{tikzpicture}
\draw[->] (-0.5,0)  -- (6,0) node[below]{$\theta_1$};
\draw[->] (0,-0.5) -- (0,6) node[left] {$\theta_2$};
\draw[dotted] (0.7*5,0) -- (0.7*5,0.45*5) -- (5,0.45*5);
\draw[dotted] (0,0.7*5) node[left]{$\frac{7}{10}$} -- (0.45 *5,0.7*5) -- (0.45*5,5);
\draw[dotted] (0.45*5,0.7*5) -- (0.7*5,0.45*5);
\draw[very thick] (0.7*5,0) -- (0.7*5,33/80*5) -- (5,27/80*5);
\draw[very thick] (0,0.7*5) -- (33/80*5,0.7*5) -- (27/80*5,5);
\draw[very thick] (33/80*5,0.7*5) -- (0.7*5,33/80*5);
\draw  (4.25, 1) node{$\{1\}$};
\draw  (1, 4.25) node{$\{2\}$};
\draw  (4, 4) node{$\{1,2\}$};
\draw  (1.5, 1.5) node{$\emptyset$};
\draw (0,5) -- (5,5) -- (5,0);
\draw[dashed] (2.5-0.2*5,0) node[below]{$c-\varphi^b(2)$}-- (0,2.5-0.2*5); 
\end{tikzpicture}
}}
 \caption{Optimal allocations when adding components of our model step by step. The lines are straight due to the assumed linearity of $\psi$ (linear $v$ and uniformly distributed types).} \label{fig:allocation}
\end{center}
\end{figure}
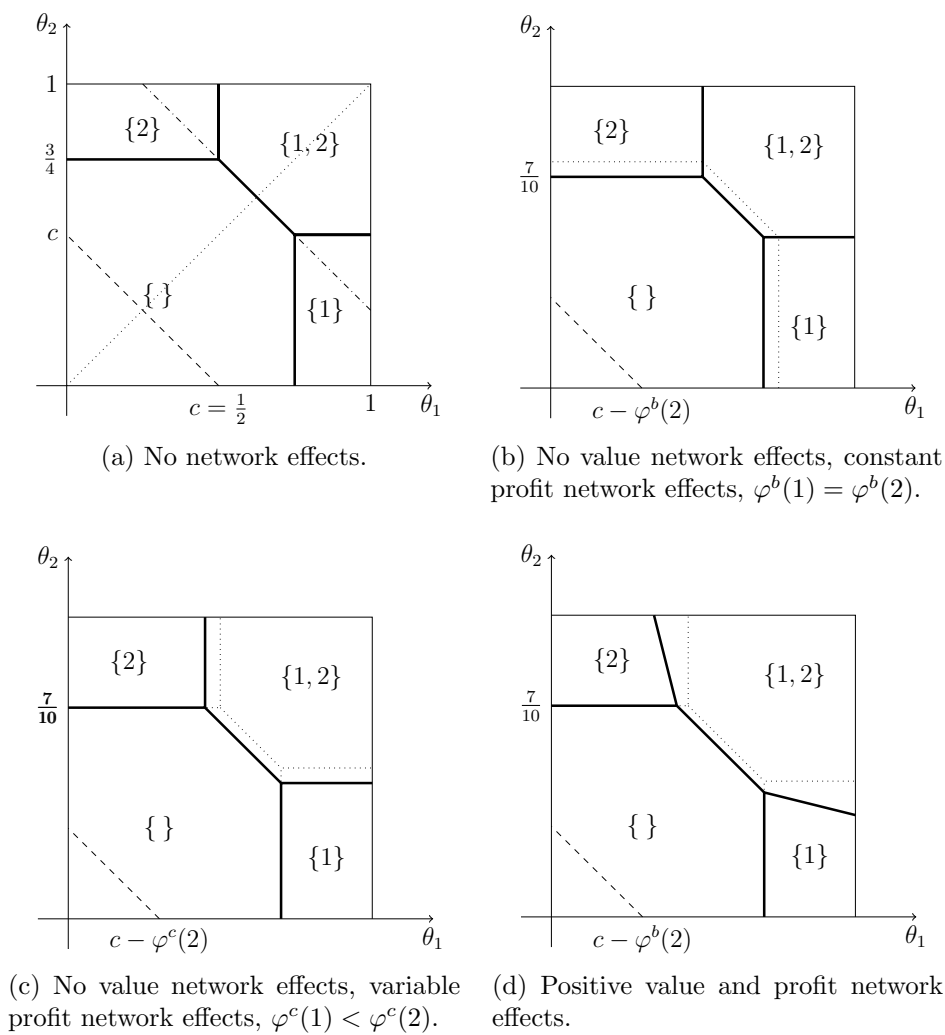

\textbf{Discussion of model components:}
Figure \ref{fig:allocation} juxtaposes the optimal allocations in four exemplary settings with $N=2$ buyers, uniformly distributed types on $[0,1]$, and cost $c=\nicefrac{1}{2}$. Step by step and starting from a setting without any network effects, we add a component of our model in each panel to discuss comparative statics. In Figure \ref{fig:benchmarks}, we focus on value network effects, and we discuss the intermediate cases between the benchmarks of private goods (strong negative value network effects) and public goods (strong positive value network effects).

\textbf{Panel \ref{subfig:noNE}} is copied from \citet[Figure 1]{cornelli1996}, who essentially solves our model without network effects, i.e., with $v^a(\theta_i,k)=\theta_i$ and $\varphi(k)=0$ for all $\theta_i$ and $k$. Here, the seller's first-best solution is to provide the good to both buyers for all $\boldsymbol{\theta}$ north-east of the dashed line defined by cost $c$ and not to provide it otherwise. 
Because incentive compatibility prevents the seller from extracting full surplus, the first-best allocation does not maximize profits when information rents are accounted for. For type combinations north-east of the dash-dotted line in Panel \ref{subfig:noNE}, the sum of virtual values covers the cost. Excluding types $\theta_i<\nicefrac{1}{2}$ with negative virtual values increases revenue, $\gamma^a(k-1,k,\boldsymbol{\theta}_{\leq k})=0$ for all $k, \boldsymbol{\theta}$. Hence, the good is provided if and only if the sum of non-negative virtual values exceeds the cost, $C^a(k)=c$ for all $k$. Contrary to the first-best allocation, sometimes only a single consumer may access the good. 
In contrast to the public-good case, the possibility to provide the good while excluding low types allows to maintain higher prices. In contrast to the private-good case, the non-rivalry allows the seller to accept all buyers that increase revenue. Thus, the non-rivalry combined with the fixed cost $c$ creates a positive externality among buyers even without network effects. All types $\theta_i>\nicefrac{1}{2}$ have a positive externality because they help to cover the seller's cost. This ``cost externality" is discussed later this section.

\textbf{Panel \ref{subfig:NEconst} and Panel \ref{subfig:NEvar}} incorporate profit network effects. In Panel \ref{subfig:NEconst}, they are constant for any provision, $\varphi^b(2)=\varphi^b(1)=\nicefrac{1}{10}> \varphi^b(0)=0$. The new optimal allocation is visualized by the thick black lines, whereas the dotted lines represent the optimal allocation from Panel \ref{subfig:noNE}. Incorporating these constant profit profit network effects, we see a uniform shift of the adjusted cost $C^b(k)=C^a(k)-\nicefrac{1}{10}$ and a shift in the thresholds for good provision to a single buyer and to two buyers. Panel \ref{subfig:NEvar} includes variable profit network effects, specifically $\varphi^c(2)=\nicefrac{1}{5} > \varphi^c(1)=\nicefrac{1}{10}> \varphi^c(0)=0$. There is an additional shift just for the provision for two buyers compared to Panel \ref{subfig:NEconst}. There is, however, no effect on the threshold of providing to one buyer only as the extractable valuation while providing to one buyer does not change. Only when provided to both buyers does the network effect allow for lower types to be included in the allocation.
To sum up, the profit network effects essentially lead to the model of \citet{cornelli1996} with a different cost function.

Straightforward graphical intuitions about the comparative statics in $c$ and $\varphi$ generalize. In any panel of Figure \ref{fig:allocation}, a change in the cost $c$ moves each border (and only those) around the area of types for which the good is not provided. Formally, we see that only the constraint in \eqref{eq:reduced-max} is affected.
An increase in $\varphi(k)$ both weakens the constraint and boosts the revenue for a specific $k$. Both changes considered in the result below only shift profits without an impact on buyers' incentives. Therefore, the proofs follow from the arguments above and are omitted.

\begin{lem}
    a) A decrease in cost $c$ weakly expands the set of type profiles $\boldsymbol{\theta}$ for which $\overline k(\boldsymbol{\theta})>0$, i.e., for which the good is provided.\\
    b) An increase in $\varphi(k)$ weakly expands the set of type profiles $\boldsymbol{\theta}$ for which $\overline k(\boldsymbol{\theta})=k$.
\end{lem}

In \textbf{Panel \ref{subfig:NE}}, we add positive value network effects to the setting of Panel \ref{subfig:NEvar}, setting $v^d(\theta_i,k)=\nicefrac{(2+k)\theta_i}{3}$. That is, if only a single buyer consumes, his value is as in the previous panels, but if both buyers consume jointly, their values increase. The dashed line characterizes the optimal allocation from Panel \ref{subfig:NEvar}. Because the value network effects are positive, this line is shifted to the south-west. Additionally, the lines separating the allocation of providing to one instead of two consumers are tilted, leading to an expansion of the set of type profiles with joint consumption. In the other panels, a consumer type left or below of this line has an insufficient virtual value and is excluded in \citet{cornelli1996} purely because of incentives, i.e., to maintain lower information rents for higher types. In Panel \ref{subfig:NE}, however, the seller wants to include some of these types to increase the valuation of the other consumer.

\begin{figure}[h!]
\begin{center}
\subfloat[Strong negative value network effects essentially make the good rival, \\
$1\geq \pi > \frac{2}{3}$.]{ \label{subfig:myerson}
\scalebox{0.85}{
\begin{tikzpicture}
\draw[->] (-0.5,0)  -- (6,0) node[below]{$\theta_1$};
\draw (5,0) node[below]{$1$};
\draw (0,5) node[left]{$1$};
\draw[->] (0,-0.5) -- (0,6) node[left] {$\theta_2$};
\draw[very thick] (5/8*5,0) -- (5/8*5,5/8*5) -- (5,5) -- (5/8*5,5/8*5) -- (0,5/8*5);
\draw[dashed] (11/16*5,0) -- (11/16*5,11/16*5) -- (5,5) -- (11/16*5,11/16*5) -- (0,11/16*5);
\draw  (4.5, 1) node{$\{1\}$};
\draw  (1, 4.5) node{$\{2\}$};
\draw  (1.5, 1.5) node{$\emptyset$};
\draw[dotted] (3/4*5,0) node[below]{$\frac{3}{4}$} -- (3.75,2.5) -- (5,2.5) -- (3.75,2.5) -- (2.5,3.75)  -- (2.5,5) -- (2.5,3.75) -- (0,3.75) node[left]{$\frac{3}{4}$};
\draw (0,5) -- (5,5) -- (5,0);
\end{tikzpicture}
}} \quad \quad
\subfloat[Small negative value network effects, \\$\pi=\frac{5}{8}$.]{ \label{subfig:smallNeg} 
\scalebox{0.85}{
\begin{tikzpicture}
\draw[->] (-0.5,0)  -- (6,0) node[below]{$\theta_1$};
\draw (5,0) node[below]{$1$};
\draw (0,5) node[left]{$1$};
\draw[->] (0,-0.5) -- (0,6) node[left] {$\theta_2$};
\draw[very thick] (7/10*5,0) -- (7/10*5,19/30*5) -- (5,5*5/6);
\draw[very thick] (0,7/10*5) -- (19/30*5,7/10*5) -- (5*5/6,5);
\draw[very thick] (7/10*5,19/30*5) -- (19/30*5,7/10*5);
\draw[dashed] (137/199*5,0) -- (137/199*5,2745/4020*5) -- (5,199/202*5);
\draw[dashed] (0,137/199*5) -- (2745/4020*5,137/199*5) -- (199/202*5,5);;
\draw[dashed] (137/199*5,2745/4020*5) -- (2745/4020*5,137/199*5);
\draw  (4.5, 1) node{$\{1\}$};
\draw  (1, 4.5) node{$\{2\}$};
\draw  (1.5, 1.5) node{$\emptyset$};
\draw  (4.2, 4.2) node{$\{ 1,2 \}$};
\draw[dotted] (3/4*5,0) node[below]{$\frac{3}{4}$} -- (3.75,2.5) -- (5,2.5) -- (3.75,2.5) -- (2.5,3.75)  -- (2.5,5) -- (2.5,3.75) -- (0,3.75) node[left]{$\frac{3}{4}$};
\draw (0,5) -- (5,5) -- (5,0);
\fill (0.6*5,3.75)  circle[radius=2pt];
\draw[dotted] (0.6*5,3.75) -- (0.6*5,0) node[below]{$y$};
\fill (0.8*5,3.75)  circle[radius=2pt];
\draw[dotted] (0.83*5,3.75) -- (0.83*5,0) node[below]{$x$};
\fill (0.95*5,3.75)  circle[radius=2pt];
\draw[dotted] (0.95*5,3.75) -- (0.95*5,0) node[below]{$z$};
\end{tikzpicture}
}}

\subfloat[Small positive value network effects, \\$\pi=\frac{3}{8}$.]{ \label{subfig:smallPos} 
\scalebox{0.85}{
\begin{tikzpicture}
\draw[->] (-0.5,0)  -- (6,0) node[below]{$\theta_1$};
\draw (5,0) node[below]{$1$};
\draw (0,5) node[left]{$1$};
\draw[->] (0,-0.5) -- (0,6) node[left] {$\theta_2$};
\draw[very thick] (5/6*5,0) -- (5/6*5,11/30*5) -- (5,3/10*5);
\draw[very thick] (0,5/6*5) -- (11/30*5,5/6*5) -- (3/10*5,5);
\draw[very thick] (5/6*5,11/30*5) -- (11/30*5,5/6*5);
\draw[dashed] (7/10*5,0) -- (7/10*5,19/30*5) -- (5,5*5/6);
\draw[dashed] (0,7/10*5) -- (19/30*5,7/10*5) -- (5*5/6,5);
\draw[dashed] (7/10*5,19/30*5) -- (19/30*5,7/10*5);
\draw  (4.5, 1) node{$\{1\}$};
\draw  (1, 4.5) node{$\{2\}$};
\draw  (1.5, 1.5) node{$\emptyset$};
\draw  (4.2, 4.2) node{$\{ 1,2 \}$};
\draw[dotted] (3/4*5,0) node[below]{$\frac{3}{4}$} -- (3.75,2.5) -- (5,2.5) -- (3.75,2.5) -- (2.5,3.75)  -- (2.5,5) -- (2.5,3.75) -- (0,3.75) node[left]{$\frac{3}{4}$};
\draw (0,5) -- (5,5) -- (5,0);
\end{tikzpicture}
}} \quad \quad
\subfloat[Strong positive value network effects  essentially make the good public, \\$\frac{1}{4} > \pi \geq 0$.]{ \label{subfig:public}
\scalebox{0.85}{
\begin{tikzpicture}
\draw[->] (-0.5,0)  -- (6,0) node[below]{$\theta_1$};
\draw (5,0) node[below]{$1$};
\draw (0,5) node[left]{$1$};
\draw[->] (0,-0.5) -- (0,6) node[left] {$\theta_2$};
\draw[very thick] (1/8*5,5) --  (5,1/8*5);
\draw[dashed] (1/6*5,5) --  (5,1/6*5);
\draw  (4.2, 4.2) node{$\{ 1,2 \}$};
\draw  (2, 2) node{$\emptyset$};
\draw[dotted] (3/4*5,0) node[below]{$\frac{3}{4}$} -- (3.75,2.5) -- (5,2.5) -- (3.75,2.5) -- (2.5,3.75)  -- (2.5,5) -- (2.5,3.75) -- (0,3.75) node[left]{$\frac{3}{4}$};
\draw (0,5) -- (5,5) -- (5,0);
\end{tikzpicture}
}}
 \caption{Optimal allocations compared to benchmark cases. If value network effects are sufficiently negative, the good is essentially private and provided to at most one buyer \citep{myerson1981}. If value network effects are sufficiently positive, the good is essentially public and provided to both or no buyers \citep{guth1986private}. The dotted lines represent the settings without network effects \citep{cornelli1996}.}\label{fig:benchmarks}
\end{center}
\end{figure}
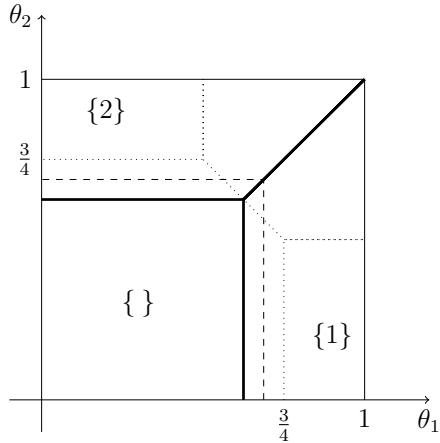
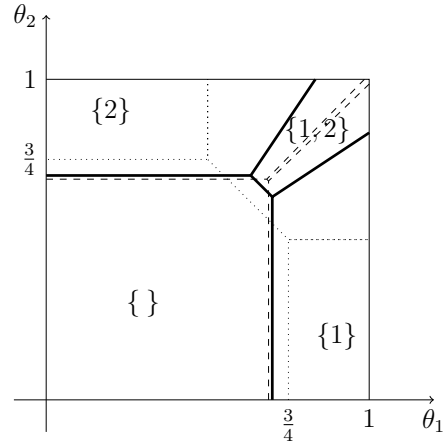
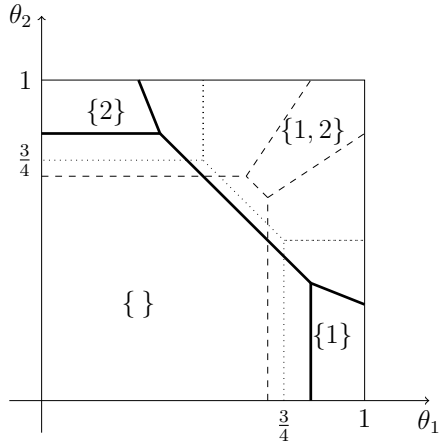
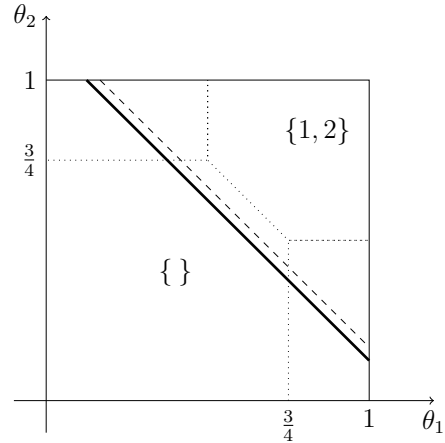

\textbf{Benchmark cases: Figure \ref{fig:benchmarks}} focuses on the value network effects (no profit network effects), and it shows how our model nests benchmarks from the literature. We assume uniformly distributed types on $[0,1]$, and cost $c=\nicefrac{1}{4}$ with the valuation function
\eq{ \label{eq:v-pi}
v(\theta_i,k)=\begin{cases}
			\pi \theta_i & \text{if }   k=1, \\
			(1-\pi)\theta_i & \text{if } k=2.
		 \end{cases}
}
That is, increasing $\pi$ makes consumption alone more valuable and consumption together less valuable.
The optimal allocation without network effects ($\pi=\nicefrac{1}{2}$, same as in Panel \ref{subfig:noNE}) is depicted by the dotted lines in each panel. As we elaborate below, the parameter $\pi$ can be seen as a degree of rivalry and excludability. While we use a two-buyer example for ease of exposition, there are several ways to extend valuation function \eqref{eq:v-pi} to cases with $n>2$.

\textbf{Panel \ref{subfig:myerson}} essentially represents the case of an indivisible, excludable, and rival good \citep{myerson1981}. The thick line in that panel represents the optimal allocation with $\pi=1$, i.e., a buyer only garners a payoff if he consumes the good alone and otherwise negative network effects destroy all value. In other words, for $\pi=1$, the good is fully rival.
Our optimal DRM of Proposition \ref{prop:optimal} collapses to a second-price auction with a reserve price.
A decrease in $\pi$ down to $\nicefrac{2}{3}$ (dashed line) does not qualitatively change the structure of the allocation. Only the reserve price changes as the good becomes less valuable when consumed alone while consumption together still destroys too much value that it is never optimal to have both buyers share the good. That is, although the good is not fully rival for such $\pi,$ it is still sufficiently rival that the fundamental structure of optimal selling mechanism is not affected.

\textbf{Panel \ref{subfig:smallNeg}} shows smaller negative network effects. For parameters $\pi$ slightly smaller than $\nicefrac{2}{3}$ (dashed line), it becomes optimal to have some types close to the 45-degree line share the good. However, if the types are too far apart, it is optimal to award the good only to the higher type. Further decreasing $\pi$ enlarges the area of type combinations for which the good is allocated to both buyers. The thick lines represent the optimal allocation with $\pi=\nicefrac{5}{8}$, and the dotted lines represent it for $\pi=\nicefrac{1}{2}$, the benchmark case of a non-rival and excludable good without network effects \citep{cornelli1996}. That is, in the interval $[\nicefrac{1}{2},1]$ the parameter $\pi$ can be seen purely as a degree of rivalry, with the extreme cases (fully rival and non-rival) at the endpoints, but the good is always fully excludable. 

\textbf{Panel \ref{subfig:smallPos}} shows the optimal allocation with small positive network effects with the thick line ($\pi=\nicefrac{3}{8}$), and it compares it to the allocations without network effects (dotted line) and negative network effects (dashed line).  An increase of $\pi$ above $\nicefrac{1}{2}$ does not affect the rivalry of the good, but it makes exclusion less favorable. In the extreme $\pi=0$, exclusion is not viable at all. That is, in the region $[0,\nicefrac{1}{2}]$ the parameter $\pi$ can be seen purely as a degree of excludability, with the extreme cases (fully excludable and non-excludable) at the endpoints, but the good is always fully non-rival.

In essence, \textbf{Panel \ref{subfig:public}} represents the case of a public good (non-excludable and non-rival). Here, a buyer only garners a payoff if no buyers are excluded and otherwise exclusion destroys all value,\footnote{To be precise, here it is not that the seller is unable to exclude buyers, but she does not want to because value to extract can only exits without exclusion. Our model would also allow to model exclusion costs directly through profit network effects $\varphi$.} i.e., $\pi=0$. Qualitatively, the optimal allocation looks the same for all $\pi$ smaller than $\nicefrac{1}{4}$ (dashed line), where the good is also only provided to both buyers or not at all. Our optimal DRM of Proposition \ref{prop:optimal} collapses to the (inefficient) private supply mechanism for public goods \citep{guth1986private}.

\textbf{Complements and substitutes:} A consequence of our dominant-strategy incentive constraint is that for each buyer $i$, the reported types of the other buyers determine a cutoff $z_i(\boldsymbol{\theta}_{-i})$, and $i$ gets to consume the good if and only if $\theta_i \geq z_i(\boldsymbol{\theta}_{-i})$. When this cutoff is weakly increasing everywhere, the allocation rule has substitutes \citep{milgrom2020clock,jarman2017ex}, and it has complements if the cutoff is weakly decreasing. That is, allocation rule $q$
\eq{
\begin{aligned}
    &\mbox{has substitutes: if } &\forall i,j, \theta_j>\theta'_j \quad &q_i(\theta'_j,\boldsymbol{\theta}_{-j})=0 &\Rightarrow \quad &  q_i(\theta_j,\boldsymbol{\theta}_{-j})=0 ,  \\
     & \mbox{and } &\forall i,j, \theta_j>\theta'_j \quad &q_i(\theta_j,\boldsymbol{\theta}_{-j})=1 &\Rightarrow \quad &  q_i(\theta'_j,\boldsymbol{\theta}_{-j})=1 ;  \\
    &\mbox{has complements: if } &\forall i,j, \theta_j>\theta'_j  \quad &q_i(\theta'_j,\boldsymbol{\theta}_{-j})=1 &\Rightarrow \quad &  q_i(\theta_j,\boldsymbol{\theta}_{-j})=1, \\
     & \mbox{and } &\forall i,j, \theta_j>\theta'_j \quad &q_i(\theta_j,\boldsymbol{\theta}_{-j})=0 &\Rightarrow \quad &  q'_i(\theta_j,\boldsymbol{\theta}_{-j})=0 ;  \\
\end{aligned}
}
In words, if an increase (decrease) in buyer $j$'s type can potentially kick another buyer $i$ out of the consumer set, buyers are substitutes (complements).

To illustrate the concept, consider the cutoff functions $z_2$ in Figure \ref{fig:benchmarks}. They are depicted by the curve such that all for all types $\theta_2$ above (below), buyer 2 is (not) in the consumer set. In the bottom two panels, where value network effects are positive, $z_2$ is weakly decreasing. That is, the allocation rule exhibits complements everywhere.
The virtual-value single-crossing condition ensures that this observation generalizes and leads to a monotone function $\overline k$, when profit network effects are weakly increasing.

\begin{lem} \label{lem:complements}
Suppose $\varphi(k)\geq \varphi(k')$ and $v(\theta_i,k)\geq v (\theta_i,k')$ for all $\theta_i$ and $k>k'$.
Then, the optimal allocation rule $\overline q$ has complements, and 
$\overline k(\theta_i,\boldsymbol{\theta}_{-i})$ is weakly increasing in each type $\theta_i$.
\end{lem}

In contrast, a corresponding statement with negative value network effects and substitutes is not true as illustrated by the non-monotonicity of $z_2$ in Panel \ref{subfig:smallNeg}. Special cases where the optimal allocation has substitutes are the private-good case (Panel \ref{subfig:myerson}) or the case with $c=0$ depicted in the right panel of Figure \ref{fig:c0}. The only mechanism that exhibits both complements and substitutes is the posted-price mechanism depicted in the left panel of Figure \ref{fig:c0}.

\textbf{Cost externalities:} More broadly, our settings can resemble that of a public good regardless of the direction of value network effects and despite the possibility of exclusion. The reason is that the fixed production cost $c$ creates a positive externality among the buyers, even when value network effects are slightly negative. To see this, consider the dotted lines in any panel of Figure \ref{fig:benchmarks}. They represent the setting without any network effects \citep{cornelli1996}. Here, buyer 1 may benefit from an increase in buyer 2's type when this increase pushes the extractable revenue above the cost. However, this is only relevant for types $\theta_1 \in [\nicefrac{1}{2},\nicefrac{3}{4}]$ because lower types never get the good, and higher types always get the good and are indifferent between consumption alone and together. With network effects, these boundaries vary with $\theta_2$. Even in the case of negative value network effects, the cost externality can create complements in the optimal allocation rule.
Holding a sufficiently large $\theta_2$ fixed in the allocation depicted in Panel \ref{subfig:smallNeg},
an increase in $\theta_1$ from $y$ to $x$ can increase $\overline k$ (it adds buyer 1), and a further increase to $z$ decreases $\overline k$ again (it kicks out buyer 2). Hence, the total externality is not monotone.

\textbf{Trivial economies:} In Lemma \ref{lem:always-never}, we determine conditions leading to a trivial outcome, i.e., never or always supplying the good. The first applies if the cost is sufficiently high, and the second applies if it is sufficiently low and there are sufficiently strong positive network effects. In the following, we call any economy satisfying the conditions of Lemma \ref{lem:always-never} ``trivial" and all other settings ``non-trivial." For part b) of the result below, we assume that $v(0,k)=0$ for all $k$.

\begin{lem} \label{lem:always-never}
The following settings constitute trivial economies.\\
a) The good is never provided, i.e., $(\overline q,\overline m)(\boldsymbol{\theta}) = (\mathbf{0},\mathbf{0})$ for all $\boldsymbol \theta$
if and only if 
\eq{ \label{eq:never-provide}
k \psi  (\overline \theta,k) < C (k) \quad \forall k.
}

b) The good is always provided for free to all, i.e., 
$(\overline q,\overline m)(\boldsymbol{\theta}) = (\mathbf{1},\mathbf{0})$ for all $\boldsymbol \theta$
if and only if 
\eq{ \label{eq:always-provide}
\begin{aligned}
&(i) &N \psi  (0,N) \geq C (N),  \quad \mbox{ and} \\
&(ii) &(N-k)\psi (0,N) \geq \gamma (k,N,\boldsymbol{0}_{\leq k})\quad \forall k.
\end{aligned}
} 
\end{lem}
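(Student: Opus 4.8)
The plan is to reduce both statements to the provision criterion of Lemma~\ref{lem:aggr-provision} and the pairwise comparison \eqref{eq:step-k}, and then to identify the extreme type profile at which each criterion binds; monotonicity of $\psi(\cdot,k)$ in the type (regularity) does most of the work.

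For part a), recall that by Lemma~\ref{lem:aggr-provision} the good is provided at $\boldsymbol\theta$ if and only if some $J$ satisfies $\Psi(\boldsymbol\theta|J)\geq C(|J|)$, and that for a set of size $k$ the revenue $\sum_{i\in J}\psi(\theta_i,k)$ is maximal at the top profile $\boldsymbol\theta=(\overline\theta,\dots,\overline\theta)$, where it equals $k\psi(\overline\theta,k)$. For ``if'', suppose $k\psi(\overline\theta,k)<C(k)$ for all $k$; then every set of size $k$ at every $\boldsymbol\theta$ has $\Psi(\boldsymbol\theta|J)\leq k\psi(\overline\theta,k)<C(k)$, so no set ever covers its adjusted cost and the good is never provided, whence $\overline k\equiv 0$ and \eqref{eq:integral-m2} forces $\overline m_i\equiv 0$. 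For ``only if'', if the good is never provided then in particular at $(\overline\theta,\dots,\overline\theta)$ the size-$k$ candidate set fails to cover its cost, i.e. $k\psi(\overline\theta,k)<C(k)$, for every $k$.

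For part b) I would first observe that $(\overline q,\overline m)\equiv(\mathbf{1},\mathbf{0})$ is equivalent to the optimal consumer set being $\mathcal N$ at every $\boldsymbol\theta$: once $\overline q_i\equiv 1$ and $\overline k\equiv N$, \eqref{eq:integral-m2} collapses to $\overline m_i=v(\theta_i,N)-\int_0^{\theta_i}\frac{\partial v(x,N)}{\partial x}\,dx=v(0,N)$, which is zero once the lowest type's value is normalized to zero. Since $\succeq_{\boldsymbol\theta}$ ranks candidate sets by realized profit, the sequential maximization returns the globally preferred set, so $\mathcal N$ is the algorithm's output at $\boldsymbol\theta$ exactly when (A) it covers its cost, $\Psi(\boldsymbol\theta|\mathcal N)\geq C(N)$, and (B) it is weakly preferred to every smaller candidate set, which by \eqref{eq:step-k} reads $\sum_{i=k+1}^N\psi(\theta_i,N)\geq\gamma(k,N,\boldsymbol\theta_{\leq k})$ for all $k<N$. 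Necessity of (i) and (ii) is then immediate by evaluating (A) and (B) at the bottom profile $\boldsymbol\theta=\mathbf{0}$.

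The substance is sufficiency, i.e. showing that $\boldsymbol\theta=\mathbf{0}$ is the worst case. For (A) this follows at once from $\sum_i\psi(\theta_i,N)\geq N\psi(0,N)\geq C(N)$. For (B), the types $\theta_{k+1},\dots,\theta_N$ enter only the left-hand side, which is bounded below by $(N-k)\psi(0,N)$, so it suffices to show that raising the high types $\theta_1,\dots,\theta_k$ does not raise the threshold, i.e. $\gamma(k,N,\boldsymbol\theta_{\leq k})\leq\gamma(k,N,\boldsymbol 0_{\leq k})$; combined with (ii) this closes the proof. Since $\gamma(k,N,\boldsymbol\theta_{\leq k})=\varphi(k)-\varphi(N)+\sum_{i\leq k}\big(\psi(\theta_i,k)-\psi(\theta_i,N)\big)$, this reduces to the per-type inequality $\psi(\theta,N)-\psi(\theta,k)\geq\psi(0,N)-\psi(0,k)$, i.e. to $\psi(\theta,N)-\psi(\theta,k)$ being nondecreasing in $\theta$. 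I expect this monotonicity to be the only delicate point: writing $g(\theta):=v(\theta,N)-v(\theta,k)$, the single-crossing assumption \eqref{eq:sc} makes $g$ nondecreasing in the positive-network-effect regime relevant here, and $\psi(\theta,N)-\psi(\theta,k)=g(\theta)-\frac{1-F(\theta)}{f(\theta)}g'(\theta)$ is precisely the virtual-value transform of $g$, whose monotonicity follows from the same hazard-rate and concavity conditions that guarantee our regularity assumption. Everything else is bookkeeping with the provision and comparison inequalities.
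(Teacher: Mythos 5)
Your part a) and the necessity direction of part b) follow the paper's own argument essentially verbatim (evaluate the provision and comparison criteria at the extreme profiles $(\overline\theta,\dots,\overline\theta)$ and $\boldsymbol 0$, using regularity), so those are fine. The sufficiency direction of part b), however, takes a different route from the paper, and that route has a genuine gap. You reduce everything to the pointwise worst-case claim $\gamma(k,N,\boldsymbol\theta_{\leq k})\leq\gamma(k,N,\boldsymbol 0_{\leq k})$, i.e.\ to $\psi(\theta,N)-\psi(\theta,k)$ being nondecreasing in $\theta$. Writing $g(\theta)=v(\theta,N)-v(\theta,k)$, this is monotonicity of $g(\theta)-\frac{1-F(\theta)}{f(\theta)}g'(\theta)$. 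The model only assumes that each $\psi(\cdot,k)$ is increasing (regularity) and \eqref{eq:sc}; the monotone-hazard-rate and concavity conditions you invoke are mentioned in the paper only as \emph{sufficient} conditions for regularity, not as maintained assumptions. Under regularity plus \eqref{eq:sc} alone, $g'\geq 0$ but $\frac{1-F}{f}g'$ can locally increase, so your per-type inequality can fail even though the lemma's conclusion is true. You also restrict to the positive-network-effect regime without justification; nothing in the statement of part b) assumes the sign of the value network effects.

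The paper avoids this entirely: it verifies (i) and (ii) only at $\boldsymbol\theta=\boldsymbol 0$ to conclude $\overline q_i(\boldsymbol 0)=1$ for all $i$, and then appeals to the already-established incentive compatibility of $\langle\overline q,\overline m\rangle$ (the cutoff structure of Lemma \ref{lem:implementability}(i), verified for $\overline q$ in Proposition \ref{prop:optimal}) to conclude that the consumer set cannot shrink at any other type vector; the payments then vanish by \eqref{eq:integral-m2} and $v(0,k)=0$. That is, the heavy lifting is delegated to the monotonicity of the optimal allocation, which is a property of the solution rather than an extra condition on primitives. To repair your argument you would either need to add the MHR-plus-concavity hypotheses explicitly (making the lemma weaker than stated) or replace the worst-case comparison of the thresholds $\gamma$ by the paper's monotonicity argument.
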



\subsection{Indirect implementations in the creator economy} \label{sec:implementation}

Because DRM are rarely seen in practice, we propose simple indirect implementations of the optimal allocation. Specifically, we construct $\langle \mathcal A, g, \sigma \rangle$, a Bayesian game with action space $\mathcal A$, outcome function $g$, and a Bayesian Nash equilibrium $\sigma$ that implements the desired outcome for each type vector $\boldsymbol{\theta}.$ Because Bayesian implementation is less constrained, we first have to argue that $\langle \overline q, \overline m\rangle$ is still optimal.

In the following, we have an application to the creator economy in mind.
In light of the application, the terminology slightly changes: the buyers are called users, the seller is called creator, and the consumer set is called audience.
Our suggested implementations rationalize three commonly seen features of monetization schemes in the creator economy: donations (with an all-pay contribution mechanism), community subsidies (when value network effects are positive), and exclusivity bids (when value network effects are negative).

\subsubsection{Optimal interim allocation} \label{sec:interim}

Previously, we have dealt with the optimal allocation under the dominant-strategy incentive constraint \eqref{eq:IC}. In this subsection, we first establish that this allocation also solves our optimization problem under the weaker Bayesian incentive-compatibility constraint
\eq{
\label{eq:B-IC} \tag{B-IC}
\mathbb E_{\boldsymbol {\theta}_{-i}} \left[ u_i( \theta_i,\boldsymbol {\theta}_{-i} | \theta_i) \right] \geq 
\mathbb E_{\boldsymbol {\theta}_{-i}} [ u_i( \widehat \theta_i ,\boldsymbol {\theta}_{-i} | \theta_i) ] \quad \forall i, \theta_i,\widehat \theta_i,
}
where reporting the type truthfully does not have to be optimal given any of the other users' reports, but only has to maximize utility in expectation. Lemma \ref{lem:Bayes} below states that Proposition \ref{prop:optimal} extends to the optimal interim allocation.

Given a DRM $\langle q,m\rangle$, a user's type report amounts to selecting an interim allocation,
\eq{ \label{eq:expectations}
Q_i (\theta_i):= \mathbb{E}_{\boldsymbol{\theta_{-i}}} [q_i(\theta_i,\boldsymbol{\theta_{-i}})], 
\quad \mbox{ and } 
M_i (\theta_i):= \mathbb{E}_{\boldsymbol{\theta_{-i}}} [m_i(\theta_i,\boldsymbol{\theta_{-i}})] ,
}
so that $\langle \overline Q, \overline M\rangle$ corresponds to the expectations of $\langle \overline q, \overline m\rangle$. We can express the expected utility of a type-$\theta_i$ user as
\eq{ \label{eq:interim-utility}
U(\theta_i) = \mathbb E_{\boldsymbol {\theta}_{-i}} \left[ u_i( \theta_i,\boldsymbol {\theta}_{-i} | \theta_i) \right] 
= \sum_{k=1}^N Q_i^k(\theta_i) v(\theta_i,k) - M_i (\theta_i) ,
}
where $Q^k_i(\theta_i)$ is the probability that type $\theta_i$ consumes the good in a consumer set of size $k.$

\begin{lem} \label{lem:Bayes}
    In regular environments, $\langle \overline q,\overline m \rangle$ as defined in \eqref{eq:integral-m2} is the solution to the full constrained problem under the weaker constraint \eqref{eq:B-IC}.
\end{lem}

Our proof is a simple consequence of the fact that rewriting expected profit in terms of virtual values \eqref{eq:virt-value-profit} only requires the weaker \eqref{eq:B-IC}. We maximized this objective pointwise while ignoring the incentive conditions of Lemma \ref{lem:implementability}. However, these conditions are stricter than \eqref{eq:B-IC}. Hence, $\langle \overline q,\overline m \rangle$ is also optimal given the weaker Bayesian incentive constraint.

The following lemma shows that higher types pay more in expectation, and two types only have the same interim expected payment if they are indifferent over their respective contracts.
We will exploit the invertibility of the expected payment function to construct one of our indirect implementations.
Outside of the trivial cases, the good is only sometimes provided, and maybe only to some buyers. We define $\underline y$ as the lowest type to get the good for some type combination,
\eq{ \label{eq:lower-theta}
\underline y = \min_{\theta_i} \{\theta_i: \exists \boldsymbol{\theta_{-i}}:   \overline q_i (\theta_i,\boldsymbol{\theta_{-i}})=1 \} = \min_{\boldsymbol{\theta_{-i}}} \underline x (\boldsymbol{\theta_{-i}}).
}
All types $\theta_i < \underline y$ are always excluded. Incentive compatibility---i.e., Lemma \ref{lem:sc}---implies that type $\overline \theta$ must get the ``best contract" in the sense that he always consumes the good with the most preferred consumer set size among all types conditional on provision. However, there may be other types who get the same allocation for all $\boldsymbol{\theta}_{-i}$. Let the smallest of these types be
\eq{ \label{eq:upper-theta}
\overline y = \min_{\theta_i} \{\theta_i:    \overline q_i (\theta_i,\boldsymbol{\theta_{-i}})=\overline q_i (\overline \theta,\boldsymbol{\theta_{-i}})  \mbox{ and } \overline k (\theta_i,\boldsymbol{\theta_{-i}}) = \overline k (\overline \theta ,\boldsymbol{\theta_{-i}}) \forall \boldsymbol{\theta_{-i}} \},
}
which can be $\overline y=\overline \theta$, but this is not necessarily true. In optimum, all types $\theta_i \in [\overline y,\overline \theta]$ get the same contract.

\begin{lem} \label{lem:M-incr}
Consider any non-trivial economy.
The optimal interim expected transfer $\overline M_i$ is weakly increasing in $\theta_i$, and whenever $\overline M_i(x)= \overline M_i(y)$ for two types $x\neq y$, it must be that $\overline q_i(x,\boldsymbol{\theta}_{-i}) v(\theta_i,\overline k(x,\boldsymbol{\theta}_{-i})) =\overline q_i(y,\boldsymbol{\theta}_{-i}) v(\theta_i,\overline k(y,\boldsymbol{\theta}_{-i}))$ for all $\boldsymbol{\theta}_{-i}$ and $\theta_i\in [x,y]$.
\end{lem}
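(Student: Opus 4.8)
The plan is to prove the statement ``ex post'' for a fixed profile $\boldsymbol{\theta}_{-i}$ and then integrate, exploiting that $\overline M_i(\theta_i)=\mathbb{E}_{\boldsymbol{\theta}_{-i}}[\overline m_i(\theta_i,\boldsymbol{\theta}_{-i})]$. The main lever is that, for every fixed $\boldsymbol{\theta}_{-i}$, the ex-post payment $\overline m_i(\cdot,\boldsymbol{\theta}_{-i})$ is itself weakly increasing in the own type; weak monotonicity of $\overline M_i$ then follows immediately by taking the expectation of weakly increasing functions.

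To establish ex-post monotonicity I would combine the integral form \eqref{eq:integral-m2} with the structure coming from Lemmas \ref{lem:implementability} and \ref{lem:sc}. Fix $\boldsymbol{\theta}_{-i}$. By Lemma \ref{lem:sc}, as $\theta_i$ rises the buyer is first excluded, then admitted, and conditional on admission moves through weakly more preferred consumer-set sizes, so the type line is partitioned into finitely many cells of constant allocation. On each such cell the envelope identity makes the payment constant, because the marginal value $\partial v/\partial\theta$ is exactly offset by the integrand in \eqref{eq:integral-m2}. At each of the finitely many thresholds the indirect utility $U_i(\cdot,\boldsymbol{\theta}_{-i})$ is continuous, so the payment jumps by exactly the value gain at that threshold: at an interior switch from size $k'$ to the weakly preferred size $k$ the jump is $v(t_0,k)-v(t_0,k')\ge 0$, and at the admission cutoff $\underline x$ the jump equals $v(\underline x,\cdot)\ge 0$. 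Hence all jumps are upward and $\overline m_i(\cdot,\boldsymbol{\theta}_{-i})$ is weakly increasing, which delivers the first claim.

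For the second claim, suppose $\overline M_i(x)=\overline M_i(y)$ with $x<y$. Monotonicity forces $\overline M_i$ to be constant on $[x,y]$, and since $\overline m_i(x,\boldsymbol{\theta}_{-i})\le \overline m_i(y,\boldsymbol{\theta}_{-i})$ pointwise with equal expectations, the two coincide for almost every $\boldsymbol{\theta}_{-i}$; weak monotonicity then makes $\overline m_i(\cdot,\boldsymbol{\theta}_{-i})$ constant on $[x,y]$ for a.e.\ $\boldsymbol{\theta}_{-i}$, so no strictly positive jump occurs in $(x,y)$. Using that $v$ is strictly increasing in $\theta$, an admission jump $v(\underline x,\cdot)$ is zero only at the bottom type, so for $x>0$ no admission occurs in $(x,y]$ and $\overline q_i(x,\boldsymbol{\theta}_{-i})=\overline q_i(y,\boldsymbol{\theta}_{-i})$; moreover any interior size change must be value-neutral, $v(t_0,k)=v(t_0,k')$, which by the assumption that value network effects share the same sign across types (so the gap $v(\cdot,k)-v(\cdot,k')$ is sign-constant, hence identically zero once zero somewhere) yields $v(\theta_i,\overline k(x,\boldsymbol{\theta}_{-i}))=v(\theta_i,\overline k(y,\boldsymbol{\theta}_{-i}))$ for all $\theta_i$. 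Together these give $\overline q_i(x,\boldsymbol{\theta}_{-i})\,v(\theta_i,\overline k(x,\boldsymbol{\theta}_{-i}))=\overline q_i(y,\boldsymbol{\theta}_{-i})\,v(\theta_i,\overline k(y,\boldsymbol{\theta}_{-i}))$ for every $\boldsymbol{\theta}_{-i}$ and in particular for $\theta_i\in\{x,y\}$, which is the asserted contract indifference.

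The step I expect to be the main obstacle is controlling the sign of the consumption value at the admission margin. The upward payment jump at the cutoff $\underline x$ relies on the entering type having nonnegative consumption value $v(\underline x,\cdot)\ge 0$; if the model genuinely permitted a negative value there (an admitted but subsidized buyer retained only to boost others through positive network effects), the entry jump would be downward and both monotonicity and the ``no interior admission'' argument could break. I would therefore make explicit, and use, that consumption values are nonnegative on the admitted range, and lean on the strict monotonicity $\partial v/\partial\theta>0$ to confine value-neutral admission to the boundary type $\theta_i=0$. The second delicate point is passing from value coincidence of two consumer-set sizes at a single threshold type to coincidence at every type: this is exactly where the same-direction-of-network-effects assumption is indispensable, since without it two sizes could be value-equivalent at one type but not at others, and the equality would survive only at $\theta_i=x$ rather than at both $\theta_i\in\{x,y\}$.
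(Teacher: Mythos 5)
Your proposal is correct and follows essentially the same route as the paper: establish that the ex-post payment given by the integral form is weakly increasing in the own type for every fixed $\boldsymbol{\theta}_{-i}$ (piecewise constant with nonnegative jumps at the admission cutoff and at consumer-set-size switches, the latter nonnegative by Lemma \ref{lem:sc}), then take expectations, and read the indifference claim off the same formula. You in fact supply details the paper elides — notably the passage from equal \emph{interim} payments to equal ex-post payments via the pointwise ordering, and the role of $v(0,k)=0$ with $\partial v/\partial\theta_i>0$ in ruling out a value-neutral admission jump — so the write-up is, if anything, more complete than the paper's own one-paragraph argument.
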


In an \textbf{all-pay contribution mechanism}, each user $i$ selects a price $p_i\in [0,\infty)$, and then an allocation decision is taken based on the price vector $\mathbf p = (p_i)_{i \in \mathcal N}$, while all users have to pay their selected price independent of the allocation decision. Formally, the action set of any all-pay contribution mechanism is a set of prices $\mathcal A_i=P_i=[0,\infty)$ so that $P =(P_i)_{i\in \mathcal N}$ , and the outcome is pinned down by an outcome function $g$ such that for any $\mathbf p$ and any user $i$, $g_i(\mathbf p) \in \{(0,p_i),(1,p_i)\}$. A (pure) strategy for a user $i$ in the all-pay contribution game induced by such a mechanism is a price-selection function $\rho_i: \Theta_i \to \mathcal A_i$ that maps a type into a price, and a strategy profile consists of all users' strategies ${\rho}=(\rho_i)_{i \in \mathcal N}.$ The following statement is about a specific all-pay contribution mechanism $g^*$ and a specific equilibrium $\rho^*$.

\begin{prop} \label{prop:allpay}
There is a  Bayesian Nash equilibrium in an all-pay contribution mechanism that implements the optimal allocation, $\langle  P, g^*,{\rho}^* \rangle$.
\end{prop}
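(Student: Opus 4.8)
The plan is to let the optimal interim expected payment $\overline M_i$ serve as a ``reporting coordinate.'' In the direct mechanism $\langle \overline q,\overline m\rangle$, choosing a type report is equivalent to selecting the interim pair $(\overline Q_i(\theta_i),\overline M_i(\theta_i))$, so I would let a user of type $\theta_i$ reveal himself by paying exactly his optimal interim transfer, i.e.\ set the candidate equilibrium to $\rho_i^*(\theta_i)=\overline M_i(\theta_i)$. To turn a price vector back into an allocation, I would have $g^*$ decode each price into a type and then apply $\overline q$. By Lemma~\ref{lem:M-incr}, $\overline M_i$ is weakly increasing, and since it is an interim expectation over the atomless $\boldsymbol\theta_{-i}$ it is continuous; combined with Lemma~\ref{lem:M-incr} (payment-equal types are allocation-equivalent) this gives $\overline M_i$ the clean shape of being $0$ on $[0,\underline y]$, strictly increasing on $[\underline y,\overline y]$, and constant at $\overline M_i(\overline\theta)$ on $[\overline y,\overline\theta]$. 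Its range is thus the gap-free interval $[0,\overline M_i(\overline\theta)]$, so I can define $\overline M_i^{-1}(p):=\min\{\theta:\overline M_i(\theta)\ge p\}$ for $p\le \overline M_i(\overline\theta)$ and $\overline M_i^{-1}(p):=\overline\theta$ otherwise, and set $g_i^*(\mathbf p)=\big(\overline q_i(\overline M_1^{-1}(p_1),\dots,\overline M_N^{-1}(p_N)),\,p_i\big)$, which charges every user his own price as required by the all-pay format.

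I would first check on-path correctness. Under $\rho^*$, user $j$'s price $\overline M_j(\theta_j)$ decodes to $\theta_j$ wherever $\overline M_j$ is strictly increasing, and on its flat parts to an allocation-equivalent type (Lemma~\ref{lem:M-incr}); either way the reconstructed profile yields exactly $\overline q(\boldsymbol\theta)$ for every $\boldsymbol\theta$. Hence the realized allocation is the optimal one and each user pays $\overline M_i(\theta_i)$, so the induced interim allocation is precisely $\langle \overline Q,\overline M\rangle$ and the creator's expected profit equals the optimum.

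The equilibrium verification is the heart of the argument. Fixing the opponents at $\rho^*_{-i}$, their prices decode to their true types, so a user of type $\theta_i$ who submits $p_i$ faces expected allocation $\overline Q_i(\overline M_i^{-1}(p_i))$ and sure payment $p_i$. For any in-range $p_i$, writing $\theta_i':=\overline M_i^{-1}(p_i)$ and using \eqref{eq:interim-utility}, his deviation payoff equals $\sum_k \overline Q_i^k(\theta_i')v(\theta_i,k)-\overline M_i(\theta_i')$, which is exactly the interim payoff of reporting $\theta_i'$ in the direct mechanism; by the Bayesian incentive compatibility of $\langle\overline q,\overline m\rangle$ established in Proposition~\ref{prop:Bayes}, this is maximized at $\theta_i'=\theta_i$, i.e.\ at $p_i=\overline M_i(\theta_i)$. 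For any out-of-range $p_i>\overline M_i(\overline\theta)$, the user obtains type $\overline\theta$'s interim allocation but pays strictly more than $\overline M_i(\overline\theta)$, so such a deviation is strictly dominated by the (already considered, weakly unprofitable) in-range price $\overline M_i(\overline\theta)$. No profitable deviation remains, so $\rho^*$ is a Bayesian Nash equilibrium.

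I expect the main obstacle to be the bookkeeping around inverting $\overline M_i$: making $g^*$ well defined despite the flat parts (several types sharing one price) and ruling out deviations to prices that no type uses on path. The flat-part ambiguity is neutralized by Lemma~\ref{lem:M-incr}, which certifies that payment-equivalent types are allocation-equivalent, and the out-of-range (overpayment) deviations are killed by the all-pay feature. Establishing continuity of $\overline M_i$ cleanly --- so that the range has no gaps that a deviator could exploit by underpaying for a higher-type allocation --- is the one technical point I would be most careful about; once it is in hand, the economic core, namely that an in-range price deviation is just a type misreport and hence unprofitable by \eqref{eq:B-IC}, is immediate.
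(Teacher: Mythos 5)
Your overall architecture --- use $\overline M_i$ as the reporting coordinate, decode prices back into types, and reduce every in-range price deviation to a type misreport that \eqref{eq:B-IC} rules out --- is exactly the paper's argument. But there is a genuine gap at the one point you flagged as the technical crux: the claim that $\overline M_i$ is continuous is false. Continuity would follow if the ex-post cutoffs $\underline x(\boldsymbol\theta_{-i})$ and $x_j(\boldsymbol\theta_{-i})$ were nondegenerate in $\boldsymbol\theta_{-i}$, but they are typically constant over positive-measure sets of opponent types (e.g.\ in Panel \ref{subfig:noNE} buyer $1$'s cutoff is $\nicefrac{3}{4}$ for \emph{all} $\theta_2<\nicefrac{1}{2}$), so the interim allocation $Q_i^k$ and hence $\overline M_i$ jump upward at such types. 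The paper's own Figure \ref{fig:MQ} displays these interior discontinuities (the dotted vertical segments, e.g.\ at $\theta_i=y=\nicefrac{7}{10}$ for $\pi=\nicefrac{5}{8}$, where $\overline M_i$ jumps from about $0.04$ to about $0.32$). Consequently the range of $\overline M_i$ has gaps, and your decoder $\overline M_i^{-1}(p)=\min\{\theta:\overline M_i(\theta)\ge p\}$ assigns to a gap price the allocation of the type just above the jump. That is fatal: by the envelope formula, $U$ is continuous, so the jump type $\theta_0$ is exactly indifferent between his own contract and the limiting contract from above \emph{at the designated payments}; a deviation to a gap price $p<\overline M_i(\theta_0^+)$ delivers that better allocation at a strict discount and yields payoff $U(\theta_0)+(\overline M_i(\theta_0^+)-p)>U(\theta_0)$. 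So under your $g^*$ the proposed $\rho^*$ is not a Bayesian Nash equilibrium.

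The repair is the one the paper adopts: do not let off-path prices buy anything. Assign $(0,p_i)$ to every price that is not selected by some type under $\rho^*$ (in particular to all prices in the interior gaps of the range of $\overline M_i$, not only to prices outside $[\overline M_i(\underline y),\overline M_i(\overline y)]$), and invert $\overline M_i$ only on its actual range, using Lemma \ref{lem:M-incr} to resolve the flat parts. Then every feasible deviation is either a type misreport (unprofitable by incentive compatibility of $\langle\overline q,\overline m\rangle$) or a sure payment with no allocation (weakly dominated by $p_i=0$), and the rest of your argument goes through unchanged.
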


The idea behind the construction of $g^*$ is simple. Essentially, it follows a reverse revelation principle. We know that the optimal DRM $\langle \overline q, \overline m\rangle$ is Bayesian incentive compatible and that $\overline M$ is invertible (it is either strictly increasing or, where it is constant, gives all types in this region the same contract). Hence, the game induced by outcome function 
\eq{
 g^*_i(\mathbf{p}) = ( q^*_i (\mathbf{p}),p_i) =
\begin{cases} 
(\overline q_i \big(\overline{\mathbf{M}}^{-1} (\mathbf{p})),p_i \big) \quad &\mbox{ if }
 p_i \in [\overline M_i(\underline y),\overline M_i(\overline y)], \\
(0,p_i) \quad &\mbox{ otherwise }
\end{cases}
}
has a Bayesian Nash equilibrium in which any type $\theta_i$ selects price $\rho^*_i(\theta_i) = \overline M_i(\theta_i)$ to get the following expected payoff
\ea{
u_i (\rho^*_i(\theta_i),\theta_i) &= \mathbb{E}_{\boldsymbol{\theta}_{-i}} \left[ q^*_i \big(\rho^*_{i}(\theta_i),{\rho^*}_{-i}(\boldsymbol{\theta}_{-i})\big) v \big(\theta_i,  k^*(\rho^*_{i}(\theta_i),\rho^*_{-i}(\boldsymbol{\theta}_{-i})\big) - \rho^*_i(\theta_i))
\right] \\
&= \mathbb{E}_{\boldsymbol{\theta}_{-i}} \left[ \overline q_i (\theta_{i},\boldsymbol{\theta}_{-i}) v (\theta_i, \overline k(\theta_i, \boldsymbol{\theta}_{-i})) - \overline m_i(\theta_{i},\boldsymbol{\theta}_{-i}))
\right],
}
where the incentive compatibility of $\langle \overline q,\overline m\rangle$ ensures that no type $\theta_i$ finds it profitable to deviate to selecting a price designated to another type. If user $i$ chooses a  price outside of the range of the interval $[\overline M_i(\underline y),\overline M_i(\overline y)],$ he never gets the good such that these deviations are also not profitable.
Importantly, as is customary in mechanism design, we allow the designer to select her preferred equilibrium $\rho^*$, and we do not worry about equilibrium multiplicity in general or possible bad equilibria more specifically. Settings with network externalities are prone to equilibrium multiplicity, see, e.g., \cite{halac2024}.

\textbf{Cost externality$\to$ voluntary payments:} Self-selected contributions are a common feature in monetization schemes in the creator economy, e.g., ``cheering" on Twitch or ``donations" on other platforms. Here, a high-type user volunteers to pay more than other users to jointly consume exactly the same content. Through the lens of our model, the rationale behind this behavior is  to increase the probability that the good is provided. That is, next to the transfer, also the threat of not producing is used to incentivize users.  The cost externality discussed in the previous section is the reason for a violation of a ``law of one price." In the benchmark of \cite{myerson1981}, there is no joint consumption and thus no cost externality such that the above implementation corresponds to an all-pay auction that awards the good only to the highest bidder. Under high rivalry and excludability, a user selects a higher price to increase his own allocation probability and to decrease the others'. In contrast, under low rivalry and excludability, a user selects a higher price to increase both his own and the others' allocation probability.

Often users only have to pay their contribution when they actually consume the good. In some settings, a similar construction is possible, in which the user only has to pay if he is a member of the audience, i.e., we reverse engineer the optimal allocation and payments through the equation $M_i(\theta_i)=Q_i(\theta_i) p_i$ rather than $M_i(\theta_i)= p_i$. For  instance, \cite{cornelli1996} suggests such a scheme or, as another example, the optimal allocation in \cite{myerson1981} can be implemented with a first-price auction rather than an all-pay auction (or the second-price auction essentially suggested by Proposition \ref{prop:optimal}). However, such an implementation is not always feasible as $\nicefrac{M_i}{Q_i}$ may not be invertible.

For the reason above, pure all-pay contribution mechanisms may appear unconventional. However, we can further tweak the payment rule to capture realistic features. Although the following two examples are not special cases of all-pay contribution mechanisms, the way how we decentralize the allocation mechanism is an offspring of the idea that self-selected contributions determine the allocation, where we just add more meaning to these contributions.

\begin{figure}[h!]
\captionsetup[subfloat]{farskip=30pt,captionskip=20pt}
\begin{center}
\subfloat{
\adjustbox{valign=b}{
\label{subfig:M_benchmark}
\scalebox{0.65}{
\begin{tikzpicture}[
declare function={
    M1(\x)= 25/128+\x^2/2;
    M0(\x)= -(1/128)+\x^2/2;
  }
  ]
\begin{axis}[%
            domain=0:1,
            yscale=1,
            axis lines=middle,
            xlabel={\large $\theta_i$},
            ylabel={\large $M_i$},
                       ymin=0,
            ymax=0.75,
            xmin=-0.1,
            xmax=1.1,
           samples=10,
             xlabel style={below right},
              xtick={0,1/8,5/8,1},
    xticklabels={0,$x_{\pi=0}$,$x_{\pi=1}$,1},
        ]
     \addplot [thick,domain=5/8:1] {M1(x)};
          \draw [dotted] (72.5,28) -- (72.5,400);
          \addplot [thick,domain=1/8:1] {M0(x)};
    \end{axis}
\end{tikzpicture}
}
}} 
\subfloat{
\adjustbox{valign=b}{
\label{subfig:Q_benchmark}
\scalebox{0.65}{
\begin{tikzpicture}[
declare function={
    Q1(\x)= \x;
     Q0(\x)= -(1/8)+\x;
  }
  ]
\begin{axis}[%
            domain=0:1,
            yscale=1,
            axis lines=middle,
            xlabel={\large $\theta_i$},
            ylabel={\large $Q_i$},
                       ymin=0,
            ymax=1,
            xmin=-0.1,
            xmax=1.1,
           samples=10,
             xlabel style={below right},
               xtick={0,1/8,5/8,1},
    xticklabels={0,$x_{\pi=0}$,$x_{\pi=1}$,1},
        ]
     \addplot [thick,domain=5/8:1] {Q1(x)};
     \addplot [thick,dashed,domain=1/8:1] {Q0(x)};
          \draw [dotted] (72.5,0) -- (72.5,62);
          
    \end{axis}
\end{tikzpicture}
}
}}

\subfloat{
\adjustbox{valign=b}{
\label{subfig:M_neg}
\scalebox{0.65}{
\begin{tikzpicture}[
declare function={
    Mc(\x)= ((900*\x*\x-361)/1920;
   Mc2(\x)= (700*\x*\x+269)/1920;
   Mc3(\x)= (1/480)*(40*(\x*\x)+161);
  }
  ]
\begin{axis}[%
            domain=0:1,
            yscale=1,
            axis lines=middle,
            xlabel={\large $\theta_i$},
            ylabel={\large $M_i$},
                       ymin=0,
            ymax=0.75,
            xmin=-0.1,
            xmax=1.1,
           samples=10,
             xlabel style={below right},
              xtick={0,19/30,7/10,5/6,1},
    xticklabels={0,$x$,$y$,$z$,1},
        ]
     \addplot [thick,domain=19/30:7/10-0.00001] {Mc(x)};
     \addplot [thick,domain=7/10:5/6-0.00001] {Mc2(x)};
     \addplot [thick,domain=5/6:1] {Mc3(x)};
          \draw [dotted] (80,50) -- (80,320);
    \end{axis}
\end{tikzpicture}
}
}}
\subfloat{
\adjustbox{valign=b}{
\label{subfig:Q_neg}
\scalebox{0.65}{
\begin{tikzpicture}[
declare function={
    Q1a(\x)=0*\x;
    Q1bc(\x)=1/6*(1+4*\x);
    Q2a(\x) =-(19/12)+(5*\x)/2;
    Q2b(\x) = 5/12*(-1+2*\x);
    Q2c(\x) = 1/6*(5-4*\x);
  }
  ]
\begin{axis}[%
            domain=0:1,
            yscale=1,
            axis lines=middle,
            xlabel={\large $\theta_i$},
            ylabel={\large $Q_i$},
                       ymin=0,
            ymax=1,
            xmin=-0.1,
            xmax=1.1,
           samples=10,
             xlabel style={below right},
              xtick={0,19/30,7/10,5/6,1},
    xticklabels={0,$x$,$y$,$z$,1},
        ]
     \addplot [thick,domain=19/30:7/10-0.00001] {Q1a(x)};
     \addplot [thick,domain=7/10:5/6-0.00001] {Q1bc(x)};
     \addplot [thick,domain=5/6:1] {Q1bc(x)};
     \addplot [thick,dashed,domain=19/30:7/10-0.00001] {Q2a(x)};
     \addplot [thick,dashed,domain=7/10:5/6-0.00001] {Q2b(x)};
     \addplot [thick,dashed,domain=5/6:1] {Q2c(x)};
     \addplot [thick,dash dot,domain=7/10:1] {1};
          \draw [dotted] (80,0) -- (80,63);
          \draw [dotted] (93,0) -- (93,28);
    \end{axis}
\end{tikzpicture}
}
}}

\subfloat{
\adjustbox{valign=b}{
\label{subfig:M_pos}
\scalebox{0.65}{
\begin{tikzpicture}[
declare function={
    Ma(\x)= ((1/128)*(100*\x*\x-9);
   Ma2(\x)= (5*\x*\x)/16-7/960;
   Ma3(\x)= (1/480)*(24*(\x*\x)+139);
  },
  ]
\begin{axis}[%
            domain=0:1,
            yscale=1,
            axis lines=middle,
            xlabel={\large $\theta_i$},
            ylabel={\large $M_i$},
            ymin=0,
            ymax=0.75,
            xmin=-0.1,
            xmax=1.1,
            samples=166,
            xlabel style={below right},
            xtick={0,3/10,11/30,5/6,1},
    xticklabels={0,$x$,$y$,$z$,1},
        ]
     \addplot [thick,domain=3/10:11/30-0.00001] {Ma(x)};
     \addplot [thick,domain=11/30:5/6-0.00001] {Ma2(x)};
     \addplot [thick,domain=5/6:1] {Ma3(x)};
     \draw [dotted] (93.5,220) -- (93.5,325);
    \end{axis}
\end{tikzpicture}
}
}}
 \subfloat{
\adjustbox{valign=b}{
\label{subfig:Q_pos}
\scalebox{0.65}{
\begin{tikzpicture}[
declare function={
    Q2a(\x)= -(3/4)+(5*\x)/2;
    Q2b(\x)= -(1/5)+\x;
    Q2c(\x)= 1+1/10*(-7+4*\x);
    Q1ab(\x)= 0*\x;
    Q1c(\x)= 1/10*(7 - 4*\x);
  },
  ]
\begin{axis}[%
            domain=0:1,
            yscale=1,
            axis lines=middle,
            xlabel={\large $\theta_i$},
            ylabel={\large $Q_i$},
            ymin=0,
            ymax=1,
            xmin=-0.1,
            xmax=1.1,
            samples=166,
            xlabel style={below right},
            xtick={0,3/10,11/30,5/6,1},
    xticklabels={0,$x$,$y$,$z$,1},
        ]
     \addplot [thick,domain=3/10:5/6] {Q1ab(x)};
     \addplot [thick,domain=5/6:1] {Q1c(x)};
     \addplot [thick,dashed,domain=3/10:11/30-0.00001] {Q2a(x)};
     \addplot [thick,dashed,domain=11/30:5/6-0.00001] {Q2b(x)};
     \addplot [thick,dashed,domain=5/6:1] {Q2c(x)};
     \addplot [thick,dash dot,domain=5/6:1] {1};
               \draw [dotted] (47,0) -- (47,16);
          \draw [dotted] (93,0) -- (93,63);
    \end{axis}
\end{tikzpicture}
}
}}
  \caption{Interim expected payments (left) and allocation probabilities (right, dashed: consumption alone, through: jointly, dash-dotted: total) for settings with varying $\pi$ (1st: $\pi=0$ and $\pi=1$, 2nd: $\pi=\nicefrac{5}{8}$, 3rd: $\pi=\nicefrac{3}{8}$). The corresponding optimal allocations are depicted in Figure \ref{fig:benchmarks}.}\label{fig:MQ}
\end{center}
\end{figure}
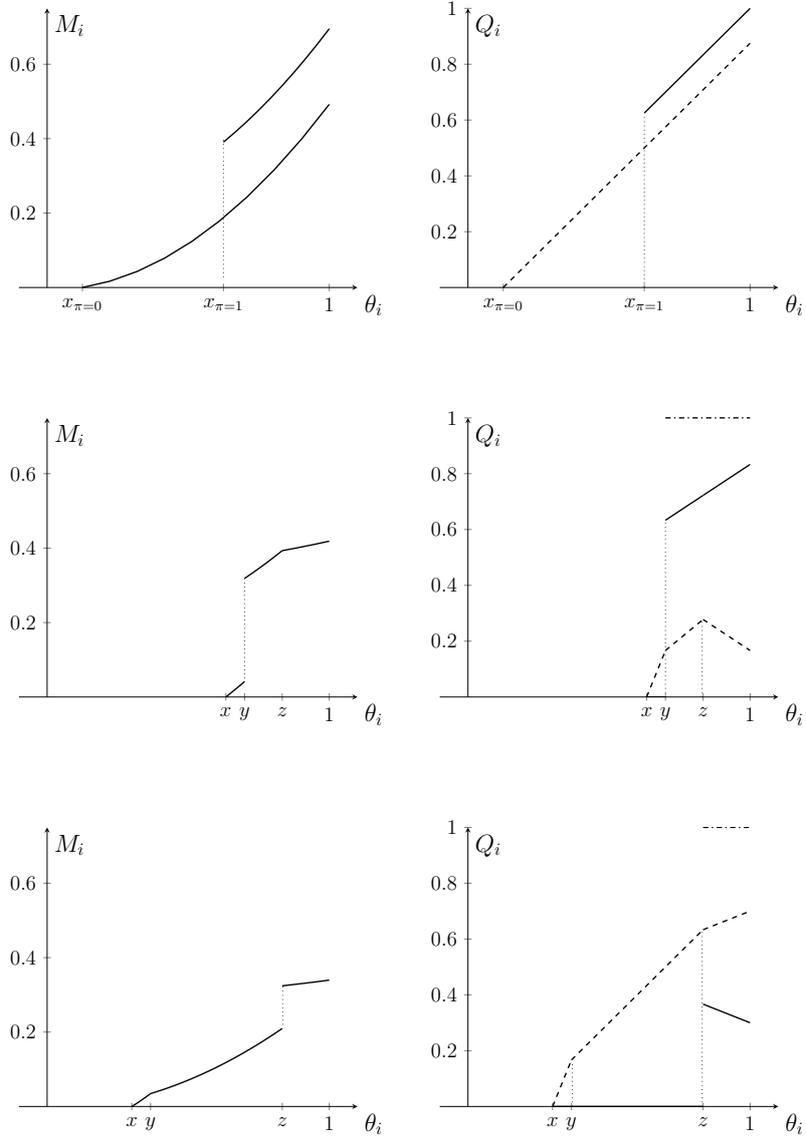

\begin{figure}[h!]

\captionsetup[subfloat]{farskip=30pt,captionskip=20pt}
\renewcommand{\arraystretch}{1.2}

\begin{center}

\subfloat[Implementation with voluntary payments and community gifts in a setting with positive network effects.]{ 
\adjustbox{valign=b}{
\label{subfig:implement_pos} 
\scalebox{0.85}{
\begin{tabular}{p{0.1\textwidth}p{0.15\textwidth}cp{0.2\textwidth}p{0.15\textwidth}}
type $\theta_i \in$ &action $(p_i,s_i)$ &&& outcome $(q_i,k,m_i)$\\
\hline
[0,x]  & (0,0)                                     &\tikzmark{posa1}&               &  \tikzmark{posa2} (0,$\cdot$,0)  \\
&&&&\\
(x,y]  & $(\underline p,s_i)$                                 & \tikzmark{posb1}&  \tikzmark{posb2} $s_j<s(s_i)$ \tikzmark{posb3} &  \tikzmark{posb4} $(0,\cdot,0)$  \\
          & $s_i<0$            &               & \tikzmark{posb5} $s_j\geq s(s_i)$ \tikzmark{posb6} & \tikzmark{posb7} ($1,2,\underline p+s_i$)  \\
&&&&\\
$(y,z]$  & $(p_i,0)$                                 &\tikzmark{posc1} &\tikzmark{posc2} $p_j < p(p_i)$ \tikzmark{posc3}  &  \tikzmark{posc4} $(0,\cdot,p_i$)  \\
          & $p_i \in (\underline{p},\overline{p})$ &                    &\tikzmark{posc5}  $p_j\geq p(p_i)$ \tikzmark{posc6} & \tikzmark{posc7} ($1,2,p_i$)  \\
&&&&\\
$(z,1]$  & $(\overline{p},s_i)$                         & \tikzmark{posd1} & \tikzmark{posd2} $s_i < s(s_j)$ \tikzmark{posd3} &  \tikzmark{posd4} $(1,1,\overline p+s_i$)  \\
          & $s_i >0$       &              &\tikzmark{posd5}  $s_i\geq s(s_j)$ \tikzmark{posd6} & \tikzmark{posd7} $(1,2,\overline p+s_i$)  \\
\end{tabular}
\begin{tikzpicture}[overlay, remember picture, shorten >=.5pt, shorten <=.5pt, transform canvas={yshift=.25\baselineskip}]
\draw [->] ([yshift=0pt]{pic cs:posa1}) -- ({pic cs:posa2});
\draw [-] ([yshift=-10pt]{pic cs:posb1}) -- ([yshift=-5pt]{pic cs:posb2}) ;
\draw [->] ([yshift=-2pt]{pic cs:posb3}) -- ([yshift=-2pt]{pic cs:posb4}) ;
\draw [-] ([yshift=-10pt]{pic cs:posb1}) -- ({pic cs:posb5}) ;
\draw [->] ([yshift=-2pt]{pic cs:posb6}) -- ([yshift=-2pt]{pic cs:posb7}) ;
\draw [-] ([yshift=-10pt]{pic cs:posc1}) -- ([yshift=-5pt]{pic cs:posc2}) ;
\draw [->] ([yshift=-2pt]{pic cs:posc3}) -- ([yshift=-2pt]{pic cs:posc4}) ;
\draw [-] ([yshift=-10pt]{pic cs:posc1}) -- ({pic cs:posc5}) ;
\draw [->] ([yshift=-2pt]{pic cs:posc6}) -- ([yshift=-2pt]{pic cs:posc7}) ;
\draw [-] ([yshift=-10pt]{pic cs:posd1}) -- ([yshift=-5pt]{pic cs:posd2}) ;
\draw [->] ([yshift=-2pt]{pic cs:posd3}) -- ([yshift=-2pt]{pic cs:posd4}) ;
\draw [-] ([yshift=-10pt]{pic cs:posd1}) -- ({pic cs:posd5}) ;
\draw [->] ([yshift=-2pt]{pic cs:posd6}) -- ([yshift=-2pt]{pic cs:posd7}) ;
\end{tikzpicture}
}
}}

\subfloat[Implementation with subscription fees and exclusivity bids in a setting with negative network effects.]{ 
\adjustbox{valign=b}{
\label{subfig:implement_neg} 
\scalebox{0.85}{
\begin{tabular}{p{0.1\textwidth}p{0.15\textwidth}c p{0.2\textwidth}p{0.15\textwidth}}
type $\theta_i \in$ &action $(p_i,s_i)$ &&& outcome $(q_i,k,m_i)$\\
\hline
[0,x]  & (0,0)                                     &\tikzmark{nega1}&               &  \tikzmark{nega2} (0,$\cdot$,0)  \\
&&&&\\
(x,y]  & $p_i \in (0, \overline{p}]$                                 & \tikzmark{negb1}&  \tikzmark{negb2} $b_j<\beta_1(p_i,p_j)$ \tikzmark{negb3} &  \tikzmark{negb4} $(1,2,p_i)$  \\
          & $b_i =0$            &               & \tikzmark{negb5} otherwise \tikzmark{negb6} & \tikzmark{negb7} ($0,\cdot,p_i$)  \\
&&&&\\
&                      &                    &\tikzmark{negc8}  $b_j\geq \overline \beta_2(b_i,p_j)$ \tikzmark{negc9} & \tikzmark{negc10} ($0,1,\overline p$)  \\
$(y,z]$  & $p_i= \overline{p}$                                 &\tikzmark{negc1} &\tikzmark{negc2} $b_i > \underline \beta_2(b_j,p_j)$ \tikzmark{negc3}  &  \tikzmark{negc4} $(1,1,\overline p + b_i$)  \\
          & $b_i  > 0$ &                    &\tikzmark{negc5}  otherwise \tikzmark{negc6} & \tikzmark{negc7} ($1,2, \overline p$)  \\
&&&&\\
$(z,1]$  & $p_i = \overline{\overline{p}}$      & \tikzmark{negd1} & \tikzmark{negd2} $b_i >\beta_3(b_j,p_j)$ \tikzmark{negd3} &  \tikzmark{negd4} $(1,1,\overline{\overline p}+b_i$)  \\
          & $b_i >0$       &              &\tikzmark{negd5}  otherwise \tikzmark{negd6} & \tikzmark{negd7} $(1,2,\overline  {\overline p}$)  \\
\end{tabular}
\begin{tikzpicture}[overlay, remember picture, shorten >=.5pt, shorten <=.5pt, transform canvas={yshift=.25\baselineskip}]
\draw [->] ([yshift=0pt]{pic cs:nega1}) -- ({pic cs:nega2});
\draw [-] ([yshift=-10pt]{pic cs:negb1}) -- ([yshift=-5pt]{pic cs:negb2}) ;
\draw [->] ([yshift=-2pt]{pic cs:negb3}) -- ([yshift=-2pt]{pic cs:negb4}) ;
\draw [-] ([yshift=-10pt]{pic cs:negb1}) -- ({pic cs:negb5}) ;
\draw [->] ([yshift=-2pt]{pic cs:negb6}) -- ([yshift=-2pt]{pic cs:negb7}) ;
\draw [-] ([yshift=0pt]{pic cs:negc1}) -- ([yshift=-0pt]{pic cs:negc2}) ;
\draw [->] ([yshift=-0pt]{pic cs:negc3}) -- ([yshift=-0pt]{pic cs:negc4}) ;
\draw [-] ([yshift=-0pt]{pic cs:negc1}) -- ({pic cs:negc5}) ;
\draw [->] ([yshift=-0pt]{pic cs:negc6}) -- ([yshift=-0pt]{pic cs:negc7}) ;
\draw [-] ([yshift=0pt]{pic cs:negc1}) -- ([yshift=-0pt]{pic cs:negc8}) ;
\draw [-] ([yshift=0pt]{pic cs:negc9}) -- ([yshift=-0pt]{pic cs:negc10}) ;
\draw [-] ([yshift=-10pt]{pic cs:negd1}) -- ([yshift=-5pt]{pic cs:negd2}) ;
\draw [->] ([yshift=-2pt]{pic cs:negd3}) -- ([yshift=-2pt]{pic cs:negd4}) ;
\draw [-] ([yshift=-10pt]{pic cs:negd1}) -- ({pic cs:negd5}) ;
\draw [->] ([yshift=-2pt]{pic cs:negd6}) -- ([yshift=-2pt]{pic cs:negd7}) ;
\end{tikzpicture}
}
}}

 \caption{Implementations for settings with positive (a) and negative (b) value network effects. 
 See the appendix for details on Example (a) and (b).}\label{fig:implementation}

\end{center}
\end{figure}

\textbf{Positive network effects$\to$ community gifts:} Twitch also allows gifted subscriptions (``community subs") by one user to others. In our model, users can have an intrinsic preference for joint consumption with a larger audience such that there is an implicit motive to subsidize other users. Figure \ref{subfig:implement_pos} illustrates how to employ subsidies (such as community gifts) in the 2-user environment with positive value network effects depicted in Panel \ref{subfig:smallPos}.
Here, a user's action is to choose $(p_i,s_i)$, a price $p_i$ and a subsidy $s_i$.
In the appendix, we jointly construct the outcome function and the equilibrium to implement the optimal allocation, where the types are partitioned into the regions $[0,x], (x,y], (y,z],$ and $(z,1],$ which are depicted in Panel \ref{subfig:M_pos}. 

Low-type users $\theta_i <x$ abstain by selecting $p_i=s_i=0$. Low intermediate types $\theta_i \in (x,y]$ select the minimum price $p_i=\underline p$ and request a price reduction, $s_i<0$. They can only consume the content jointly with the other user $j$ at this reduced price if user $j$ pays a sufficiently large subsidy $s_j$. Otherwise, user $i$ does not consume and does not have to pay.
High intermediate types $\theta_i \in (y,z]$ select a regular price $p_i\in(\underline p,\overline p]$ and no subsidy, $s_i=0$.
They can consume the content jointly with the other user $j$ if and only if this user selects a sufficiently large price $p_j$. Otherwise, user $i$ does not consume, but pays the price $p_i$ regardless. 
Finally, high-type users $\theta_i>z$ select the maximal price $\overline p$ and also a community subsidy, $s_i>0$. They always get to consume the good and have to pay $\overline p+s_i$ in any case. They consume jointly with the other user $j$ if the selected subsidy $s_i$ exceeds a threshold determined by the other user's choice $s_j.$

\textbf{``Altruistic" features:} 
To put it in a nutshell, our model can rationalize two elements of monetization schemes in the creator economy, voluntary payments (``donations") and community support features. 
At first glance, the efficacy of these features appears to be driven by generosity. However, both implementation features can work perfectly fine with purely self-interested agents when two important aspects of the creator economy are accounted for:
First, the non-rivalry of digital content with production costs entails that a user may want to pay more to increase the probability that the content is provided (or, alternatively, that the creator can continue her career rather than leaving the creator economy for a job in another industry). Second, a user can benefit from subsidizing other users to garner positive value network effects. Although we do not deny the importance (or existence) of altruism in small digital communities, we believe our model contributes to a better understanding of the full picture.

\textbf{Negative network effects$\to$ exclusivity bids:}
Some digital content is provided in a more exclusive fashion that is more akin to standard private-good provision. For example, a chess streamer may discuss games in front of a larger audience of subscribers but may also offer to give private and exclusive feedback on a subscriber's own game.\footnote{Note that our framework also allows to see private lessons and crowd lessons as two distinct products with different values (via $v$) and different costs (via $\varphi$).} As an alternative example, a live-streamer offering adult content may perform in front of a larger audience, but may change to a more private setting for additional payments. Figure \ref{subfig:implement_neg} illustrates how to employ such exclusivity bids in the 2-user environment with negative value network effects depicted in Panel \ref{subfig:smallNeg}.
Here, a user's action is to choose $(p_i,b_i)$, a price $p_i$ and an exclusivity bid $b_i$. Prices always have to be paid and divide users into three subscription tiers, while bids only have to be paid when they are successful in excluding the other user.
In the appendix, we jointly construct the outcome function and the equilibrium to implement the optimal allocation, where the types are again partitioned into the regions that are marked in the third row of Figure \ref{fig:MQ}.

Low-type users $\theta_i < x$  abstain by selecting $p_i=b_i=0$. 
Low intermediate types $\theta_i \in (x,y]$ select a flexible price $p_i \in (0,\overline p]$, but submit no exclusivity bid. They are ``non-subscribers" who can only consume the content in a group, which happens if the other user's exclusivity bid does not exceed a threshold. High intermediate types $\theta_i \in (y,z]$ are ``subscribers," who pay a fixed fee $p_i=\overline p$ and also submit an exclusivity bid $b_i>0$. While a sufficiently large bid can remove non-subscribers from the audience, they can also be kicked out of the audience by a ``premium subscriber" with a sufficiently large bid. Such users are types $\theta_i>z$, who pay the highest subscription fee $\overline{\overline{p}}$. They always consume the content and do so alone if their exclusivity bid is sufficiently high.
For $\pi$ so large that the good is essentially private, this procedure becomes a first-price auction with an entry fee.

\textbf{Large markets:}
Because we focus on how to finance the careers of smaller content creators, oligopsony is a central ingredient of this paper. In small markets, each individual user can have a strategic impact.
However, as discussed in the introduction, this industry  has also spawned superstars with very large audiences. A natural question is whether the optimal monetization of such creators' content is structurally different.
For example, many large video platforms impose an identical cost on every user to consume the cost by showing the same amount of ads to all users (and only those with a sufficiently large value watch the add to consume to content) rather than screening types. Such a monetization scheme can be optimal in large markets.

The next result shows that if positive network effects converge as the market (and audience) grows large, the optimal selling procedure becomes a simple posted price: all users whose value (in the large-audience limit) exceeds the price accept it to consume the content with certainty, while all others do not get the good and do not pay anything.
Here, not only the network externalities disappear due to the market size, but also the cost externality vanishes. In a large market, the fixed production cost becomes negligible and, consequently, so does a single user's impact on the production decision. 

\begin{lem} \label{lem:large}
Consider the large-market limit as $N\to \infty.$
Suppose $\varphi$ and $\psi(\theta_i,\cdot)$ are increasing and concave in $k$ for all $\theta_i$.
There is a cutoff $p$ such that $\lim_{N \to \infty } \overline q_i (\boldsymbol{\theta}) = \mathbbm{1}_{\theta_i \geq p}$ for all $\boldsymbol{\theta}$. That is, the optimal allocation is implementable by a simple posted price $p$.
\end{lem}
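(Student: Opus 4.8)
The plan is to collapse the combinatorial allocation problem into a one-dimensional choice of a cutoff type, pass to a deterministic large-market limit via a law of large numbers, and read the optimal cutoff off the limiting objective. By Lemma~\ref{lem:aggr-structure} the optimal consumer set is always a candidate set, so $\overline q(\boldsymbol\theta)$ serves the $\overline k(\boldsymbol\theta)$ highest types, where $\overline k$ maximizes
\[
\Pi_N(k,\boldsymbol\theta):=\sum_{i=1}^{k}\psi(\theta_i,k)+\varphi(k)-c\,\mathbbm 1_{k>0}
\]
over $k\in\{0,\dots,N\}$, the good being provided exactly when this maximum is non-negative. Since a candidate set of size $k$ is precisely the set of buyers with type above a threshold, I reparametrize by a cutoff $t\in\Theta$: writing $k_N(t,\boldsymbol\theta):=\#\{i:\theta_i\geq t\}$ and $\Pi_N(t,\boldsymbol\theta):=\sum_{i:\theta_i\geq t}\psi(\theta_i,k_N(t,\boldsymbol\theta))+\varphi(k_N(t,\boldsymbol\theta))-c$, maximizing over $k$ becomes maximizing over $t$, and it suffices to show the maximizing cutoff converges to the claimed $p$.

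Increasingness and concavity in $k$ make the increments of $\psi(\theta_i,\cdot)$ and of $\varphi$ non-negative and non-increasing, hence convergent; together with the hypothesis that the network effects converge, this gives finite limits $\psi_\infty(\theta_i):=\lim_k\psi(\theta_i,k)$ and $\lim_k\varphi(k)<\infty$. Since each $\psi(\cdot,k)$ is increasing in $\theta$ (regularity), the pointwise limit $\psi_\infty$ is increasing. Normalizing by $N$, the strong law of large numbers gives $k_N(t,\boldsymbol\theta)/N\to 1-F(t)$ almost surely, so $k_N(t,\boldsymbol\theta)\to\infty$ for every $t<\overline\theta$; as $\varphi$ is bounded, $\varphi(k_N)/N\to 0$ and $c/N\to 0$. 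Replacing $\psi(\theta_i,k_N)$ by its limit inside the empirical average (dominated convergence, using that $\psi(\theta_i,k)$ is squeezed between $\psi(\theta_i,1)$ and $\psi_\infty(\overline\theta)$) yields
\[
\tfrac1N\,\Pi_N(t,\boldsymbol\theta)\ \longrightarrow\ \Pi_\infty(t):=\int_t^{\overline\theta}\psi_\infty(\theta)\,f(\theta)\,d\theta
\]
almost surely. The monotonicity of $k_N(t,\cdot)$ and of $\psi(\theta,\cdot)$ lets me upgrade this pointwise convergence to convergence uniform in $t$ by a Glivenko--Cantelli/P\'olya-type argument over a finite grid of cutoffs.

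Because $\Pi_\infty'(t)=-\psi_\infty(t)\,f(t)$ with $f>0$ and $\psi_\infty$ increasing, $\Pi_\infty$ is single-peaked with a unique maximizer $p$ determined by $\psi_\infty(p)=0$ (with $p=0$ if $\psi_\infty(0)\geq 0$, while $\psi_\infty(\overline\theta)<0$ puts us in the never-provide case of Lemma~\ref{lem:always-never}). Uniform convergence of $\tfrac1N\Pi_N(\cdot,\boldsymbol\theta)$ to a function with a unique, well-separated maximizer forces the empirical argmax cutoff $\hat t_N(\boldsymbol\theta)$ to converge to $p$; hence for every buyer $i$ with $\theta_i\neq p$ we eventually have $\theta_i\geq\hat t_N(\boldsymbol\theta)$ iff $\theta_i\geq p$, so $\overline q_i(\boldsymbol\theta)\to\mathbbm 1_{\theta_i\geq p}$. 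Finally $\Pi_\infty(p)=\int_p^{\overline\theta}\psi_\infty f>0$ whenever $p<\overline\theta$, so $\max_k\Pi_N(\cdot,\boldsymbol\theta)\to+\infty$ and the good is provided for all large $N$; the limiting rule is exactly a posted price (in monetary terms, the large-audience value $\lim_k v(p,k)$ of the cutoff type).

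The crux is the uniform-in-$t$ law of large numbers and the argmax convergence it supports: the consumer-set size $k_N(t,\boldsymbol\theta)$ is itself random and enters $\psi(\theta_i,k_N)$, so a merely pointwise LLN does not suffice, and I must both establish uniform convergence over the cutoff family and rule out the maximizer escaping to the boundary. Monotonicity of $\psi$ in $k$ and the strict single-peakedness of $\Pi_\infty$ (from regularity) are what make this go through, with concavity used only to guarantee that the network effects settle at finite limits. A minor subtlety is that ``for all $\boldsymbol\theta$'' should be read as holding along almost every realization of the infinite i.i.d.\ type sequence.
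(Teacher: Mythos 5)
Your proof is correct in its essentials, but it takes a genuinely different route from the paper's. The paper argues deterministically and directly on the algorithm: it adds one buyer at a time, shows the optimal consumer-set size $\overline k^n$ is weakly increasing along the sequence, and then shows the marginal admission threshold $\gamma(k,k+j,\boldsymbol{\theta}_{\leq k})\to 0$ as $k\to\infty$ --- this last step is where concavity does real work, since it is needed to control the sum $\sum_{i\leq k}\bigl(\psi(\theta_i,k)-\psi(\theta_i,k+j)\bigr)$, which has $k$ terms each of order $\epsilon_k$. The limit admission rule is then read off as $\psi^*(\theta_i)\geq 0$, i.e., a posted price at $\psi^*(p)=0$. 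You instead reparametrize candidate sets by a cutoff type, normalize profit by $N$, and run a uniform law of large numbers plus an argmax-consistency (M-estimation) argument on the limiting objective $\int_t^{\overline\theta}\psi_\infty f$. Both routes land on the same price $\psi_\infty(p)=\psi^*(p)=0$. The paper's approach is shorter, avoids measure-theoretic caveats, and yields the extra structural fact that the audience size is monotone in $N$; yours yields an explicit limiting objective and a first-order-condition characterization of $p$, and correctly identifies that the real difficulty is uniformity in $t$ when $k_N(t,\boldsymbol\theta)$ itself enters $\psi$ --- your $\max\bigl(\tfrac{K}{N}\epsilon_1,\epsilon_K\bigr)$-type bound handles the boundary region near $\overline\theta$ where the paper's argument is silent. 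Two caveats apply to both proofs equally: (i) ``increasing and concave in $k$'' does not by itself imply that $\varphi$ and $\psi(\theta_i,\cdot)$ converge (consider $\log k$); you at least flag the extra convergence hypothesis explicitly, whereas the paper asserts it; (ii) uniqueness of the limiting maximizer requires $\psi_\infty$ to be strictly increasing near its zero, which is not guaranteed by regularity of each $\psi(\cdot,k)$ alone. Neither caveat is a gap specific to your argument.
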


An alternative perspective on the result above is to consider it as the limit of the all-pay contribution implementation suggested before. Since the extractable revenue increases as the market grows, the fixed cost $c$ becomes essentially irrelevant for the production decision, and the network-effect adjusted marginal cost becomes zero. In the limit, the good is produced with probability one and the marginal network effects vanish. Hence, a user's extra contribution has no impact, and each user either selects the minimum price to be eligible for consumption or abstains.

When network externalities are negative, audience size $k^* = \argmax_k \{k \psi(\overline \theta,k)+\varphi(k)\}$ might be finite.
Hence, accepting all users above a threshold with a simple posted price is not optimal in general. As we have discussed before, such strong negative externalities move the model closer to a private-good setting. However, in a large-market limit, the optimal $k^*$-unit auction also trivially converges to a posted price $p=\overline \theta$, but there must be rationing.

\textbf{No cost, $c=0$:} The above implementation through simple price posting relies on two conditions: the cost externality and the value externality of users have to vanish. Figure \ref{fig:c0} illustrates the optimal allocation in examples when we set $c=0$. In all three panels, $\theta_i$ is uniformly distributed, and it is equal to the valuation when consuming alone. In the first panel, $\theta_i$ is also the value of shared consumption, while the value of joint consumption is multiplicatively inflated in the second panel and deflated in the third panel. It is easy to verify that only in the case without value network effects a simple posted price is optimal. In the other two cases, we can implement the optimal allocation with simpler versions of our discussions of community subsidies and exclusivity bids, respectively. 

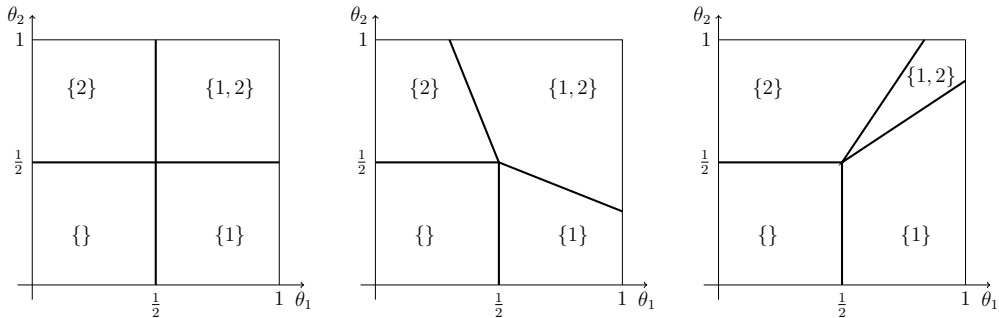
\begin{figure}[h!]
\begin{center}
\subfloat{\label{subfig:c0noNE}
\scalebox{0.65}{
\begin{tikzpicture}
\draw[->] (-0.3,0)  -- (5.5,0) node[below]{$\theta_1$};
\draw (5,0) node[below]{$1$};
\draw (0,5) node[left]{$1$};
\draw[->] (0,-0.3) -- (0,5.5) node[left] {$\theta_2$};
\draw[very thick] (0.5*5,0) node[below]{$\frac{1}{2}$} -- (0.5*5,0.5*5) -- (0,0.5*5)  node[left]{$\frac{1}{2}$};
\draw[very thick] (0.5*5,5) -- (0.5*5,0.5*5) -- (5,0.5*5);
\draw  (1, 1) node{$\emptyset$};
\draw  (1, 4) node{$\{ 2\}$};
\draw  (4, 1) node{$\{ 1\}$};
\draw  (4, 4) node{$\{ 1,2\}$};
\draw (0,5) -- (5,5) -- (5,0);
\end{tikzpicture}
}} 
\subfloat{ \label{subfig:c0posNE}
\scalebox{0.65}{
\begin{tikzpicture}
\draw[->] (-0.3,0)  -- (5.5,0) node[below]{$\theta_1$};
\draw (5,0) node[below]{$1$};
\draw (0,5) node[left]{$1$};
\draw[->] (0,-0.3) -- (0,5.5) node[left] {$\theta_2$};
\draw[very thick] (0.5*5,0) node[below]{$\frac{1}{2}$} -- (0.5*5,0.5*5) -- (0,0.5*5)  node[left]{$\frac{1}{2}$};
\draw[very thick] (0.3*5,5) -- (0.5*5,0.5*5) -- (5,0.3*5);
\draw  (1, 1) node{$\emptyset$};
\draw  (1, 4) node{$\{ 2\}$};
\draw  (4, 1) node{$\{ 1\}$};
\draw  (4, 4) node{$\{ 1,2\}$};
\draw (0,5) -- (5,5) -- (5,0);
\end{tikzpicture}
}}
\subfloat{ \label{subfig:c0negNE}
\scalebox{0.65}{
\begin{tikzpicture}
\draw[->] (-0.3,0)  -- (5.5,0) node[below]{$\theta_1$};
\draw (5,0) node[below]{$1$};
\draw (0,5) node[left]{$1$};
\draw[->] (0,-0.3) -- (0,5.5) node[left] {$\theta_2$};
\draw[very thick] (0.5*5,0) node[below]{$\frac{1}{2}$} -- (0.5*5,0.5*5) -- (0,0.5*5)  node[left]{$\frac{1}{2}$};
\draw[very thick] (5*5/6,5) -- (0.5*5,0.5*5) -- (5,5*5/6);
\draw  (1, 1) node{$\emptyset$};
\draw  (1, 4) node{$\{ 2\}$};
\draw  (4, 1) node{$\{ 1\}$};
\draw  (4.3, 4.25) node{$\{ 1,2\}$};
\draw (0,5) -- (5,5) -- (5,0);
\end{tikzpicture}
}}
 \caption{Examples of optimal allocations without (left), with positive (middle), and with negative (right) value network effects, when production costs are zero.} \label{fig:c0}
\end{center}
\end{figure}

\section{Conclusion} \label{sec:conclusion}

We find the profit-maximizing allocation of a seller offering a club good with network effects. The network effects only depend on the number of buyers that jointly consume the good, and we consider direct effects on the profit and effects on the buyers' valuations that can be positive or negative.
Revealing types truthfully is a dominant strategy in our direct mechanism, and we show that no higher profit can be obtained in any Bayesian Nash equilibrium of any other game.

Our model applies to a variety of settings, but our main application is the creator economy to which our indirect implementations are tailored.
We rationalize prominent features of monetization schemes in this industry. Specifically, we can explain donations to the creator or even other users. Such implementation features seem to be based on users' generosity, but the rationale that makes them optimal in our setting is that users take into account a cost externality and a value network externality. They are willing to pay more than others for the same good to increase the probability of provision, and they are willing to subsidize others when enlarging the audience is of inherent value to them. In contrast, when users dislike larger audiences, they are willing to pay extra to have others excluded.


\section*{Appendix}
\setcounter{subsection}{0}
\renewcommand{\thesubsection}{A.\Roman{subsection}}

\section{Proofs}

\begin{lem} \label{lem:sc-cond}
    Condition \eqref{eq:sc-cond} implies the value single-crossing condition for all $x>y, k\neq k'$, 
  \eq{ \label{eq:sc} \tag{SC}
\sign \{ (v(x,k) - v(x,k')) - (v(y,k) - v(y,k') ) \} = \sign \{     v(x,k) - v(x,k') \}.
    }
    Together with \eqref{eq:sc-cond2}, it implies the virtual-value single-crossing condition
     \eq{ \label{eq:sc-virt} \tag{SC-$\psi$}
     \begin{aligned}
         \sign \{ \psi(x,k) - \psi(x,k')) - (\psi(y,k) - \psi(y,k') \} = \\
         \sign \{     v(x,k) - v(x,k') \} \quad \forall x>y,k \neq k'.
     \end{aligned}
    }   
\end{lem}
\begin{proof}[Proof of Lemma \ref{lem:sc-cond}]
Fix any two levels $k,k'$, and suppose $v(x,k) > v(x,k')$.
To prove \eqref{eq:sc}, we show that $\xi(x)=v(x,k)-v(x,k')$ is increasing everywhere. Indeed,
$$\frac{\partial \xi}{\partial x} (x)= v_1(x,k) -  v_1(x,k')>0 \quad \mbox{by \eqref{eq:sc-cond}.}$$
To prove \eqref{eq:sc-virt}, we define $\widehat \xi(x)=\psi(x,k)-\psi(x,k')$. Let $\lambda(x)=(1-F(x))/f(x)\geq 0$, and let $\lambda'(x)\leq 0$ be its derivative. Indeed,
$$\frac{\partial \widehat \xi}{\partial x} (x)= (1 - \lambda'(x) ) \left(v_1(x,k) -  v_1(x,k') \right) - \lambda(x) \left( \frac{\partial v_1}{\partial \theta_i}(x,k) -\frac{\partial v_1}{\partial \theta_i}(x,k') \right) \geq 0 $$
by \eqref{eq:sc-cond} and \eqref{eq:sc-cond2}.
An analogous argument holds if $v(x,k) \leq v(x,k')$.
\end{proof}

\begin{proof}[Proof of Lemma \ref{lem:implementability}]
We first show that in any incentive-compatible mechanism \eqref{eq:cutoff}, \eqref{eq:q=1}, and \eqref{eq:integral-m} must hold.

 Fix any buyer $i$ and two types $x>y$. \eqref{eq:IC} requires that for each $\boldsymbol{\theta}_{-i}$ (replaced by $\cdot$ below)
 \ea{
 &U_i (x,\cdot) = v(x,k(x,\cdot)) q_i(x,\cdot) - m_i(x,\cdot) &\geq
 &v(x,k(y,\cdot)) q_i(y,\cdot) - m_i(y,\cdot) &\mbox{ and } \\
  &U_i (y,\cdot) = v(y,k(y,\cdot)) q_i(y,\cdot) - m_i(y,\cdot) &\geq
 &v(y,k(x,\cdot)) q_i(x,\cdot) - m_i(x,\cdot).
 }
  Subtracting the inequalities yields
 \ea{
q_i(x,\cdot) \big(  v(x,k(x,\cdot)) -v(y,k(x,\cdot))\big) \geq 
q_i(y,\cdot) \big( v(x,k(y,\cdot)) -v(y,k(y,\cdot)) \big).
 }
 Hence, for all $x,y,\boldsymbol{\theta}_{-i}$ with 
 $q_i(x,\boldsymbol{\theta}_{-i}) =q(y,\boldsymbol{\theta}_{-i})=1$, we obtain \eqref{eq:q=1}.
 Because $x>y$ and $v$ is increasing in $\theta_i$ for all $k$, we obtain for all other $x,y,\boldsymbol{\theta}_{-i}$ that $q_i$ must be weakly increasing in $\theta_i$,
 \ea{
q_i(x,\cdot)  \geq 
q_i(y,\cdot) \left( \frac{v(x,k(y,\cdot)) -v(y,k(y,\cdot)) }{  v(x,k(x,\cdot)) -v(y,k(x,\cdot))}  \right).
 }
 Because the second factor on the right-hand side is positive, it must be that $q_i(x,\cdot)=1$ when $q_i(y,\cdot)=1$, and it must be that
 $q_i(y,\cdot)=0$ when $q_i(x,\cdot)=0$. Therefore, for all $\boldsymbol{\theta}_{-i}$, $q_i$ is either constant or has exactly one jump upwards, i.e., \eqref{eq:cutoff} holds. 
 
Moreover, considering $x=y+\delta$ with $q_i(x,\cdot)=q_i(y,\cdot)=1$, we obtain
\ea{
 \lim_{\delta \to 0} \frac{U_i(y+\delta,\cdot) - U(y,\cdot)}{\delta} &\geq \lim_{\delta \to 0} \frac{1 \big( v(y+\delta,k(y+\delta,\cdot)) -v(y,k(y,\cdot)) \big)}{\delta} \\
 &= 
 v_1(y,k(y,\cdot)),\\
  \lim_{\delta \to 0} \frac{U_i(x,\cdot) - U(x-\delta,\cdot)}{\delta} &\leq \lim_{\delta \to 0} \frac{1 \big( v(x,k(x,\cdot)) -v(x-\delta,k(x-\delta,\cdot)) \big)}{\delta} \\
 &= 
 v_1(x,k(x,\cdot)),
}
which implies that $U'_i(x,\boldsymbol{\theta}_{-i})= v_1(x,k(x))$ wherever $q_i$ is equal to one. Since \eqref{eq:IC} also implies that $U_i$ is Lipschitz-continuous, it is differentiable almost everywhere and equals the integral over its derivative. Hence, \eqref{eq:integral-m} follows from rearranging.

Now, suppose \eqref{eq:cutoff}, \eqref{eq:q=1}, and \eqref{eq:integral-m} hold. We can rewrite \eqref{eq:IC} as 
\ea{
U_i(x,\cdot) &\geq u_i(y,\cdot|x)=q_i(y,\cdot) v(x,k(y,\cdot) - m_i(y,\cdot)\\
U_i(x,\cdot) &\geq u_i(y,\cdot|x) + q_i(y,\cdot) v(y,k(y,\cdot) - q_i(y,\cdot) v(y,k(y,\cdot)\\
U_i(x,\cdot) &\geq q_i(y,\cdot) \big( v(x,k(y,\cdot) - v(y,k(y,\cdot) \big) + U_i(y,\cdot) \\
\int_y^x q_i(t,\cdot) v_1(t,k(t,\cdot)) dt &\geq 
\int_y^x q_i(y,\cdot) v_1(t,k(y,\cdot))dt,
}
which is implied by \eqref{eq:cutoff} and \eqref{eq:q=1}.
\end{proof}

\begin{proof}[Proof of Lemma \ref{lem:sc}]
By \eqref{eq:q=1} of Lemma \ref{lem:implementability}, \eqref{eq:IC} requires that for all $x>y >\underline x(\boldsymbol{\theta}_{-i})$,
\eq{ \label{eq:1}
\big( v(x,k(x,\cdot) ) -v(y,k(x,\cdot)) \big) - \big( v(x,k(y,\cdot)) -v(y,k(y,\cdot)) \big)
\geq  0.
}
By contradiction suppose that $k(x,\cdot)=k_j$ and $k(y,\cdot)=k_{j'}$ with $j<j'$, i.e., the lower type $y$ receives a ``better" consumer set size. 
We arrive at a contradiction with \eqref{eq:sc} of Lemma \ref{lem:sc-cond},
\eq{ \label{eq:2}
\big( v(x,k(x,\cdot) ) -v(y,k(x,\cdot)) \big) - \big( v(x,k(y,\cdot)) -v(y,k(y,\cdot)) \big) <0. }
Vice versa, suppose that for all $x>y$, $k(x,\cdot)=k_j$ and $k(y,\cdot)=k_{j'}$ with $j \geq j'$. The negation of \eqref{eq:2} implies \eqref{eq:1}. 
\end{proof}

\begin{proof}[Proof of Lemma \ref{lem:aggr-provision}]
Fix type vector $\boldsymbol\theta$, and let $J^*$ be the optimal consumer set in the relaxed problem. That is, $q_i (\boldsymbol \theta)= \mathbbm{1}_{i \in  J^*}$.

Contradicting the lemma, suppose $J^* \neq \varnothing$
although \eqref{eq:cost-covered} is violated for all $J$. The seller makes a loss that can be avoided by setting $q_i (\boldsymbol \theta)=0$ for all $i$.
Analogously, $J^* = \varnothing$ cannot be optimal if a profitable $J$ satisfying \eqref{eq:cost-covered} exists.
\end{proof}

\begin{proof}[Proof of Lemma \ref{lem:aggr-structure}]
The right-hand side of \eqref{eq:J>J'} is zero when $|J|=|J'|$, while the left-hand side is positive when $J\neq J'$ has structure \eqref{eq:agg-structure}. 
\end{proof}

\begin{proof}[Proof of Lemma \ref{lem:finder}]
Fix any $\boldsymbol \theta$.
In keeping with Lemma \ref{lem:aggr-provision}, the constraint in \eqref{eq:reduced-max} ensures that the good is only provided when provision is profitable.
By Lemma \ref{lem:aggr-structure}, the optimal allocation rule is a simple exclusion policy, and the solution to \eqref{eq:reduced-max} identifies the most profitable such allocation rule.
\end{proof}

\begin{proof}[Proof of Proposition \ref{prop:optimal}]
To show that $\overline q$ is incentive-compatible, we verify that it always satisfies the three conditions of Lemma \ref{lem:implementability}, i.e., \eqref{eq:cutoff}, \eqref{eq:q=1}, and \eqref{eq:integral-m}. The allocation is individually rational if it is incentive-compatible and the participation constraint binds for type $\theta_i=0$, which is true.

\eqref{eq:integral-m} is satisfied by construction \eqref{eq:integral-m2}.

We fix a buyer $i$ and consider two possible types $x>y$ for him. We also fix the types of all other buyers $\boldsymbol{\theta}_{-i}$ and order them, $\theta_j>\theta_{j+1}$ for all $j$.

To show \eqref{eq:cutoff}, we need that 
 \ea{
    \overline q_i(y,\boldsymbol{\theta}_{-i})=1  \implies \quad  q_i(x,\boldsymbol{\theta}_{-i})=1,  \mbox{ and } \quad
    \overline q_i(x,\boldsymbol{\theta}_{-i})=0  \implies \quad  q_i(y,\boldsymbol{\theta}_{-i})=0.
    }   
The first part cannot be violated because if the maximization of \eqref{eq:reduced-max} leads to an inclusion of type $y$ and $k-1$ others (for any $k$), the same is true for type $x$. That is, if it holds that
\ea{
\psi(y,k) + \sum_{j=1}^{k-1} \psi(\theta_j,k) + \varphi(k)   \geq  \sum_{j=1}^{k'} \psi(\theta_j,k') + \varphi (k') 
}
for any $k'$, regularity implies this condition also holds for type $x$. Similarly, if the condition does not hold for type $x$, it cannot hold for type $y$.
Also because of regularity, we have that if there exists a consumer set including type $y$ for which the provision condition \eqref{eq:cost-covered} holds, the condition also holds for the same consumer set including type $x>y$. Therefore, given any $\boldsymbol \theta_{-i}$, $\overline q_i$ is weakly increasing in type $\theta_i$ for all $i$ such that \eqref{eq:cutoff} holds.

Suppose that $\overline q_i(x,\cdot)=\overline q_i(y,\cdot)=1$, and $\overline k(x,\cdot)=k_x \neq \overline k(y,\cdot)=k_y$ and both satisfy the cost constraint. By construction, it must be that
\eq{ \label{eq:psi-optimum}
\begin{aligned}
&\psi(x , k_x) + \sum_{j=1,j\neq i}^{k_x} \psi (\theta_j, k_x) + \varphi(k_x) 
&\geq 
\psi(x , k_y) + \sum_{j=1,j\neq i}^{k_y} \psi (\theta_j, k_y) + \varphi(k_y),  \mbox{ and} \\
&\psi(y , k_y) + \sum_{j=1,j\neq i}^{k_y} \psi (\theta_j, k_y) + \varphi(k_y) 
&\geq 
\psi(y , k_x) + \sum_{j=1,j\neq i}^{k_x} \psi (\theta_j,k_x) + \varphi(k_x).
\end{aligned}
}
Subtracting these two inequalities, we obtain
\ea{ 
\psi(x , k_x) - \psi(y , k_x) \geq \psi(x , k_y) - \psi(y , k_y), 
}
which is \eqref{eq:sc-virt} of Lemma \ref{lem:sc-cond} with $v(x,k_x)\geq v(x,k_y)$.
Thus, \eqref{eq:q=1} holds.

Hence, $\langle \overline q, \overline m \rangle$ is incentive compatible and individually rational, and thus also solves the full constrained problem.
\end{proof}

\begin{proof}[Proof of Lemma \ref{lem:complements}]
Consider buyer $i$ his types $x>y$, and let $\boldsymbol{\theta}_x=(x,\boldsymbol{\theta}_{-i})$ and     $\boldsymbol{\theta}_y= (y,\boldsymbol{\theta}_{-i})$ with $\overline k(\boldsymbol{\theta}_x)=k_x$ and $\overline k(\boldsymbol{\theta}_y)=k_y$.

Suppose that $i$ is in the consumer set when $\theta_i=y$. By incentive compatibility, he is also in the consumer set when $\theta_i=x>y.$
Ignoring the cost constraint of \eqref{eq:reduced-max}, the argument around \eqref{eq:psi-optimum} in the proof of Proposition \ref{prop:optimal} above confirms that $k_x \geq k_y \geq i$. That is, the seller weakly wants to expand the consumer set because of the type increase when value network effects are positive. Since the cost $C(k)$ is decreasing in $k$, this larger set is also feasible.    

Now, consider some other buyer, who has the $\ell$-th highest type in $\boldsymbol{\theta}_x$ and either also the $\ell$-th or the $(\ell-1)$-th highest type in $\boldsymbol{\theta}_y$. This buyer cannot be kicked out of the consumer set due to an increase in type $\theta_i$, because
$\ell-1<\ell \leq k_y \leq k_x$, and he also cannot be added to the consumer set due to a decrease in type $\theta_i$, because $\ell> \ell-1>k_x\geq k_y$.
\end{proof}

\begin{proof}[Proof of Lemma \ref{lem:Bayes}]
Employing the envelope theorem \citep{milgrom2002envelope}, we can rewrite a buyer's expected utility \eqref{eq:interim-utility} in any implementable mechanism as
\ea{
U(\theta_i) = U(0) + \int_0^{\theta_i}\sum_{k=1}^N Q_i^k(t) v_1(t,k) d t
}
such that given $U(0)=0$ the expected payment is
\eq{
  M_i (\theta_i) =  \sum_{k=1}^N Q_i^k(\theta_i) v(\theta_i,k) - \int_0^{\theta_i}\sum_{k=1}^N Q_i^k(t) v_1(t,k) d t
 }
such that
\ea{
\begin{aligned}
  \mathbb E [M_i (\theta_i) ] 
  &=  \int_0^{\overline \theta} \sum_{k=1}^N Q_i^k(\theta_i) \psi (\theta_i,k) f(\theta_i) d\theta_i \\
    &=  \int_{\Theta^N} q_i(\theta_i,\boldsymbol{\theta}_{-i}) \psi (\theta_i,k(\theta_i,\boldsymbol{\theta}_{-i}))  d G(\theta_i,\boldsymbol{\theta}_{-i}) \\
 \end{aligned}
}
so that the seller's expected profit is again \eqref{eq:virt-value-profit}.
Hence, the solution to the relaxed problem solves the fully constrained problem under \eqref{eq:B-IC} if the solution $\langle \overline{q},\overline{m}\rangle$ satisfies all constraints. Since it satisfies the stronger \eqref{eq:IC}, also \eqref{eq:B-IC} is satisfied.
\end{proof}

\begin{proof}[Proof of Lemma \ref{lem:always-never}]
Part a) follows from Lemma \ref{lem:aggr-provision}.
Suppose \eqref{eq:never-provide} holds. For any $k\leq N$, a set $J_k=\{i:i\leq k\}$ is the consumer set following from the corresponding simple exclusion policy, and $\Psi(\boldsymbol{\overline \theta}_{\leq k} |J_k) < C (k)$. By regularity, this is also true for all other type vectors.
Suppose \eqref{eq:never-provide} does not hold for some $k$. Then there exists a type vector realization such that  $\Psi(\boldsymbol{\theta}_{\leq k} |J_k)\geq  C (k)$.

Suppose \eqref{eq:always-provide} holds, and suppose all types are zero, $\boldsymbol{\theta}=\boldsymbol{0}$. Part (i) implies that accepting all $N$ buyers covers the adjusted total cost, and part (ii) implies that removing any buyer reduces profits. Hence $\overline q_i(\boldsymbol{0})=1$ for all $i$. The incentive compatibility of $\langle \overline{q},\overline{m}\rangle$---Lemma \ref{lem:implementability}(i)---implies the consumer set cannot be smaller for any other type vector. Individual rationality and $v(0,k)=0$ for all $k$ imply $\overline m_i(\boldsymbol{0})=0$ for all $i$.

Suppose one part of \eqref{eq:always-provide} does not hold.
Either accepting all buyers does not cover the adjusted cost or excluding some buyers for some type vector increases profit. Hence, accepting all buyers is not optimal. Combined with the previous paragraph, statement b) follows.
\end{proof}

\begin{proof}[Proof of Lemma \ref{lem:M-incr}]
From the integral form \eqref{eq:integral-m}, we can see that
\ea{ 
   m_i(\theta_i,\boldsymbol{\theta}_{-i}) = v(\theta_i,k(\theta_i, \boldsymbol{\theta}_{-i})) - \int_{\underline x (\boldsymbol{\theta}_{-i})}^{\theta_i} v_1(t,k(t,\boldsymbol{\theta}_{-i})) d t,
}
is weakly increasing in $\theta_i$ for any $\boldsymbol{\theta}_{-i}$. Hence, it is weakly increasing in expectation. The equation also implies that any two types who have the same payment for all $\boldsymbol{\theta}_{-i}$ must be indifferent over their contracts.
\end{proof}

\begin{proof}[Proof of Proposition \ref{prop:allpay}]
Take any setting and the corresponding optimal DRM $\langle \overline q, \overline m\rangle$. We construct the corresponding all-pay contribution mechanism jointly with the implementing Bayesian Nash equilibrium, $\langle P,g^*, \rho^* \rangle$.

First, set $g^*_i(p_i,\boldsymbol{p}_{-i}) = (0,p_i)$ for all $\boldsymbol{p}_{-i}$ with $p_i \not\in [\overline M_i (\underline y), \overline M_i(\overline y)]$  such that choosing any such price is weakly dominated by selecting price $p_i=0$. Suppose all types $\theta_i< \underline y$ select $\rho^*_i(\theta_i)=0$.

Second, suppose any type $\theta_i \in [\underline y,\overline y)$ selects price 
\ea{
\rho^*_i(\theta_i) = \overline M_i ( \theta_i),}
and any type $\theta_i \geq \overline y$ selects price $\rho^*_i(\theta_i) = \overline M_i ( \overline \theta)$.

Third, since $\overline M_i$ is invertible, we can construct 
\ea{
\widetilde g_i(p_i,\mathbf{p}_{-i}) = (\overline q_i ( \overline M_i^{-1} (p_i),\overline{\mathbf{M}}_{-i}^{-1} (\mathbf{p}_{-i})),p_i)
}
such that under the supposed strategy profile $g_i^*(p_i,\boldsymbol{p}_{-i}) = (\overline q_i (\boldsymbol{\theta}),M_i(\theta_i))$ for all $i$ and all $\boldsymbol{\theta}$. If for several types $x\neq y$ we have $\overline M_i(x)=\overline M_i(y)$, Lemma \ref{lem:M-incr} ensures that $q_i(x,\boldsymbol{\theta}_{-i})=q_i(y,\boldsymbol{\theta}_{-i})$ and $k(x,\boldsymbol{\theta}_{-i})=k(y,\boldsymbol{\theta}_{-i})$ such that it does not matter which types is used as input among types in a flat region of $\overline M_i$.

Under the proposed strategy profile, buyer $i$'s expected utility from following strategy $\rho^*_i$ is
\ea{
 u_i (\rho^*_i(\theta_i),\theta_i) &= \mathbb{E}_{\boldsymbol{\theta}_{-i}} \left[ \widetilde q_i \big(\rho^*_{i}(\theta_i),{\rho^*}_{-i}(\boldsymbol{\theta}_{-i}) \big) v \big(\theta_i, \widetilde k(\rho^*_{i}(\theta_i),{\rho^*}_{-i}(\boldsymbol{\theta}_{-i}) \big) - \rho^*_i(\theta_i))
\right] \\
&= \mathbb{E}_{\boldsymbol{\theta}_{-i}} \left[ \overline q_i (\theta_{i},\boldsymbol{\theta}_{-i}) v (\theta_i, \overline k(\theta_i, \boldsymbol{\theta}_{-i})) - \overline m_i(\theta_{i},\boldsymbol{\theta}_{-i}))
\right],
}
which by incentive compatibility and individual rationality of $\langle \overline q, \overline m\rangle$ is at least as large as the expected utility from any deviation to any price $p_i\in \{0\} \cup [\overline M_i (\underline y), \overline M_i(\overline y)]$, which dominate all other prices. Hence, ${\rho^*}$ is a Bayesian Nash equilibrium of the constructed all-pay contribution mechanism, and it implements the optimal allocation for all $\boldsymbol{\theta}$.
\end{proof}

\begin{proof}[Proof of Lemma \ref{lem:large}]
Because the functions are concave and increasing in $k$, they converge. Let 
\ea{
\lim\limits_{k \to \infty} \varphi (k) = \varphi^* \quad \mbox{ and } \quad
\lim\limits_{k \to \infty} v(\theta_i,k) = v^*(\theta_i).
}

Consider an increasing sequence of market sizes $N^{(n)}=(1,2,\dots)$.
Let $\boldsymbol{\theta}^{(n)}$ be the sequence of type profile realizations with $\boldsymbol{\theta}^n = (\boldsymbol{\theta}^{n-1},\theta_n^n)$ (before ordering types according to virtual values), i.e., one more type draw is added each step along the sequence and the other types remain fixed $\theta^n_i=\theta^{n'}_i$ for all $i \leq n<n'$. Let $\overline k^n(\boldsymbol{\theta}^n) = |\overline J (\boldsymbol{\theta}^{n})|$ be the optimal consumer set size at market size $n$ along the sequence.
Let  $\psi^* (\theta_i)= \lim_{k\to \infty} \psi (\theta_i,k) = v^*(\theta_i) - \frac{1-F(\theta_i)}{f(\theta_i)} \frac{\partial v^*(\theta_i) }{\partial \theta_i}$.

$\overline k^n(\boldsymbol{\theta}^n)$ is weakly increasing in $n$ for all type profiles, because the left-hand side of the inequality below  (after ordering types) is  increasing in $n$ and $j$, while the right-hand side converges to a constant as $n$ and $j$ increase (and the production cost $c$ is also constant)
\ea{
&\sum_{i=\overline k^n}^{\overline k^n+j} \psi(\theta_i^n,\overline k^n+j)
\geq &\gamma (\overline k^n, \overline k^n+j, \boldsymbol{\theta}_{\leq \overline k^n}) \\
&\gamma (\overline k^n, \overline k^n+j, \boldsymbol{\theta}_{\leq \overline k^n}) &= \underbrace{\varphi(\overline k^n)- \varphi( \overline k^n+j)}_{\to \varphi(\overline k^n)- \varphi^*} +  \sum_{i= 1}^{ \overline k^n} \underbrace{[\psi (\theta_i, \overline k^n)-\psi (\theta_i, \overline k^n+j)] }_{\to \psi (\overline \theta, \overline k^n)- \psi^* (\overline \theta)}.
}
That is, as the market grows it becomes weakly profitable to increase the consumer set. Finally, for all $j$
\ea{
\begin{aligned}
  \lim_{k\to \infty}  \gamma (k,k+j ,\boldsymbol{\theta}_{\leq k}) =  \overbrace{\varphi(k)- \varphi(k+j)}^{\to 0} +  \sum_{i= 1}^{k} \overbrace{[\psi (\theta_i,k)-\psi (\theta_i,k+j)] }^{\to 0} = 0.
\end{aligned}
}
That is, as the consumer set grows the cutoff for the virtual value to be admitted to the consumer set converges to zero. Consequently, in the large-market limit, the optimal allocation $\overline q^*_i(\boldsymbol{\theta}) = \mathbbm{1}_{\psi^*(\theta_i) \geq 0}$ can be implemented by posting a price $p$ such that $\psi^*(p)=0.$
\end{proof}


\section{Implementations: Details behind Figure \ref{fig:implementation}}

Here, we construct the games and equilibria that implement the respective optimal allocation as described in Figure \ref{fig:implementation}. We use specific settings, but the general idea of our implementations generalizes. Our working example, is a 2-user setting with the valuation function \eqref{eq:v-pi} used in Figure \ref{fig:benchmarks},
\ea{ 
v(\theta_i,k)=\begin{cases}
			\pi \theta_i & \text{if }   k=1, \\
			(1-\pi)\theta_i & \text{if } k=2,
		 \end{cases}
}
and we assume no profit network effects, $\varphi(k)=0$ for all $k$. Types $\theta_i$ are independent draws from a uniform distribution, $F(\theta_i)=\theta_i$ for  $\theta_i \in [0,1]$. The production cost is $c= \nicefrac{1}{4}$. Both games constructed below are symmetric deterministic simultaneous move games that are only played by the two users, and we find a pure-strategy Bayesian Nash equilibrium.

\subsubsection*{Positive value network effects and community subsidies} 

Figure \ref{fig:pos-again} shows the optimal allocation for our illustrative setting with positive value network effects, $\pi=\nicefrac{3}{8}$. Notice that the general structure of the optimal allocation does not hinge on the specifics of our example. First, adding profit network effects has no qualitative impact, see, e.g., Panel \ref{subfig:NE} in the main text. Second, the impact of changing the type distribution would simply be that the curves connecting points $(x,1)$ and  $(y,z)$, and $(1,x)$ and $(y,z)$ would not be linear, but they would still be monotone and decreasing. Third, changing the cost $c$ also has no qualitative impact as long as all possible consumer sets are optimal for some type combination. Finally, the construction below is valid for all $\pi \in (\frac{1}{4},\frac{1}{2})$. Outside of this interval, either there is only joint consumption or the network effects are negative. There is a plethora of valuation functions outside of the structure of \eqref{eq:v-pi} that lead to a similar structure of the optimal allocation.

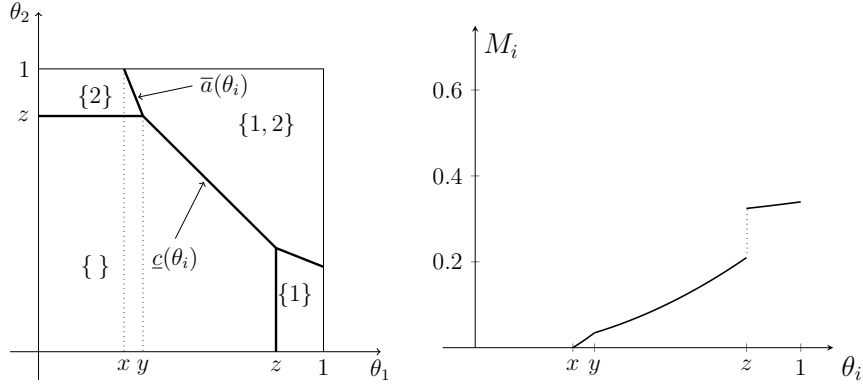
\begin{figure}[!htp]
\begin{center}
\subfloat{ \label{subfig:smallPos2} 
\scalebox{0.75}{
\begin{tikzpicture}
\draw[->] (-0.5,0)  -- (6,0) node[below]{$\theta_1$};
\draw (5,0) node[below]{$1$};
\draw (0,5) node[left]{$1$};
\draw[->] (0,-0.5) -- (0,6) node[left] {$\theta_2$};
\draw[very thick] (5/6*5,0) node[below]{$z$}-- (5/6*5,11/30*5) -- (5,3/10*5);
\draw[very thick] (0,5/6*5) node[left]{$z$} -- (11/30*5,5/6*5) -- (3/10*5,5);
\draw[very thick] (5/6*5,11/30*5) -- (11/30*5,5/6*5);
\draw[dotted] (11/30*5,5/6*5) -- (11/30*5,0) node[below]{$y$}; 
\draw[dotted] (3/10*5,5) -- (3/10*5,0) node[below]{$x$}; 
\draw  (4.5, 1) node{$\{1\}$};
\draw  (1, 4.5) node{$\{2\}$};
\draw  (1, 1.5) node{$\emptyset$};
\draw  (4, 4) node{$\{ 1,2 \}$};
\draw (0,5) -- (5,5) -- (5,0);
\draw[->] (2.7,4.7) node[right]{$\overline a(\theta_i)$}  -- (1.75,4.5);
    \draw[->] (2.4,2) node[below]{$\underline c(\theta_i)$}  -- (2.9,3);
\end{tikzpicture}
}}
\subfloat{ 
\scalebox{0.75}{
\begin{tikzpicture}[
declare function={
    Ma(\x)= ((1/128)*(100*\x*\x-9);
   Ma2(\x)= (5*\x*\x)/16-7/960;
   Ma3(\x)= (1/480)*(24*(\x*\x)+139);
  },
  ]
\begin{axis}[%
            domain=0:1,
            yscale=1,
            axis lines=middle,
            xlabel={\large $\theta_i$},
            ylabel={\large $M_i$},
            ymin=0,
            ymax=0.75,
            xmin=-0.1,
            xmax=1.1,
            samples=166,
            xlabel style={below right},
            xtick={0,3/10,11/30,5/6,1},
    xticklabels={0,$x$,$y$,$z$,1},
        ]
     \addplot [thick,domain=3/10:11/30-0.00001] {Ma(x)};
     \addplot [thick,domain=11/30:5/6-0.00001] {Ma2(x)};
     \addplot [thick,domain=5/6:1] {Ma3(x)};
     \draw [dotted] (93.5,220) -- (93.5,325);
    \end{axis}
\end{tikzpicture}
}}
 \caption{Optimal allocation (left) and optimal interim expected transfer $M_i(\theta_i)$ (right) for $\pi=\nicefrac{3}{8}$, and $\theta_i \sim U[0,1]$.}\label{fig:pos-again}
\end{center}
\end{figure}

In the example, $x=\nicefrac{3}{10}, y=\nicefrac{11}{30}$, and $z=\nicefrac{5}{6}$, but we can find $x<y<z$ in any other setting fitting the description above. We define
\eq{ \label{eq:xyzPos} 
\begin{aligned}
&x: &\psi(x,2)+\psi(\overline \theta,2)=c, \\
&y: &\psi(y,2)+\psi(z,2)=c, \\
&z: &\psi(z,1)=c.
\end{aligned}
}
Let us first describe the optimal allocation that our constructed equilibrium and game shall replicate.
Types $\theta_i \leq x$ never get the good and never pay anything. 
Types $\theta_i \in (x,y]$ never consume the good alone, and they consume the good jointly if type $\theta_j\geq z$ is large enough,
\eq{ \label{eq:alloc-xyPOS}
\psi(\theta_i,2)+\psi(\theta_j,2)>\max \{\psi(\theta_j,1),c\}=\psi(\theta_j,1)
\iff \theta_j > \overline a(\theta_i).
}
Types $\theta_i \in (y,z]$ also never consume the good alone, 
and they consume the good jointly if type $\theta_j$ is large enough,
\eq{ \label{eq:alloc-yzPOS}
\psi(\theta_i,2)+\psi(\theta_j,2) > c
\iff \theta_j > \underline c(\theta_i).
}
Finally, types $\theta_i>z$ always get the good: alone if $\theta_j < \overline a(\theta_i)$, and jointly otherwise. 

Now, we jointly construct the equilibrium and the game described in Figure \ref{subfig:implement_pos}.
 First, let actions be a pair $\alpha_i=(p_i,s_i) \in \mathbb R^2$. Next, define 
 \eq{ \label{eq:alloc-z1POS}
 \underline p= \overline M_i(y), \quad \mbox{ and } \quad \overline p= \overline M_i(z).
 }

\textbf{Strategy:}
Consider the following strategy $\sigma^*_i$:
\begin{enumerate}
    \item Types $\theta_i\in [0,x]$ select $(p_i,s_i)=(0,0).$
    \item Types $\theta_i\in (x,y]$ select $p_i=\underline p$ and $s_i= s^{x,y}(\theta_i)$, where 
    \ea{
    s^{x,y}: \quad \overline Q_i(\theta_i) (\underline p + s^{x,y}(\theta_i)) = \overline M_i(\theta_i).}
    \item Types $\theta_i\in (y,z]$ select $s_i=0$ and $p_i= p^{y,z}(\theta_i)$, where
    \ea{
    p^{y,z}: \quad p^{y,z}(\theta_i) = \overline M_i(\theta_i).}
    \item Types $\theta_i\in (z,1]$ select $s_i=s^{z,1}(\theta_i)$ and $p_i= \overline p$, where
    \ea{
    s^{z,1}: \quad: s^{z,1}(\theta_i) + \overline p = \overline M_i(\theta_i).}
    \end{enumerate}
    First, $s^{x,y}$ is strictly increasing and negative over its domain $[x,y]$, because $\overline M_i$ is strictly increasing and below $\underline p$ there. 
    Second, $p^{y,z}$ has range $[\underline p,\overline p]$ and is strictly increasing  over its domain $[y,z]$, because it is equal to $\overline M_i$ there.
    Third, $s^{z,1}$ is strictly increasing and positive over its domain $[z,1]$, because $\overline M_i$ is strictly increasing and above $\overline p$ there. 

\textbf{Outcome function:} Our goal is to construct an outcome function $g^*$  such that 
(i) it implements the optimal allocation in conjunction with a symmetric strategy profile $\sigma^*=(\sigma_1^*,\sigma_2^*)$, and (ii) $\sigma^*$ is a Bayesian Nash equilibrium. To this end, it is easiest to first render all actions inconsistent with $\sigma_i^*$ irrelevant. That is, define $g^*$ such that for all $(p_j,s_j)$
\eq{ \label{eq:oustide-of-equiPOS}
\begin{aligned}
   g^*_i \big( (p_i,s_i),(p_j,s_j) \big) = (0,p_i) \quad \forall (p_i,s_i) \mbox{ such that }\\
   p_i \not\in[\underline p, \overline p] \mbox{ or } s_i \not\in [-\underline p,\overline M_i(\overline \theta)-\overline p]
   \mbox{ or } p_i \in(\underline p, \overline p) \wedge s_i \neq 0\\
   \mbox{ or } p_i=\underline p \wedge s_i \not\in [-\underline p,0]
   \mbox{ or } p_i=\overline p \wedge s_i \not\in [0,\overline M_i(\overline \theta)-\overline p].
\end{aligned}
}
For actions consistent with $\sigma^*$, we ensure that types $\theta_i\in (x,y]$ playing $\sigma_i^*$ get their designated expected outcome and payoff from the optimal mechanism $\langle \overline q,\overline m \rangle$ for all $\theta_j$ playing $\sigma^*_j$. For all $(p_i,s_i)$ with $p_i= \underline p$ and $s_i \in [-\underline p,0]$
set
\eq{ \label{eq:equi-xyPOS}
\begin{aligned}
   g^*_i \big( (p_i,s_i),(p_j,s_j) \big) = 
   \begin{cases}
			(1, \underline p + s_i)  & \text{if }   s_j \geq s(s_i), \\
			(0,0) & \text{ otherwise},
		 \end{cases}\\
   \mbox{where } s (s_i) = s^{z,1} ( \overline a ( (s^{x,y})^{-1} (s_i)))
\end{aligned}
}
such that given $\sigma^*$ and $\theta_i \in [x,y]$, $i$ gets the good if and only if 
 $\theta_j \in [z,1]$ and
\ea{
 s^{z,1}(\theta_j) \geq s ( s^{x,y}(\theta_i) ) \iff
s^{z,1}(\theta_j) \geq  s^{z,1} ( \overline a ( (s^{x,y})^{-1} (s^{x,y}(\theta_i))
\iff \theta_j \geq \overline a(\theta_i),
}
which is \eqref{eq:alloc-xyPOS}. By construction of $\sigma^*$, the interim expected payoff of types $\theta_i\in (x,y]$ is the same as in the optimal mechanism.

Next, we  consider types $\theta_i\in [y,z)$. For all $(p_i,s_i)$ with $s_i= 0$ and $p_i \in (\underline p, \overline p)$ set
\eq{ \label{eq:equi-yzPOS}
\begin{aligned}
   g^*_i \big( (p_i,s_i),(p_j,s_j) \big) = 
   \begin{cases}
			(1, p_i)  & \text{if }   p_j \geq p(p_i), \\
			(0,p_i) & \text{ otherwise},
		 \end{cases}\\
   \mbox{where } p (s_i) = p^{y,z} ( \underline c( (p^{y,z})^{-1} (p_i)))
\end{aligned}
}
such that given $\sigma^*$ and $\theta_i \in [x,y]$, $i$ gets the good if and only if either
 $\theta_j \in [z,1]$ (because $\overline p>p(p_i)$ for all relevant $p_i$) or  
 $\theta_j \in [x,y]$ and
\ea{
 p^{y,z}(\theta_j) \geq p ( p^{y,z}(\theta_i) ) \iff
p^{y,z}(\theta_j) \geq  p^{y,z} ( \underline c ( (p^{y,z})^{-1} (p^{y,z}(\theta_i))
\iff \theta_j \geq \underline c(\theta_i),
}
which is \eqref{eq:alloc-yzPOS}. By construction of $\sigma^*$, the interim expected payoff of types $\theta_i\in (y,z]$ is the same as in the optimal mechanism.
All of the above is then also true for types $\theta_i >z$. By \eqref{eq:oustide-of-equiPOS}, types $\theta_i <x$ are clearly (weakly) best off by choosing $p_i=s_i=0$ and, hence, also get their designated outcomes and payoff.

Finally, $\sigma^*$ must be a Bayesian Nash equilibrium because the original optimal mechanism that we replicated is Bayesian incentive compatible. That is, no type $\theta_i$ has an incentive to deviate to another type's action given the other user plays $\sigma_j(\theta_j)$. Moreover, deviating to any $(p_i,s_i)$-combination outside of $\sigma_i^*$ is not profitable on account of \eqref{eq:oustide-of-equiPOS}.

\subsubsection*{Negative value network effects and exclusivity bids} 

Figure \ref{fig:neg-again} shows the optimal allocation for our illustrative setting with negative value network effects, $\pi=\nicefrac{5}{8}$. Again, the general idea of the implementation with exclusivity bids does not hinge on the specifics of our example.

\begin{figure}[!htp]
\begin{center}
\subfloat{  
\scalebox{0.75}{
        \begin{tikzpicture}
        \draw[->] (-0.5,0)  -- (6,0) node[below]{$\theta_1$};
        \draw (5,0) node[below]{$1$};
        \draw (0,5) node[left]{$1$};
        \draw[->] (0,-0.5) -- (0,6) node[left] {$\theta_2$};
        \draw[very thick] (7/10*5,0) -- (7/10*5,19/30*5) -- (5,5*5/6);
        \draw[very thick] (0,7/10*5) -- (19/30*5,7/10*5) -- (5*5/6,5);
        \draw[very thick] (7/10*5,19/30*5) -- (19/30*5,7/10*5);
        
        \draw  (4.5, 1) node{$\{1\}$};
        \draw  (1, 4.5) node{$\{2\}$};
        \draw  (1.5, 1.5) node{$\emptyset$};
        \draw  (4.5, 4.5) node{$\{ 1,2 \}$};
        
        \draw  (19/30*5, 0) node[below]{x};
        \draw  (7/10*5, 0) node[below]{y};
        \draw  (0,7/10*5) node[left]{y};
        \draw  (5/6*5, 0) node[below]{z};
        \draw (0,5) -- (5,5) -- (5,0);

\draw[dotted]  (0,7/10*5) -- (5,3.5);
        
        \draw[dotted] (19/30*5,0) -- (19/30*5,5);
        \draw[dotted] (7/10*5,0) -- (7/10*5,5);
        \draw[dotted] (5/6*5,0) -- (5/6*5,5);
    \draw[->] (2.5,2.5) node[left]{$\underline c(\theta_i)$}  -- (3.3,3.3);
    \draw[->] (0.5*5,4) node[left]{$\overline e(\theta_i)$}  -- (0.7*5-0.15,4);
    \draw[->] (5+0.2,2.4) node[right]{$\underline e(\theta_i)$}  -- (5/6*5+0.2, 7/10*5+0.2);
 
        \end{tikzpicture}
}}
\subfloat{ 
\label{subfig:M_pos2}
\scalebox{0.75}{
\begin{tikzpicture}[
declare function={
    Mc(\x)= ((900*\x*\x-361)/1920;
   Mc2(\x)= (700*\x*\x+269)/1920;
   Mc3(\x)= (1/480)*(40*(\x*\x)+161);
  }
  ]
\begin{axis}[%
            domain=0:1,
            yscale=1,
            axis lines=middle,
            xlabel={\large $\theta_i$},
            ylabel={\large $M_i$},
                       ymin=0,
            ymax=0.75,
            xmin=-0.1,
            xmax=1.1,
           samples=10,
             xlabel style={below right},
              xtick={0,19/30,7/10,5/6,1},
    xticklabels={0,$x$,$y$,$z$,1},
        ]
     \addplot [thick,domain=19/30:7/10-0.00001] {Mc(x)};
     \addplot [thick,domain=7/10:5/6-0.00001] {Mc2(x)};
     \addplot [thick,domain=5/6:1] {Mc3(x)};
          \draw [dotted] (80,50) -- (80,320);
    \end{axis}
\end{tikzpicture}
}}
 \caption{Optimal allocation (left) and optimal interim expected transfer $M_i(\theta_i)$ (right) for $\pi=\nicefrac{5}{8}$, and $\theta_i \sim U[0,1]$.}\label{fig:neg-again}
\end{center}
\end{figure}
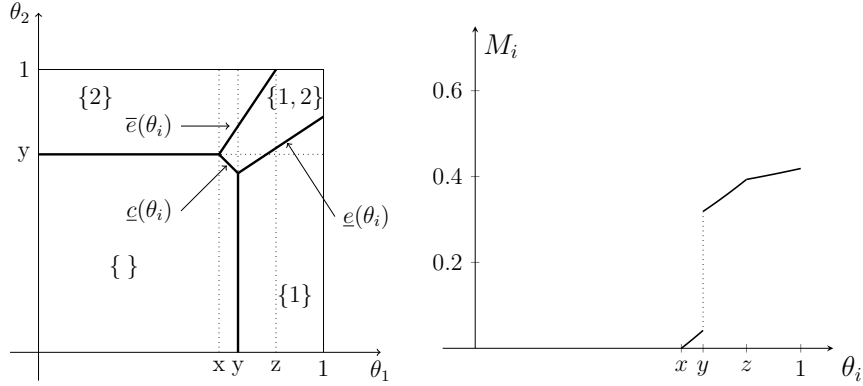

In this example, $x=\nicefrac{19}{30}$, $y=\nicefrac{7}{10}$, and $z=\nicefrac{5}{6},$ but we can find $x<y<z$ in other settings with negative value network effects. The implementation has to be only slightly adjusted when $x<z<y$. We define
\eq{ \label{eq:xyzNeg} 
\begin{aligned}
&x: &\psi(x,2)+\psi(y,2)=c, \\
&y: &\psi(y,1)=c, \\
&z: &\psi(x,2)+\psi(\overline \theta,2)=c. \\
\end{aligned}
}
Again, we first describe the optimal allocation that our constructed game and equilibrium shall replicate.
Types $\theta_i \leq x$ never get the good and never pay anything. 
Types $\theta_i \in (x,y]$ never consume the good alone, 
and they consume the good jointly if type $\theta_j$ is large enough, but not too large,
\eq{ \label{eq:alloc-xy}
\begin{aligned}
\psi(\theta_i,2)+\psi(\theta_j,2) \geq \psi(\theta_j,1)
\iff \theta_j \leq \overline e(\theta_i), \mbox{ and } \\
\psi(\theta_i,2)+\psi(\theta_j,2) \geq c
\iff \theta_j \geq  \underline c(\theta_i).
\end{aligned}
}
\eq{ \label{eq:alloc-yzNO}
\mbox{Types $\theta_i \in (y,z]$ do not consume the good if } \theta_j> \overline e(\theta_i). 
}
Given $\theta_j \leq \overline e(\theta_i)$, $i$ consumes the good jointly with $j$ if 
$\psi(\theta_i,1) \leq \psi(\theta_i,2)+\psi(\theta_j,2)
\iff \theta_j \geq \underline e(\theta_i)$.  
\eq{ \label{eq:alloc-yzALONE}
\mbox{Types $\theta_i \in (y,z]$ consume the good alone if } \theta_j < \underline e(\theta_i),
} 
Finally, types $\theta_i \in [z,1]$ always get the good: alone if $\theta_j < \underline e(\theta_i)$, and jointly otherwise.

Now, we jointly construct the equilibrium and the game described in Figure \ref{subfig:implement_neg}.
 First, let actions be a pair $\alpha_i=(p_i,b_i) \in \mathbb R^2$. Next, define 
 \eq{ \label{eq:alloc-z1NEG}
 \overline p= \overline M_i(y), \quad \mbox{ and } \quad \overline{\overline{p}} = \overline M_i(z).
 }

\textbf{Strategy:}
Consider the following strategy $\sigma^*_i$:
\begin{enumerate}
    \item Types $\theta_i\in [0,x]$ select $(p_i,b_i)=(0,0).$
    \item Types $\theta_i\in (x,y]$ select $p_i=p^{x,y}(\theta_i)$ and $b_i= 0$, where 
    \ea{
    p^{x,y}: \quad p^{x,y}(\theta_i) = \overline M_i(\theta_i).}
\item Types $\theta_i\in (y,z]$ select $p_i=\overline p$ and $b_i= b^{x,y}(\theta_i)$, where 
    \ea{
    b^{y,z}: \quad \overline p + \overline Q_i^1(\theta_i)  b^{y,z}(\theta_i) = \overline M_i(\theta_i).}  
\item Types $\theta_i\in (z,1]$ select $p_i=\overline{\overline{p}}$ and $b_i= b^{z,1}(\theta_i)$, where 
    \ea{
    b^{z,1}: \quad \overline{\overline{p}} +  \overline Q_i^1(\theta_i)  b^{z,1}(\theta_i) = \overline M_i(\theta_i).}
    \end{enumerate}
Again, $p^{x,y}$, $b^{y,z}$ and $b^{z,1}$ are strictly increasing over their respective domains, because $\overline M_i$, $\frac{\overline M_i - \overline p}{\overline Q_i^1}$ and $\frac{\overline M_i -  \overline{\overline{p}}}{\overline Q_i^1}$ are strictly increasing.
    
    

\textbf{Outcome function:} Our goal is to construct an outcome function $g^*$  such that 
(i) it implements the optimal allocation in conjunction with a symmetric strategy profile $\sigma^*=(\sigma_1^*,\sigma_2^*)$, and (ii) $\sigma^*$ is a Bayesian Nash equilibrium. 

%

We first only consider actions consistent with $\sigma^*$, and we ensure that all types $\theta_i$ playing $\sigma_i^*$ get their designated expected outcome and payoff from the optimal mechanism $\langle \overline q,\overline m \rangle$ for all $\theta_j$ playing $\sigma^*_j$. 
First, we consider types $\theta_i\in (x,y]$.
For all $(p_i,b_i)$ with $p_i \in(0,\overline p]$ and $b_i=0$
set
\eq{ \label{eq:equi-xyNEG}
\begin{aligned}
   g^*_i \big( (p_i,b_i),(p_j,b_j) \big) = 
   \begin{cases}
			(1, p_i)  & \text{if }   b_j \leq \beta_1( p_i,p_j), \\
			(0,p_i) & \text{ otherwise},
		 \end{cases}\\
   \mbox{where }  
\beta_1( p_i, p_j) =\begin{cases}
    p_j -p^{x,y} ( \underline c ( (p^{x,y})^{-1} (p_i)))  & \quad \text{ if } \quad  p_j \in (0, \overline p)\\
    b^{y,z} ( \overline e ( (p^{x,y})^{-1} (p_i))) &\quad \text{ if } \quad  p_j=\overline p \\
    b^{z,1} ( \overline e ( (p^{x,y})^{-1} (p_i))) &\quad \text{ if } \quad  p_j= \overline{\overline{p}}
\end{cases}
\end{aligned}
}
such that given $\sigma^*$ and $\theta_i \in (x,y]$, $i$ can only get the good jointly, if and only if  $b_j \leq \beta_1( p_i, p_j)$.
Playing against types $\theta_j \in (x,y]$ who select actions $b_j=0$ and $p_j=p^{x,y}(\theta_j) \in (0,\overline p]$, the condition $b_j \leq \beta_1( p_i, p_j)$ implies
\begin{align*}
0 \leq  p^{x,y}(\theta_j) -p^{x,y} ( \underline c ( (p^{x,y})^{-1} (p_i))) \quad 
\iff \underline c(\theta_i) \leq \theta_j ,
\end{align*}
which is the second part of \eqref{eq:alloc-xy}, while the first part is implied because the relevant $p_j$ are only selected by types
$\theta_j \leq y = \overline e(x) \leq \overline e(\theta_i) \quad \forall \theta_i\geq x.$
For $p_j = \overline p$ (analogously for $p_j = \overline{\overline{p}} $), the condition $b_j \leq \beta_1( p_i, p_j)$ implies
\ea{
 b^{y,z}(\theta_j) \geq \beta_1 ( p^{x,y}(\theta_i), \overline p) \iff
 b^{y,z}(\theta_j) \geq  b^{y,z} ( \overline e ( (p^{x,y})^{-1} (p_i)))
\iff \theta_j \geq \overline e(\theta_i),
}
which is the first part of \eqref{eq:alloc-xy}. The second part also holds because prices $\overline p$ and $\overline{\overline{p}}$ are only selected by types $\theta_j\geq y = \underline c (x) \geq \underline c (\theta_i) \forall \theta_i\in(x, y].$

Next, we  consider types $\theta_i\in (y,z]$. For all $(p_i,b_i)$ with $p_i= \overline p$ and $b_i>0$, set
\eq{ \label{eq:equi-yzNEG}
\begin{aligned}
   g^*_i \big( (p_i,b_i),(p_j,b_j) \big) = 
   \begin{cases}
			(1, \overline p + b_i)  & \text{if }   b_i > \underline \beta_2(b_j,p_j) \\
   			(0, \overline p )  & \text{if }   b_j > \overline \beta_2(b_i,p_j)  \\
			(1,\overline p) & \text{ otherwise}.
		 \end{cases}
\end{aligned}
}
After defining $\overline \beta_2(b_i,\overline p)=b^{y,z}(\overline e((b^{y,z})^{-1}(b_i)))$, we have that $i$ does not consume the good in equilibrium if $\theta_j\in(y,z]$ and
\ea{
b_j=b^{y,z}(\theta_j) > b^{y,z}(\overline e((b^{y,z})^{-1}(b^{y,z}(\theta_i)))
\quad \iff \theta_j > \overline e(\theta_i),
}
which is \eqref{eq:alloc-yzNO}. Similarly, we ensure the same for $\theta_j\in(z,1]$ with 
$\overline \beta_2(b_i,\overline{\overline p})=b^{z,1}(\overline e((b^{y,z})^{-1}(b_i)))$. We can set $\overline \beta_2(b_i,p_j)=\infty$ for other $p_j$ because the associated types $\theta_j\in[x,y)$ shall never kick $i$ out of the audience.

After defining $\underline \beta_2(b_j,\overline p)= b^{y,z}(\underline e^{-1} (b^{y,z})^{-1}(b_j))$, we have that $i$ consumes the good alone and pays the exclusivity bid in equilibrium if $\theta_j \in (y,z]$ and
\ea{
b_i=b^{y,z}(\theta_i) > b^{y,z}(\underline e^{-1} (b^{y,z})^{-1}(b^{y,z}(\theta_j))
\quad \iff \underline e(\theta_i) > \theta_j,}
which is \eqref{eq:alloc-yzALONE}.
Similarly, we ensure the same for $\theta_j\in[z,1)$ with 
$\underline \beta_2(b_j,\overline{\overline p})= b^{y,z}(\underline e^{-1} (b^{z,1})^{-1}(b_j))$, and, for $\theta_j\in (x,y]$, we set $\underline \beta_2(0,p_j)= b^{y,z}(\underline e^{-1} (p^{x,y})^{-1}(p_j))$ for $p_j\leq \overline p$ with $b_j=0$.
In the remaining cases, $i$ and $j$ consume jointly.

Next, we  consider types $\theta_i\in [z,1)$. For all $(p_i,b_i)$ with $p_i= \overline{\overline{p}} $ and $b_i$ set
\eq{ \label{eq:equi-z1NEG}
\begin{aligned}
   g^*_i \big( (p_i,b_i),(p_j,b_j) \big) = 
   \begin{cases}
   			(1, \overline{\overline{p}} + b_i)  & \text{if }   b_i >\beta_3(b_j,p_j), \\
			(1, \overline{\overline{p}})  & \text{otherwise.}\\
   \end{cases}
\end{aligned}
}
 We define $\beta_3(b_j,\overline {\overline p})=\infty$ for all $b_j$ and $\beta_3(0,p)=0$ for all $p\leq \overline p$. In conjunction with the earlier construction steps of $g^*$, this implies that $i$ always consumes jointly with types $\theta_j> z$ without paying the exclusivity bid, and $i$ always pays the bid and consumes alone if $\theta_j\leq y$. To ensure that $i$ pays the exclusivity bid (and consumes alone) in the correct instances when $\theta_j \in (y,z]$, we define 
 $\beta_3(b_j,\overline p)=  b^{z,1}(\underline e^{-1} (b^{y,z})^{-1}(b_j))$
 such that, as desired,
 \ea{
 b_i = b^{z,1}(\theta_i) > b^{z,1}(\underline e^{-1} (b^{y,z})^{-1}(b^{y,z}(\theta_j))
 \quad \iff \underline e (\theta_i) > \theta_j.
 }





By construction, the interim expected payoff and allocation of all types given $\sigma^*$ in this game is the same as in the optimal mechanism. The reason that $\sigma^*$ is a Bayesian Nash equilibrium
is analogous to the last paragraph in the previous example after we make deviations outside of the range of $\sigma^*$ unprofitable in a similar fashion.

\bibliographystyle{elsarticle-harv}
{\small
\bibliography{twitch-bib}

\begin{thebibliography}{37}
\expandafter\ifx\csname natexlab\endcsname\relax\def\natexlab#1{#1}\fi
\providecommand{\url}[1]{\texttt{#1}}
\providecommand{\href}[2]{#2}
\providecommand{\path}[1]{#1}
\providecommand{\DOIprefix}{doi:}
\providecommand{\ArXivprefix}{arXiv:}
\providecommand{\URLprefix}{URL: }
\providecommand{\Pubmedprefix}{pmid:}
\providecommand{\doi}[1]{\href{http://dx.doi.org/#1}{\path{#1}}}
\providecommand{\Pubmed}[1]{\href{pmid:#1}{\path{#1}}}
\providecommand{\bibinfo}[2]{#2}
\ifx\xfnm\relax \def\xfnm[#1]{\unskip,\space#1}\fi
\bibitem[{Akbarpour et~al.(2024)Akbarpour, Budish, Dworczak and
  Kominers}]{akbarpour2024}
\bibinfo{author}{Akbarpour, M.}, \bibinfo{author}{Budish, E.},
  \bibinfo{author}{Dworczak, P.}, \bibinfo{author}{Kominers, S.D.},
  \bibinfo{year}{2024}.
\newblock \bibinfo{title}{An economic framework for vaccine prioritization}.
\newblock \bibinfo{journal}{The Quarterly Journal of Economics}
  \bibinfo{volume}{139}, \bibinfo{pages}{359--417}.
\bibitem[{Aridor et~al.(2024)Aridor, Jim{\'e}nez-Dur{\'a}n, Levy and
  Song}]{aridor2024}
\bibinfo{author}{Aridor, G.}, \bibinfo{author}{Jim{\'e}nez-Dur{\'a}n, R.},
  \bibinfo{author}{Levy, R.}, \bibinfo{author}{Song, L.}, \bibinfo{year}{2024}.
\newblock \bibinfo{title}{The economics of social media}.
\newblock \bibinfo{journal}{Journal of Economic Literature}
  \bibinfo{volume}{forthcoming}.
\bibitem[{Belleflamme et~al.(2015)Belleflamme, Omrani and
  Peitz}]{belleflamme2015}
\bibinfo{author}{Belleflamme, P.}, \bibinfo{author}{Omrani, N.},
  \bibinfo{author}{Peitz, M.}, \bibinfo{year}{2015}.
\newblock \bibinfo{title}{The economics of crowdfunding platforms}.
\newblock \bibinfo{journal}{Information Economics and Policy}
  \bibinfo{volume}{33}, \bibinfo{pages}{11--28}.
\bibitem[{Bierbrauer(2011)}]{bierbrauer2011}
\bibinfo{author}{Bierbrauer, F.J.}, \bibinfo{year}{2011}.
\newblock \bibinfo{title}{Incomplete contracts and excludable public goods}.
\newblock \bibinfo{journal}{Journal of Public Economics} \bibinfo{volume}{95},
  \bibinfo{pages}{553--569}.
\bibitem[{Birulin(2006)}]{birulin2006}
\bibinfo{author}{Birulin, O.}, \bibinfo{year}{2006}.
\newblock \bibinfo{title}{Public goods with congestion}.
\newblock \bibinfo{journal}{Journal of Economic Theory} \bibinfo{volume}{129},
  \bibinfo{pages}{289--299}.
\bibitem[{B{\"o}rgers(2015)}]{borgers2015}
\bibinfo{author}{B{\"o}rgers, T.}, \bibinfo{year}{2015}.
\newblock \bibinfo{title}{An introduction to the theory of mechanism design}.
\newblock \bibinfo{publisher}{Oxford University Press, USA}.
\bibitem[{Buchanan(1965)}]{buchanan1965}
\bibinfo{author}{Buchanan, J.M.}, \bibinfo{year}{1965}.
\newblock \bibinfo{title}{An economic theory of clubs}.
\newblock \bibinfo{journal}{Economica} \bibinfo{volume}{32},
  \bibinfo{pages}{1--14}.
\bibitem[{Cornelli(1996)}]{cornelli1996}
\bibinfo{author}{Cornelli, F.}, \bibinfo{year}{1996}.
\newblock \bibinfo{title}{Optimal selling procedures with fixed costs}.
\newblock \bibinfo{journal}{Journal of Economic Theory} \bibinfo{volume}{71},
  \bibinfo{pages}{1--30}.
\bibitem[{Csorba(2008)}]{csorba2008}
\bibinfo{author}{Csorba, G.}, \bibinfo{year}{2008}.
\newblock \bibinfo{title}{Screening contracts in the presence of positive
  network effects}.
\newblock \bibinfo{journal}{International Journal of Industrial Organization}
  \bibinfo{volume}{26}, \bibinfo{pages}{213--226}.
\bibitem[{Deb et~al.(2019)Deb, Oery and Williams}]{deb2019}
\bibinfo{author}{Deb, J.}, \bibinfo{author}{Oery, A.},
  \bibinfo{author}{Williams, K.R.}, \bibinfo{year}{2019}.
\newblock \bibinfo{title}{Aiming for the goal: Contribution dynamics of
  crowdfunding}.
\newblock \bibinfo{type}{Technical Report}. National Bureau of Economic
  Research.
\bibitem[{Deb and Razzolini(1999)}]{deb1999}
\bibinfo{author}{Deb, R.}, \bibinfo{author}{Razzolini, L.},
  \bibinfo{year}{1999}.
\newblock \bibinfo{title}{Auction-like mechanisms for pricing excludable public
  goods}.
\newblock \bibinfo{journal}{Journal of Economic Theory} \bibinfo{volume}{88},
  \bibinfo{pages}{340--368}.
\bibitem[{Dworczak et~al.(2024)Dworczak, Reuter, Kominers and
  Lee}]{dworczak2024}
\bibinfo{author}{Dworczak, P.}, \bibinfo{author}{Reuter, M.},
  \bibinfo{author}{Kominers, S.D.}, \bibinfo{author}{Lee, C.},
  \bibinfo{year}{2024}.
\newblock \bibinfo{title}{Optimal membership design}.
\newblock \bibinfo{journal}{Available at SSRN} .
\bibitem[{Economist(2021)}]{economist2021}
\bibinfo{author}{Economist, T.}, \bibinfo{year}{2021}.
\newblock \bibinfo{title}{The new rules of the ``creator economy"}.
\newblock \bibinfo{journal}{Briefing} \bibinfo{volume}{May 8th 2021 edition},
  \bibinfo{pages}{https://www.economist.com/briefing/2021/05/08/the--new--rules--of--the--creator--economy}.
\bibitem[{Ellman and Hurkens(2019)}]{ellman2019}
\bibinfo{author}{Ellman, M.}, \bibinfo{author}{Hurkens, S.},
  \bibinfo{year}{2019}.
\newblock \bibinfo{title}{Optimal crowdfunding design}.
\newblock \bibinfo{journal}{Journal of Economic Theory} \bibinfo{volume}{184},
  \bibinfo{pages}{104939}.
\bibitem[{Forbes(2023)}]{forbes2023}
\bibinfo{author}{Forbes}, \bibinfo{year}{2023}.
\newblock \bibinfo{title}{Top creators 2023}.
\newblock \bibinfo{journal}{edited by Steven Bertoni}
  \bibinfo{volume}{September 26th},
  \bibinfo{pages}{https://www.forbes.com/sites/stevenbertoni/2023/09/26/top--creators--2023/}.
\bibitem[{G{\"u}th and Hellwig(1986)}]{guth1986private}
\bibinfo{author}{G{\"u}th, W.}, \bibinfo{author}{Hellwig, M.},
  \bibinfo{year}{1986}.
\newblock \bibinfo{title}{The private supply of a public good}.
\newblock \bibinfo{journal}{Journal of Economics} \bibinfo{volume}{46},
  \bibinfo{pages}{121--159}.
\bibitem[{Halac et~al.(2024)Halac, Lipnowski and Rappoport}]{halac2024}
\bibinfo{author}{Halac, M.}, \bibinfo{author}{Lipnowski, E.},
  \bibinfo{author}{Rappoport, D.}, \bibinfo{year}{2024}.
\newblock \bibinfo{title}{Pricing for coordination} .
\bibitem[{Hellwig(2003)}]{hellwig2003}
\bibinfo{author}{Hellwig, M.F.}, \bibinfo{year}{2003}.
\newblock \bibinfo{title}{Public-good provision with many participants}.
\newblock \bibinfo{journal}{The Review of Economic Studies}
  \bibinfo{volume}{70}, \bibinfo{pages}{589--614}.
\bibitem[{Hellwig(2005)}]{hellwig2005}
\bibinfo{author}{Hellwig, M.F.}, \bibinfo{year}{2005}.
\newblock \bibinfo{title}{A utilitarian approach to the provision and pricing
  of excludable public goods}.
\newblock \bibinfo{journal}{Journal of public Economics} \bibinfo{volume}{89},
  \bibinfo{pages}{1981--2003}.
\bibitem[{Hellwig(2007)}]{hellwig2007}
\bibinfo{author}{Hellwig, M.F.}, \bibinfo{year}{2007}.
\newblock \bibinfo{title}{The provision and pricing of excludable public goods:
  Ramsey--boiteux pricing versus bundling}.
\newblock \bibinfo{journal}{Journal of Public Economics} \bibinfo{volume}{91},
  \bibinfo{pages}{511--540}.
\bibitem[{Imas and Madar{\'a}sz(2024)}]{imas2024}
\bibinfo{author}{Imas, A.}, \bibinfo{author}{Madar{\'a}sz, K.},
  \bibinfo{year}{2024}.
\newblock \bibinfo{title}{Superiority-seeking and the preference for
  exclusion}.
\newblock \bibinfo{journal}{Review of Economic Studies} \bibinfo{volume}{91},
  \bibinfo{pages}{2347--2386}.
\bibitem[{Jarman and Meisner(2017a)}]{jarman2017}
\bibinfo{author}{Jarman, F.}, \bibinfo{author}{Meisner, V.},
  \bibinfo{year}{2017}a.
\newblock \bibinfo{title}{Deterministic mechanisms, the revelation principle,
  and ex-post constraints}.
\newblock \bibinfo{journal}{Economics Letters} \bibinfo{volume}{161},
  \bibinfo{pages}{96--98}.
\bibitem[{Jarman and Meisner(2017b)}]{jarman2017ex}
\bibinfo{author}{Jarman, F.}, \bibinfo{author}{Meisner, V.},
  \bibinfo{year}{2017}b.
\newblock \bibinfo{title}{Ex-post optimal knapsack procurement}.
\newblock \bibinfo{journal}{Journal of Economic Theory} \bibinfo{volume}{171},
  \bibinfo{pages}{35--63}.
\bibitem[{Jehiel et~al.(1996)Jehiel, Moldovanu and Stacchetti}]{jehiel1996}
\bibinfo{author}{Jehiel, P.}, \bibinfo{author}{Moldovanu, B.},
  \bibinfo{author}{Stacchetti, E.}, \bibinfo{year}{1996}.
\newblock \bibinfo{title}{How (not) to sell nuclear weapons}.
\newblock \bibinfo{journal}{The American Economic Review} ,
  \bibinfo{pages}{814--829}.
\bibitem[{Kang(2020)}]{kang2020}
\bibinfo{author}{Kang, Z.Y.}, \bibinfo{year}{2020}.
\newblock \bibinfo{title}{Optimal indirect regulation of externalities}.
\newblock \bibinfo{journal}{Available at SSRN 3586050} .
\bibitem[{Milgrom and Segal(2002)}]{milgrom2002envelope}
\bibinfo{author}{Milgrom, P.}, \bibinfo{author}{Segal, I.},
  \bibinfo{year}{2002}.
\newblock \bibinfo{title}{Envelope theorems for arbitrary choice sets}.
\newblock \bibinfo{journal}{Econometrica} \bibinfo{volume}{70},
  \bibinfo{pages}{583--601}.
\bibitem[{Milgrom and Segal(2020)}]{milgrom2020clock}
\bibinfo{author}{Milgrom, P.}, \bibinfo{author}{Segal, I.},
  \bibinfo{year}{2020}.
\newblock \bibinfo{title}{Clock auctions and radio spectrum reallocation}.
\newblock \bibinfo{journal}{Journal of Political Economy}
  \bibinfo{volume}{128}, \bibinfo{pages}{1--31}.
\bibitem[{Moulin(1994)}]{moulin1994}
\bibinfo{author}{Moulin, H.}, \bibinfo{year}{1994}.
\newblock \bibinfo{title}{Serial cost-sharing of excludable public goods}.
\newblock \bibinfo{journal}{The Review of Economic Studies}
  \bibinfo{volume}{61}, \bibinfo{pages}{305--325}.
\bibitem[{Moulin and Shenker(1992)}]{moulin1992}
\bibinfo{author}{Moulin, H.}, \bibinfo{author}{Shenker, S.},
  \bibinfo{year}{1992}.
\newblock \bibinfo{title}{Serial cost sharing}.
\newblock \bibinfo{journal}{Econometrica: Journal of the Econometric Society} ,
  \bibinfo{pages}{1009--1037}.
\bibitem[{Myerson(1981)}]{myerson1981}
\bibinfo{author}{Myerson, R.B.}, \bibinfo{year}{1981}.
\newblock \bibinfo{title}{Optimal auction design}.
\newblock \bibinfo{journal}{Mathematics of operations research}
  \bibinfo{volume}{6}, \bibinfo{pages}{58--73}.
\bibitem[{Norman(2004)}]{norman2004}
\bibinfo{author}{Norman, P.}, \bibinfo{year}{2004}.
\newblock \bibinfo{title}{Efficient mechanisms for public goods with use
  exclusions}.
\newblock \bibinfo{journal}{The Review of Economic Studies}
  \bibinfo{volume}{71}, \bibinfo{pages}{1163--1188}.
\bibitem[{Ostrizek and Sartori(2023)}]{ostrizek2023}
\bibinfo{author}{Ostrizek, F.}, \bibinfo{author}{Sartori, E.},
  \bibinfo{year}{2023}.
\newblock \bibinfo{title}{Screening while controlling an externality}.
\newblock \bibinfo{journal}{Games and Economic Behavior} \bibinfo{volume}{139},
  \bibinfo{pages}{26--55}.
\bibitem[{Pai and Strack(2022)}]{pai2022}
\bibinfo{author}{Pai, M.}, \bibinfo{author}{Strack, P.}, \bibinfo{year}{2022}.
\newblock \bibinfo{title}{Taxing externalities without hurting the poor}.
\newblock \bibinfo{journal}{Available at SSRN 4180522} .
\bibitem[{Rohlfs(1974)}]{rohlfs1974}
\bibinfo{author}{Rohlfs, J.}, \bibinfo{year}{1974}.
\newblock \bibinfo{title}{A theory of interdependent demand for a
  communications service}.
\newblock \bibinfo{journal}{The Bell Journal of Economics and Management
  Science} , \bibinfo{pages}{16--37}.
\bibitem[{Schmitz(1997)}]{schmitz1997}
\bibinfo{author}{Schmitz, P.W.}, \bibinfo{year}{1997}.
\newblock \bibinfo{title}{Monopolistic provision of excludable public goods
  under private information}.
\newblock \bibinfo{journal}{Public Finance} \bibinfo{volume}{52},
  \bibinfo{pages}{89--101}.
\bibitem[{Segal(1999)}]{segal1999}
\bibinfo{author}{Segal, I.}, \bibinfo{year}{1999}.
\newblock \bibinfo{title}{Contracting with externalities}.
\newblock \bibinfo{journal}{The Quarterly Journal of Economics}
  \bibinfo{volume}{114}, \bibinfo{pages}{337--388}.
\bibitem[{Strausz(2017)}]{strausz2017}
\bibinfo{author}{Strausz, R.}, \bibinfo{year}{2017}.
\newblock \bibinfo{title}{A theory of crowdfunding: A mechanism design approach
  with demand uncertainty and moral hazard}.
\newblock \bibinfo{journal}{American Economic Review} \bibinfo{volume}{107},
  \bibinfo{pages}{1430--76}.

\end{thebibliography}
}


\end{document}